
\documentclass[11pt]{article}

\usepackage[a4paper, total={6in, 8in}]{geometry}


\usepackage{graphicx}
\usepackage{balance}

%
\usepackage[usenames,dvipsnames]{color}
\usepackage{xcolor}
\definecolor{pennblue}{cmyk}{1,.65,0,.3}
\definecolor{pennred}{cmyk}{0,1,.65,.34}


\usepackage{amsmath,amssymb}
\usepackage{multirow}
\usepackage{soul}

\usepackage[colorlinks=true,linkcolor={pennblue},citecolor={pennred},urlcolor={pennblue}]{hyperref}
\usepackage{url}



\usepackage{subcaption}
\usepackage{xspace}
\usepackage{framed}
\usepackage{amsmath}
\usepackage{amsthm}
\usepackage[linesnumbered,ruled,vlined]{algorithm2e}

\newtheorem{theorem}{Theorem}
\newtheorem{lemma}{Lemma}

\newtheorem{corollary}{Corollary}
\newtheorem{definition}{Definition}

\newcommand{\system}{\emph{Renaissance}}





\SetAlCapNameFnt{\small}
\SetAlCapFnt{\small}
\SetAlCapNameSty{textbf}

\newcommand{\bigO}{O}

\newcommand{\Subsection}[1]{\subsection{#1}}
\newcommand{\Subsubsection}[1]{\subsubsection{#1}}

\usepackage{wrapfig}


\begin{document}

\def\NOTES{0}
\newcommand{\smartparagraph}[1]{\noindent{\bf #1}\ }
\newcommand{\eg}{{\it e.g.}}
\newcommand{\ie}{{\it i.e.}}
\newcommand{\etc}{{\it etc.}}
\newcommand{\etal}{{\it et al.}\xspace}

\newcommand{\Mng}{\textsc{Mng}\xspace}
\newcommand{\MngDist}{\textsc{MDist}\xspace}
\newcommand{\MngPath}{\textsc{{MPath}\xspace}}
\newcommand{\timeout}{\tau\xspace}
\newcommand{\Rules}{\text{R}}
\newcommand{\TCP}{\text{TCP}}
\newcommand{\OF}{\text{OF}}
\newcommand{\MFE}{\tau}
\newcommand{\echo}{\text{echo}}

\newcommand{\superscript}[1]{\ensuremath{^{\textrm{#1}}}}
\def\wu{\superscript{*}}
\def\wg{\superscript{\dag}}
\def\wb{\superscript{\ddag}}

\newcommand{\reaches}{\rightarrow_G}
\newcommand{\cO}{\mathcal{O}}
\newcommand{\dis}{fusion}

\definecolor{shadecolor}{rgb}{0.9,0.9,0.9}
\definecolor{heraldBlue}{rgb}{0.0,0.0,0.8}
\definecolor{heraldRed}{rgb}{0.8,0.0,0.0}
\definecolor{heraldGray}{rgb}{0.4,0.4,0.4}
\definecolor{heraldBlack}{rgb}{0.0,0.0,0.0} 
\definecolor{heraldGreen}{rgb}{0.0,0.4,0.0} 
\def\r#1{\textcolor{heraldBlue}{\em #1}}
\def\q#1{\textcolor{heraldRed}{\em #1}}
\def\d#1{\textcolor{heraldBlue}{#1}}
\def\R#1{\textcolor{heraldBlue}{#1}}
\def\D#1{\textcolor{heraldBlue}{#1}}

\newcommand{\figureSizeSwitch}[2]{#1}
 
\newcommand{\removed}[1]{}
\newcommand{\blue}{\textcolor{blue}}

\if \NOTES 1
  \newcommand{\mcnote}[1]{\textcolor{heraldBlue}{\small \textbf{[MC: #1]}}}
  \newcommand{\stefan}[1]{\textcolor{heraldGreen}{\small \textbf{[Stefan: #1]}}}
  \newcommand{\nlnote}[1]{\textcolor{Purple}{\small \textbf{[NL: #1]}}}
  \newcommand{\liron}[1]{\textcolor{heraldRed}{\small \textbf{[LS: #1]}}}
	\newcommand{\IS}[1]{\textcolor{red}{[\textbf{IS: #1}]}}
	\newcommand{\EMS}[1]{#1}
	\newcommand{\ems}[1]{\textcolor{blue}{#1}}
	\newcommand{\modified}[2]{\blue{#2}} 
\else
  \newcommand{\mcnote}[1]{}
  \newcommand{\stefan}[1]{}
  \newcommand{\nlnote}[1]{}
    \newcommand{\liron}[1]{}
	\newcommand{\IS}[1]{}
	\newcommand{\EMS}[1]{}
	\newcommand{\ems}[1]{#1}
	\newcommand{\modified}[2]{#2} 
\fi
\newcommand{\specialcell}[2][c]{%
  \begin{tabular}[#1]{@{}l@{}}#2\end{tabular}}

\newtheorem{scenarios}{Scenarios}[section]
\newtheorem{observation}{Observation}
\newtheorem{takeaway}{Takeaway}[section]
\newtheorem{claim}{Claim}[section]

\newcommand{\defconip}{{CIP}}
\newcommand{\contag}{{CTAG}}

\let\oldnl\nl
\newcommand{\nonl}{\renewcommand{\nl}{\let\nl\oldnl}}

\newcommand{\remove}[1]{}

\newcommand{\new}[1]{\blue{#1}}

\newcommand{\figSizeBig}{1.53}
\newcommand{\figSize}{0.89}
\newcommand{\figSizeSmall}{0.715}
\newcommand{\figSizeJournal}{0.818} 




\newcommand{\VCalgSize}{scriptsize} 



\title{\emph{Renaissance}: A Self-Stabilizing Distributed SDN Control Plane using In-band Communications}


\author{Marco Canini$^1$ 
	Iosif Salem$^2$ 
	Liron Schiff$^3$ 
	Elad M.\ Schiller$^2$ 
	Stefan Schmid$^4$\\
	{\small $^1$ Computer Science, Universit\'{e} catholique de Louvain, Belgium} \\
	{\small  \texttt{marco.canini@acm.org}}\\ 
	{\small $^2$ Computer Science and Engineering, Chalmers University of Technology, Sweden} \\
	{\small \texttt{\{iosif, elad\}@chalmers.se}}\\
	{\small $^3$ Akamai, Israel} \\
	{\small \texttt{liron.schiff@guardicore.com}}\\
	{\small $^4$ Faculty of Computer Science, University of Vienna, Austria \& TU Berlin, Germany}  \\
	{\small \texttt{stefan\_schmid@univie.ac.at}}
	}

%
%
%
%
%




\date{}

\maketitle

\begin{abstract}
\remove{By introducing programmability, automated verification, and innovative debugging tools, Software-Defined Networks (SDNs) are poised to meet the increasingly stringent dependability requirements of today's communication networks. However, the design of fault-tolerant SDNs remains an open challenge.
This paper considers the design of dependable SDNs through the lenses of \emph{self-stabilization}---a very strong notion of fault-tolerance. In particular, we develop algorithms for an in-band and distributed control plane for SDNs, called \emph{Renaissance}, which tolerate a wide range of (concurrent) controller, link, and communication failures. Our self-stabilizing algorithms ensure that after the occurrence of an arbitrary combination of failures, (i) every non-faulty SDN controller can reach any switch (or another controller) in the network within a bounded communication delay (in the presence of a bounded number of concurrent failures) and (ii) every switch is managed by at least one controller (as long as at least one controller is not faulty).
We evaluate $\system$ through a rigorous worst-case analysis as well as a prototype implementation (based on OVS and Floodlight), and we report on our experiments using Mininet. }

By introducing programmability, automated verification, and innovative debugging tools, Software-Defined Networks (SDNs) are poised to meet the increasingly stringent dependability requirements of today's communication networks. However, the design of fault-tolerant SDNs remains an open challenge. This paper considers the design of dependable SDNs through the lenses of self-stabilization---a very strong notion of fault-tolerance. In particular, we develop algorithms for an in-band and distributed control plane for SDNs, called Renaissance, which tolerate a wide range of failures. Our self-stabilizing algorithms ensure that after the occurrence of arbitrary failures, (i) every non-faulty SDN controller can reach any switch (or another controller) within a bounded communication delay (in the presence of a bounded number of failures) and (ii) every switch is managed by a controller. We evaluate Renaissance through a rigorous worst-case analysis as well as a prototype implementation (based on OVS and Floodlight, and Mininet).


\end{abstract}


\section{Introduction}
\label{sec:intro}
\noindent \textbf{Context and Motivation.} 
Software-Defined Network (SDN) technologies have emerged as a promising alternative to the vendor-specific, complex, and hence error-prone, operation of traditional communication networks. In particular, by outsourcing and consolidating the control over the data plane elements to a logically centralized software, SDNs support a programmatic verification and enable new debugging tools. Furthermore, the decoupling of the control plane from the data plane, allows the former to evolve independently of the constraints of the latter, enabling faster innovations. 

However, while the literature articulates well the benefits of the separation between control and data plane and the need for distributing the control plane (e.g., for performance and fault-tolerance), the question of how connectivity between these two planes is maintained (i.e., the communication channels from controllers to switches and between controllers) has not received much attention. Providing such connectivity is critical for ensuring the availability and robustness of SDNs.

Guaranteeing that each switch is managed, at any time, by at least one controller is challenging especially if control is \emph{in-band}, i.e., if control and data traffic is forwarded along the same links and devices and hence arrives at the same ports. In-band control is desirable as it avoids the need to build, operate, and ensure the reliability of a separate out-of-band management network. Moreover, in-band management can in principle improve the resiliency of a network, by leveraging a higher path diversity (beyond connectivity to the management port). 

The goal of this paper is the design of a highly fault-tolerant distributed and in-band control plane for SDNs. In particular, we aim to develop a self-stabilizing software-defined network: An SDN that recovers from controller, switch, and link failures, as well as a wide range of communication failures (such as packet omissions, duplications, or reorderings). As such, our work is inspired by Radia Perlman's pioneering work~\cite{Perlman1999}: Perlman's work envisioned a self-stabilizing Internet and enabled today's link state routing protocols to be robust, scalable, and easy to manage. Perlman also showed how to modify the ARPANET routing broadcast scheme, so that it becomes self-stabilizing~\cite{DBLP:journals/cn/Perlman83}, and provided a self-stabilizing spanning tree algorithm for interconnecting bridges~\cite{DBLP:conf/sigcomm/Perlman85}. Yet, while the Internet core is ``conceptually self-stabilizing'', Perlman's vision remains an open challenge, especially when it comes to recent developments in computer networks, such as SDNs, for which we propose self-stabilizing algorithms.

\noindent \textbf{Fault Model.} We consider (i) fail-stop failures of controllers, which failure detectors can observe, (ii) link failures, and (iii) communication failures, such as packet omission, duplication, and reordering. In particular, our fault model includes up to $\kappa$ link failures, for some parameter~$\kappa \in \mathbb{Z}^+$. In addition, to the failures captured in our model, we also aim to recover from \emph{transient faults}, i.e., any temporary violation of assumptions according to which the system and network were designed to behave, e.g., the corruption of the packet forwarding rules changes to the availability of links, switches, and controllers. We assume that (an arbitrary combination of) these transient faults can corrupt the system state in unpredictable manners. In particular, when modeling the system, we assume that these violations bring the system to an arbitrary state (while keeping the program code intact). Starting from an arbitrary state, the correctness proof of self-stabilizing systems~\cite{D2K,DBLP:journals/cacm/Dijkstra74} has to demonstrate the return to correct behavior within a bounded period, which brings the system to a \textit{legitimate state}. 

\noindent \textbf{The Problem.} This paper answers the following question: How can all non-faulty controllers maintain bounded (in-band) communication delays to any switch as well as to any other controller? We interpret the requirements for provable (in-band) bounded communication delays to imply (i) the absence of out-of-band communications or any kind of external support, and yet (ii) the possibility of fail-stop failures of controllers and link failures, as well as (iii) the need for guaranteed bounded recovery time after the occurrence of arbitrary transient faults. These faults are transient violations of the assumptions according to which the system was designed to behave.

\ems{Current implementations assume that outdated rules can expire via timeouts. Using such timeouts, one must guarantee that the network becomes connected eventually (even when starting from arbitrary timeout values and corrupted packet forwarding rules). This non-trivial challenge motivates our use of the asynchronous model when solving the studied problem via a mechanism for in-band network bootstrapping that connects every controller to every other node in the network.} 

\ems{Since we aim at recovering after the last occurrence of an arbitrary transient fault, the construction of a self-stabilizing bootstrapping mechanism makes the task even more challenging. Our solution combines a novel algorithm for in-band bootstrapping with well-known approaches for rapid recovery from link-failures, such as conditional forwarding rules~\cite{DBLP:conf/sigcomm/BorokhovichSS14}. Our analysis uses new proof techniques for showing that the system as a whole can recover rapidly from link and node failures as well as after the occurrence of the last arbitrary transient fault.} 

\noindent \textbf{Our Contributions.} We present an important module for dependable networked systems: a self-stabilizing software-defined network. In particular, we provide a (distributed) self-stabilizing algorithm for distributed SDN control planes that, relying solely on in-band communications, recover (from a wide spectrum of controller, link, and communication failures as well as transient faults) by re-establishing connectivity in a robust manner. Concretely, we present a system, henceforth called $\system$\footnote{The word renaissance means `rebirth' (French) and it symbolizes the ability of the proposed system to recover after the occurrence of transient faults that corrupt its state.}, which, to the best of our knowledge, is the first to provide:

\begin{enumerate}
	\item \emph{A robust efficient and distributed control plane:}
	We maintain short, $\bigO(D)$-length
	control plane paths in the presence of controller 
	and link (at most 
	$\kappa$ many) failures, as well as, communication failures, 
	where $D\leq N$ is the (largest) network diameter 
	(when considering any possible network topology changes over time) and $N$ is the number of nodes in the network.
	More specifically, suppose that throughout the recovery period the 
	network topology was $(\kappa+1)$-edge-connected and included at least 
	one (non-failed) controller. 
	We prove that starting from a legitimate state, i.e., after recovery, 
	our self-stabilizing solution can:
	
	\begin{itemize}
		\item \textit{Deal with fail-stop failures of controllers:}
		These failures require the removal of stale information (that is related to unreachable 
		controllers) from the switch configurations. 
		Cleaning up stale information avoids inconsistencies 
		and having to store large amounts
		of history data. 
		
		\item \textit{Deal with link failures:}
		Starting from a legitimate system state, the controllers 
		maintain an $\bigO(D)$-length path to all nodes 
		(including switches and other controllers), as long as at 
		most $\kappa$ links fail. That is, after the recovery period the communication delays are bounded.
		
	\end{itemize}
	
	\item \textit{Recovery from transient faults:} 
	We show that
	our control plane can even recover 
	after the occurrence of transient faults. 
	That is, starting from an \emph{arbitrary} state, 
	the system recovers within time $\bigO(D^2 N)$ to
	a legitimate state.
	In a legitimate state, 
	the number of packet forwarding rules per switch is at 
	most $|P_C|$ times the 
	optimal, where $|P_C|$ is the number of controllers. 
	The proposed algorithm is \emph{memory adaptive}~\cite{DBLP:conf/wdag/AnagnostouEH92}, i.e., 
	after the recovery from transient faults, each node's use of local memory 
	depends on the actual number, $n_C$, of controllers in the system, rather than the upper bound, $N_C$, on the number of controllers in the system. 

	\item The proposed algorithm is memory adaptive. That is, after its recovery from transient faults, each node's use of local memory depends on the actual number of controllers in the system, $n_C$, rather than the upper bound on the number of controllers in the system, $N_C$. We present a non-memory adaptive variation on the proposed algorithm that recovers within a period of $\Theta(D)$ after the occurrence of transient faults. This is indeed faster than the $\bigO(D^2 N)$ recovery time of the proposed algorithm. However, the cost of memory use \emph{after stabilization} can be $N_C/n_C$ times higher than the proposed algorithm. Moreover, the fact that the recovery time of the proposed memory adaptive solution is longer is relevant only in the presence of rare faults that can corrupt the system state arbitrarily, because for the case of benign failures, we demonstrate recovery within $\Theta(D)$.
		
\end{enumerate}

While we are not the first to consider the design of self-stabilizing systems which maintain redundant paths also beyond transient faults, the challenge and novelty of our approach comes from the specific restrictions imposed by SDNs (and in particular the switches). In this setting not all nodes can compute and communicate, and in particular, SDN switches can merely forward packets according to the rules that are decided by other nodes, the controllers. This not only changes the model, but also requires different proof techniques, e.g., regarding the number of resets and illegitimate rule deletions.

In order to validate and evaluate our model and algorithms, we implemented a prototype of $\system$ in Floodlight using Open vSwitch (OVS), complementing our worst-case analysis. Our experiments in Mininet demonstrate the feasibility of our approach, indicating that in-band control can be bootstrapped and maintained efficiently and automatically, also in the presence of failures. To ensure reproducibility and to facilitate research on improved and alternative algorithms, we have released the source code and evaluation data to the community at~\cite{RenaissanceWeb}.

We also discuss relevant extensions to the proposed solution (Section~\ref{sec:ext}), such as a  combing both in-band and out-of-band communications, as well as coordinating the actions of the different controllers using a reconfigurable replicated state machine.
 
\noindent \textbf{Organization.} We give an overview of our system and the components it interfaces in Section~\ref{sec:sys} and introduce our formal model in Section~\ref{sec:models}. Our algorithm is presented in Section~\ref{sec:selfStab}, analyzed in Section~\ref{sec:proof}, and validated in Section~\ref{sec:eval}. We then discuss related work (Section~\ref{sec:relwork}) before drawing the conclusions from our study (Section~\ref{sec:conclusion}).

\section{The System in a Nutshell}
\label{sec:sys}
Our self-stabilizing SDN control plane can be seen as one critical piece of a larger architecture for providing fault-tolerant communications. Indeed, a self-stabilizing SDN control plane can be used together with existing self-stabilizing protocols on other layers of the OSI stack, e.g., self-stabilizing link layer and self-stabilizing transmission control protocols~\cite{DBLP:conf/sss/DolevHSS12,DBLP:journals/ipl/DolevDPT11}, which provide logical FIFO communication channels. To put things into perspective, we provide a short overview of the overall network architecture we envision. Our proposal includes new self-stabilizing components that leverage existing self-stabilizing protocols towards an overall network architecture that is more robust than existing SDNs. We consider an architecture (Figure~\ref{fig:layers}) that comprises mechanisms for local topology discovery and a logic for packet forwarding rule generation. We contribute to this architecture a self-stabilizing abstract switch as well as a self-stabilizing SDN control platform.

\begin{figure*}[t!]
	\centering
	\includegraphics[clip=true,scale=\figureSizeSwitch{1.0}{1.4}]{./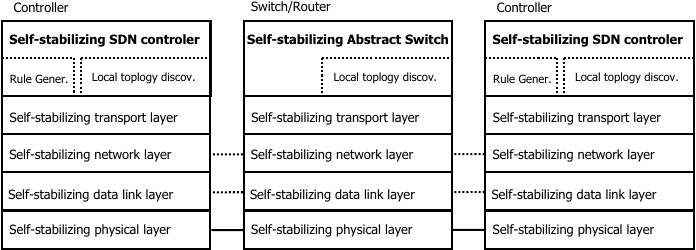}
	\caption{\footnotesize{The system architecture, which is based on self-stabilizing versions of existing network layers. The external building blocks for rule generation and local topology discovery appear in the dotted boxes. The proposed contribution of self-stabilizing SDN controller and self-stabilizing abstract switch appear in bold.}}
	\label{fig:layers}
\end{figure*}

The network includes a set $P_C = \{ p_1, \ldots, p_{n_C} \}$ of $n_C$ {\em (remote) controllers}, and a set $P_S = \{ p_{n_C+1}, \ldots, p_{n_C+n_S} \}$ of the $n_S$ {\em (packet forwarding) switches}, where $i$ is the unique identifier of node $p_i \in P= P_C \cup P_S$. We denote by $N_c(i) \subseteq P$ (communication neighborhood) the set of nodes which are directly connecting node $p_i \in P$ and node $p_j$, i.e., $p_j \in N_c(i)$. At any given time, and for any given node $p_i \in P$, the set $N_o(i)$ (operational neighborhood) refers to $p_i$'s directly connected nodes for which ports are currently available for packet forwarding. The local topology information in $N_o(i)$ is liable to change rapidly and without notice. We denote the operational and connected communication topology as $G_o=(P, E_o)$, and respectively, as $G_c=(P, E_c)$, where $E_{x} = \{(p_i, p_j) \in P \times P : p_j \in N_x(i) \}$ for $x\in \{o,c\}$.

Each switch $p_i \in P_S$ stores a set of rules that the controllers install in order to define which packets have to be forwarded to which ports. In the out-of-band control scenario, a controller communicates the forwarding rules via a dedicated management port to the \emph{control module} of the switch. In contrast, in an in-band setting, the control traffic is interleaved with the data plane traffic, which is the traffic between \emph{hosts} (as opposed to controller-to-controller and controller-to-switch traffic): switches can be connected to hosts through data ports and may have additional rules installed in order to correctly forward their traffic. We do not assume anything about the hosts' network service, except for that their traffic may traverse any network link.

In an in-band setting, control and data plane traffic arrive through the same ports at the switch, which implies a need for being able to \emph{demultiplex} control and data plane traffic: switches need to know whether to forward (data) traffic out of another port or (control) traffic to the control module. In other words, control plane packets need to be logically distinguished from data plane traffic by some tag (or another deterministic discriminator). 

Figure~\ref{fig:self-stab-switch} illustrates the switch model considered in this paper. Our self-stabilizing control plane considers a proposal for \textit{abstract switches} that do not require the extensive functionality that existing SDN switches provide. An abstract switch can be managed either via the management port or in-band. It stores forwarding (match-action) rules. These rules are used to forward data plane packets to ports leading to neighboring switches, or to forward control packets to the local control module (e.g., instructing the control module to change existing rules). Rules can also drop all the matched packets. The match part of a rule can either be an exact match or optionally include wildcards.

Maintaining the forwarding rules with in-band control is the key challenge addressed in this paper: for example, these rules must ensure (in a self-stabilizing manner) that control and data packets are demultiplexed correctly (e.g., using tagging). Moreover, it must be ensured that we do not end up with a set of misconfigured forwarding rules that drop \emph{all} arriving (data plane and control plane) packets: in this case, a controller will never be able to manage the switch anymore in the future. 

In the following, we will assume a local topology discovery mechanism that each node uses to report to the controllers the availability of their direct neighbors. Also, we assume access to self-stabilizing protocols for the link layer (and the transport layer)~\cite{DBLP:conf/sss/DolevHSS12,DBLP:journals/ipl/DolevDPT11} that provide reliable, bidirectional FIFO-communication channels over unreliable media that is prone to packet omission, reordering, and duplication.

\Subsection{Switches and rules}
\label{sec:SwitchesAndRules}
Each switch $p_i \in P_S$ stores a set of forwarding rules which are installed by the controllers (servers) and define which packets have to be forwarded to which ports. In an out-of-band network, a controller communicates the forwarding rules via a dedicated management port to the \emph{control module} of the switch. In contrast, in an in-band setting, the control traffic is interleaved with the dataplane traffic, and is communicated (possibly along multiple hops, in case of a remote controller) to a regular switch port. This implies that in-band control requires the switch to demultiplex control and data plane traffic. In other words, the dataplane of a switch cannot only be used to connect the switch ports internally, but also to connect to the control module.

\begin{figure}
	\centering
	\includegraphics[clip=true,scale=\figureSizeSwitch{0.29}{0.4}]{./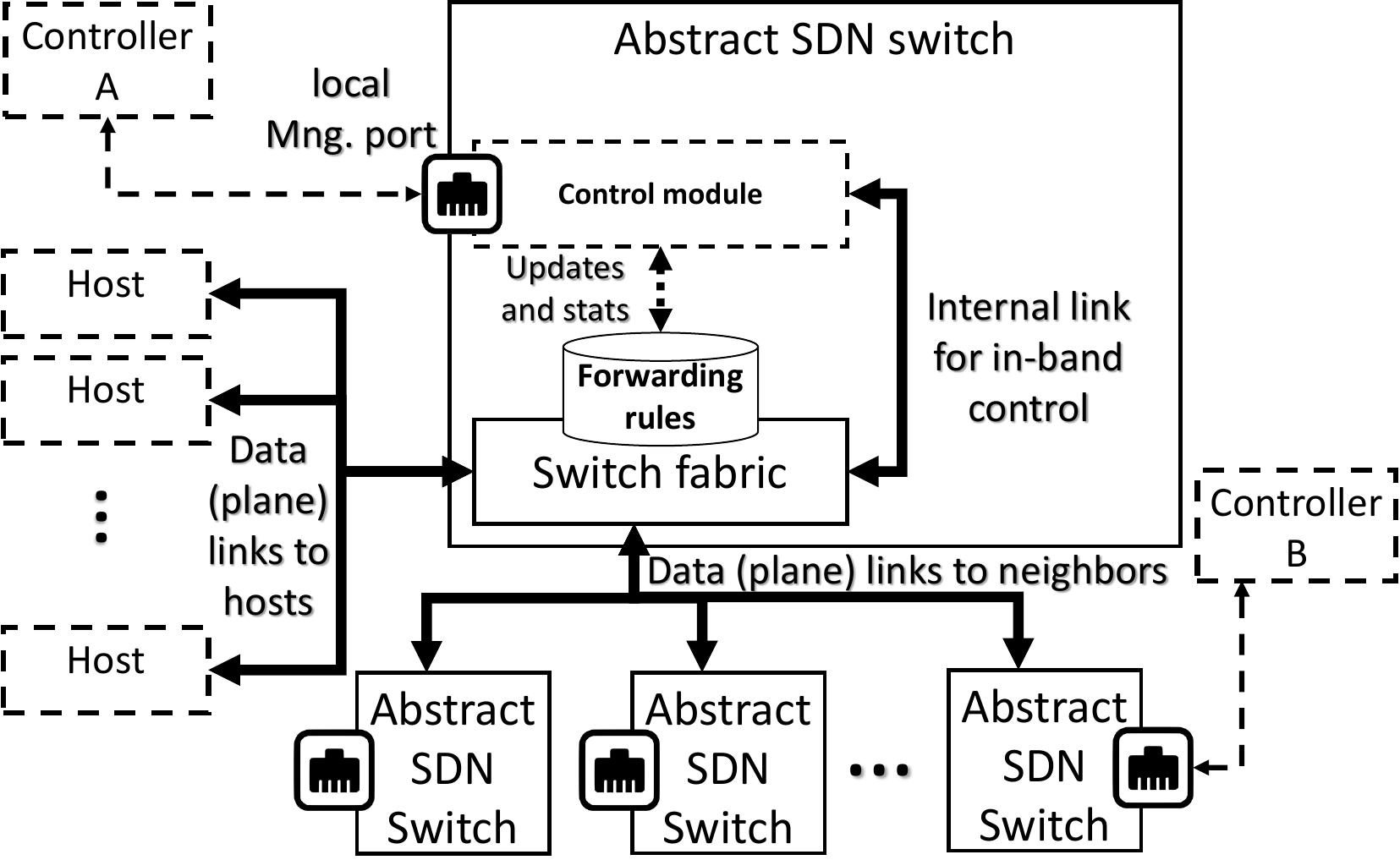}
	\caption{\label{fig:self-stab-switch}\footnotesize{Abstract SDN switch illustration.}}
\end{figure}

In this paper, we make the natural assumption that switches have a bounded amount of memory. Moreover, we assume that rules come in the form of match-action pairs, where the match can optionally include wildcards and the action part mainly defines a forwarding operation (cf. Figure~\ref{fig:self-stab-switch}).

More formally, suppose that $p_i \in P_S$ is a switch that receives a packet with $p_{src} \in P_C$ and $p_{dest} \in P$, as the packet source and destination, respectively. We refer to a \textit{rule} (for packet forwarding at the switch) by a tuple $\langle k, i, {src}, {dest}, prt$, $j, metadata \rangle$. The fields of a rule refer to $p_k$ as the controller that created this rule, $prt \in \{0, \ldots, n_{prt}\}:n_{prt}\geq \kappa+1$ is a priority that $p_k$ assigns to this rule, $p_j \in N_c(i)$ is a port on which the packet can be sent whenever $p_j \in N_o(i)$, and $metadata$ is an (optional) opaque data value. Our self-stabilizing abstract switch considers only rules that are installed on the switches indefinitely, i.e., until a controller \emph{explicitly} requests to delete them, rather than setting up rules with expiration \emph{timeouts}. 

We say that the rule $r$ $=$ $\langle k$, $i$, ${src}$, ${dest}$, $prt$, $j$, $metadata \rangle$ is \textit{applicable} for a packet that reaches switch $p_i$ and has source $p_{src}$ and destination $p_{dest}$, when $r$ is the rule with the highest $prt$ (priority) that matches the packet's source and destination fields, and $p_j \in N_o(i)$, i.e., the link $(p_i, p_j)$ is operational. We say that the set of rules of switch $p_i$, $rules(i)$, is \textit{unambiguous}, if for every received packet there is at most one applicable rule. Thus, a packet can be forwarded if there exists only one applicable rule in the switch's memory. We assume an interface function $myRules()$ which outputs the unambiguous rules that a controller $p_k\in P_C$ needs to install to a switch $p_j\in P_S$, based on $p_k$'s knowledge of the network's topology. We require rules to be unambiguous and offer resilience against at most $\kappa$ link failures (details appear in Section~\ref{sec:kappaFlowsAboveZeroConstructing}).

\subsubsection{The abstract switch}
\label{sec:abstractSwitch}
The main task of switches is to forward traffic according to the rules installed by the controllers. In addition, switches provide basic functionalities for interacting with the controllers.

While OpenFlow, the de facto standard specification for the switch interface, as well as other suggestions (Forwarding Metamorphosis~\cite{DBLP:conf/sigcomm/BosshartGKVMIMH13}, P4~\cite{DBLP:journals/ccr/BosshartDGIMRSTVVW14}, and SNAP~\cite{DBLP:conf/sigcomm/ArashlooK0RW16}) provide innovative abstractions with respect to data plane functionality and means to implement efficient network services, there is less work regarding the control plane abstraction, especially with respect to fault tolerance. 

We consider a slightly simpler switch model that does not include all the functionality one may find in an existing SDN switch. In particular, the proposed abstract SDN switch only supports the \emph{equal roles} approach (where multiple ``equal'' controllers manage the switch); the \emph{master-slave setup} usually used by switches~\cite{openflow-spec} is not relevant toward the design of our self-stabilizing distributed SDN control plane. We elaborate more on the interface in the following.

\subsubsection*{Configuration queries (via a direct neighbor)}
\label{sec:scq}
As long as the system rules and operational links support (bidirectional) packet forwarding between controller $p_i$ and switch $p_j$, the abstract switch allows $p_i$ to access $p_j$'s configuration remotely, i.e., via the interface functions $manager(j)$ (query and update), $rules(j)$ (query and update) as well as $N_c(j)$ (query-only), where $manager(j) \subseteq P_C$ is $p_j$'s set of assigned managers and $rules(j)$ is $p_j$'s rule set. Also, a switch $p_j$, upon arrival of a query of a controller $p_i$, responds to $p_i$ with the tuple $\langle j, N_c(j), manager(j), rules(j) \rangle$.

The abstract switch also allows controller $p_i$ to query node $p_j$ via $p_j$'s direct neighbor, $p_k$ as long as $p_i$ knows $p_k$'s local topology. In case $p_j$ is a switch, $p_i$ can also modify $p_j$'s configuration (via $p_j$'s abstract switch) to include a flow to $p_i$ (via $p_k$) and then to add itself as a manager of $p_j$. (The term \emph{flow} refer here to rules installs on a path in the network in a way that allows packet exchange between the path ends.) We refer to this as the \textit{query (and modify)-by-neighbor} functionality. 

\subsubsection*{The switch memory management}
\label{sec:memory}
\ems{We assume that} the number of rules and controllers (that manage switches) that each switch can store is bounded by $maxRules$ and $maxManagers$, respectively. We require that the abstract switch has a way to deal with clogged memory\ems{, \ie, when the flow table is full, cf.~\cite{openflow-spec}, Section B.17.7. Specifically, the abstract switch needs to implement an eviction policy that gives the lowest priority to rules that were least recently updated.} Similarly, we assume that whenever the number of managers that a switch stores exceeds $maxManagers$, the last to be stored (or \ems{accessed)} manager is removed so that a new manager can be added. \ems{We note that these requirements can be implemented using well-known techniques, for details see~\cite{DBLP:journals/corr/abs-1712-07697}, Section 2.1.1.} 


\Subsection{Building blocks}
\label{sec:buildingBlocks}
Our architecture relies on a fault-tolerant mechanism for topology discovery. We use such a mechanism as an external building block. Moreover, we require a notion of resilient flows. We next discuss both these aspects.

\Subsubsection{Topology discovery}
\label{sec:topologyDiscovery}
We assume a mechanism for local neighborhood discovery. We consider a system that uses an (ever running) failure detection mechanism, such as the self-stabilizing $\Theta$ failure detector~\cite[Section 6]{DBLP:conf/netys/BlanchardDBD14}: it discovers the switch neighborhood by identifying the failed/non-failed status of its attached links and neighbors. We assume that this mechanism reports the set of nodes which are directly connecting node $p_i \in P$ and node $p_j$, i.e., $p_j \in N_c(i)$.

\Subsubsection{Fault-resilient flows}
\label{sec:kappaFlowsAboveZeroConstructing}
We consider fault-resilient flows that are reminiscent of the flows in~\cite{DBLP:conf/hotnets/LiuYSS11}. The definition of $\kappa$-fault-resilient flows considers the network topology $G_c$ and assumes that $G_c$ is not subject to changes. The idea is that the network can forward the data packets along the shortest routes, and use alternative routes in the presence of link failures, based on conditional forwarding rules~\cite{DBLP:conf/sigcomm/BorokhovichSS14}; these failover rules provide a backup for every edge and an enhancement of this redundancy for the case in which at most $\kappa$ links fail, as we describe next. 

Let $(p_{r_1}, \ldots, p_{r_n}) \in P^n$ be a directed path in the communication network $G_c$, where $n \in \{2, \ldots, |P|\}$. Given an operational network $G_o$, we say that $(p_{r_1}, \ldots, p_{r_n})$ is a \textit{flow} (over a simple path) in $G_o$, when the rules stored in $p_{r_1}, \ldots, p_{r_n}$ relay packets from source $p_{r_1}$ to destination $p_{r_n}$ using the switches in the sequence $p_{r_2}, \ldots, p_{r_{n-1}}$ for packet forwarding (relay nodes). Let $G_o(k)$ be an operational network that is obtained from $G_c$ by an arbitrary removal of $k$ links. We say there is a $\kappa$-fault-resilient flow from $p_i$ to $p_j$ in $G_c$ when for any $k \leq \kappa$ there is a flow (over a simple path) from $p_i$ to $p_j$ in any $G_o(k)$. We note that when considering a communication graph, $G_c$, with a general topology, the construction of $\kappa$-fault-resilient flows is possible when $\kappa < \lambda(G_c)$, where $\lambda(G_c)$ is the edge-connectivity of $G_c$ (i.e., the minimum number of edges whose removal can disconnect $G_c$).

\section{Models}
\label{sec:models}
This section presents a formal model of the studied system (Figure~\ref{fig:layers}), which serves as the framework for our correctness analysis of the proposed self-stabilizing algorithms (Section~\ref{sec:proof}).

We model the control plane as a message passing system that has no notion of clocks (nor explicit timeout mechanisms), however, it has access to link failure detectors (in a way that is similar to the Paxos model~\cite{DBLP:conf/netys/BlanchardDBD14,DBLP:journals/tocs/Lamport98}). We borrow from~\cite[Section 6]{DBLP:conf/netys/BlanchardDBD14} a technique for local link monitoring (Section~\ref{sec:topologyDiscovery}), which assumes that every abstract switch can complete \emph{at least} one round-trip communication with any of its direct neighbors while it completes \emph{at most} $\Theta$ round-trips with any other directly connected neighbor. In other words, in our analytical model, but not in our emulation-base evaluation, we assume that nodes have a mechanism to locally detect temporary link failures (e.g., a link may also be unavailable due to congestion); a link which is unavailable for a longer time period will be flagged as permanent failure by a failure detector, which we borrow from~\cite[Section 6]{DBLP:conf/netys/BlanchardDBD14}. Apart from this monitoring of link status, we consider the control plane as an asynchronous system. Note that once the system installs a $\kappa$-fault-resilient flow between controller $p_i \in P_c$ and node $p_j \in P \setminus \{ p_i\}$, the network provides a communication channel between $p_i$ and $p_j$ that has a bounded delay (because we assume that there are never more than $\kappa$ link failures). Moreover, these bounded delays are offered by the data plane while the control plane is still asynchronous as described above (since, for example, we assume no bound on the time it takes a controller to perform a local computation). 

Self-stabilizing algorithms usually consist of a \emph{do forever} loop that contains communication operations and validations that the system is in a consistent state as part of the transition decision. An iteration (of the \emph{do forever} loop) is said to be \textit{complete} if it starts in the loop's first line and ends at the last (regardless of whether it enters branches). As long as every non-failed node eventually completes its do forever loop, the proposed algorithm is oblivious to the rate in which this completion occurs. Moreover, the exact time considerations can be added later for the sake of fine-tuning performances.

\Subsection{The communication channel model}
\label{sec:transportModel}
We are given reliable end-to-end FIFO channels over capacitated links, as implemented, e.g., by~\cite{DBLP:conf/sss/DolevHSS12,DBLP:journals/ipl/DolevDPT11}, which guarantee reliable message transfer regardless of packet omission, duplication, and reordering. After the recovery period of the channel algorithm~\cite{DBLP:conf/sss/DolevHSS12,DBLP:journals/ipl/DolevDPT11}, it holds that, at any time, there is exactly one token $pkt \in \{ act, ack \}$ in the channel that is either in transit from the sender $p_i \in P$ to the receiver $p_j \in P$, i.e., $channel_{i,j}=\{act\} \land channel_{j,i}=\emptyset$, or the token $pkt$ is in transit from $p_j$ to $p_i$, i.e., $channel_{i,j}=\emptyset \land channel_{j,i}=\{ack\}$. During the recovery period (after the last occurrence of a transient fault), it can be the case that the sender sends a message $m_0$ for which it receives a (false) acknowledgment $ack_0$ without having $m_0$ go through a complete round-trip. However, that can occur at most $\Delta_{comm}$ times, where $\Delta_{comm}\leq 3$ for the case of~\cite{DBLP:conf/sss/DolevHSS12,DBLP:journals/ipl/DolevDPT11}. That is, once the sender sends message $m_1$ and receives its acknowledgment $ack_1$, the channel algorithm~\cite{DBLP:conf/sss/DolevHSS12,DBLP:journals/ipl/DolevDPT11} guarantees that $m_1$ has completed a round-trip. 

When node $p_i$ sends a packet, $pkt \in \{ act, ack \}$, to node $p_j$, the operation $send$ inserts a copy of $pkt$ to the FIFO queue that represents the above communication channel from $p_i$ to $p_j$, while respecting the above token circulation constraint. When $p_j$ receives $pkt$ from $p_i$, node $p_j$ delivers $pkt$ from the channel's queue and transfers $pkt$'s acknowledgment to the channel from $p_j$ to $p_i$ immediately after.

\Subsection{The execution model}
\label{sec:interModel}
For our analysis, we consider the standard \emph{interleaving model}~\cite{D2K}, in which there is a single (atomic) step at any given time. An input event can be either a packet reception or a periodic timer triggering $p_i$ to resend while executing the do forever loop. In our settings, the timer rate is completely unknown and the only assumption that we make is that every non-failing node executes its do forever loop infinitely often. 

We model a node (switch or controller) using a state machine that executes its program by taking a sequence of {\em (atomic) steps}, where a step of a controller starts with local computations and ends with a single communication operation: either $send$ or $receive$ of a packet. A step of the (control module of an) abstract switch starts with a single message reception, continues with internal processing and ends with a single message send.

The \emph{state} of node $p_i$, denoted by $s_i$, consists of the values of all the variables of the node including its communication channels. The term {\em (system) state} is used for a tuple of the form $(s_1, s_2, \cdots, s_n, G_o)$, where each $s_i$ is the state of node $p_i$ (including messages in transit to $p_i$) and $G_o$ is the operational network that is determined by the environment. We define an {\em execution (or run)} $R={c_0,a_0,c_1,a_1,\ldots}$ as an alternating sequence of system states $c_x$ and steps $a_x$, such that each state $c_{x+1}$, except the initial system state $c_0$, is obtained from the preceding state $c_x$ by applying step $a_x$.

For the sake of simple presentation of the correctness proof, we assume that the abstract switch deals with one controller at a time, e.g., when requesting a configuration update or a query. Moreover, we assume that within a single atomic step, the abstract switch can receive the controller request, perform the update, and send a reply to the controller.

\Subsection{The network model}
We consider a system in which $maxRules$ is large enough to store all the rules that all controllers need to install to any given switch, and that $maxManagers \geq N_C$. We assume that $|P_C| = n_C$ and $|P_S| = n_S$ are known only by their upper bounds, i.e., $N_C \geq |P_C|$, and respectively, $N_S \geq |P_S|$. We use these bounds only for estimating the memory requirements per node, in terms of $maxRules$ and $maxManagers$, i.e., the maximum number of rules, and respectively, managers at any switch. 

Suppose that a $\kappa$-fault-resilient flow from $p_i$ to $p_j$ is installed in the network. The term \textit{primary path} refers to the path along which the network forwards packets from $p_i$ to $p_j$ \textit{in the absence of failures}. We assume that $myRules()$ returns rules that encode $\kappa$-fault-resilient flows for a given network topology. The primary paths encoded by $myRules()$ are also the shortest paths in $G_c$ (with the highest rule priority). A rule in $myRules()$ corresponding to $k$ link failures ($k$-fault-resilient flow) has the $(k+1)$-highest rule priority.

\Subsubsection{Communication fairness}
\label{sec:communicationFairness}
Due to the presence of faults in the system, we do not consider any bound on the communication delay, which could be, for example, the result of the absence of properly installed flows between the sender and the receiver. Nevertheless, when a flow is properly installed, the channel is not disconnected and thus we assume that sending a packet infinitely often implies its reception infinitely often. We refer to the latter assumption as the \textit{communication fairness} property. We make the same assumptions both for the link and transport layers.

\Subsubsection{Message round-trips and iterations of self-stabilizing algorithms}
\label{sec:messageRoundtrips}
This work proposes a solution for bootstrapping in-band communication in SDNs. The correctness proof depends on the nodes' ability to exchange messages during this bootstrapping. The proof uses the notion of a message round-trip, which includes sending a message to a node and receiving a reply from that node. Note that this process spans over many system states.

We give a detailed definition of round-trips as follows. Let $p_i \in P_C$ be a controller and $p_j \in P \setminus \{p_i\}$ be a network node. Suppose that immediately after state $c$ node $p_i$ sends a message $m$ to $p_j$, for which $p_i$ awaits a response. At state $c'$, that follows state $c$, node $p_j$ receives message $m$ and sends a response message $r_m$ to $p_i$. Then, at state $c''$, that follows state $c'$, node $p_i$ receives $p_j$'s response, $r_m$. In this case, we say that $p_i$ has completed with $p_j$ a round-trip of message $m$. 

We define an iteration of a self-stabilizing algorithm in our model. Let $P_i$ be the set of nodes with whom $p_i$ completes a message round trip infinitely often in execution $R$. Suppose that immediately after the state $c_{begin}$, controller $p_i$ takes a step that includes the execution of the first line of the do forever loop, and immediately after system state $c_{end}$, it holds that: (i) $p_i$ has completed the iteration it has started immediately after $c_{begin}$ (regardless of whether it enters branches) and (ii) every message $m$ that $p_i$ has sent to any node $p_j \in P_i$ during the iteration (that has started immediately after $c_{begin}$) has completed its round trip. In this case, we say that $p_i$'s iteration (with round-trips) starts at $c_{begin}$ and ends at $c_{end}$. 

\Subsection{The fault model}
\label{sec:faultModel}
We characterize faults by their duration, that is, they are either transient or permanent. We consider the occurrence frequency of transient faults to be either rare or not rare. We illustrate our fault model in Figure~\ref{fig:self-stab-SDN}. 

\Subsubsection{Failures that are not rare}
Transient packet failures, such as omissions, duplications, and reordering, may occur often. Recall that we assume communication fairness and the use of a self-stabilizing link layer (and transport layer)~\cite{DBLP:conf/sss/DolevHSS12,DBLP:journals/ipl/DolevDPT11}. This protocol assures that the system's unreliable media, which are prone to packet omission, reordering, and duplication, can be used for providing reliable, bidirectional FIFO-communication channels without omissions, duplications or reordering. Note that the assumption that the communication is fair may still imply that there are periods in which a link is temporarily unavailable. We assume that at any time there are no more than such $\kappa$ link failures.

\Subsubsection{Failures that may occur rarely}
We model rare faults to occur only before the system starts running. That is, during the system run, $G_c$ does not change and it is $(\kappa+1)$-edge connected.

A permanent link failure or addition results in the removal, and respectively, the inclusion of that link from the network. The fail-stop failure of node $p_j$ is a transient fault that results in the removal of $(p_i, p_j)$ from the network and $p_j$ from $N_c(i)$, for every $p_i\in N_c(j)$. Naturally, node addition is combined with a number of new link additions that include the new node.

Other than the above faults, we also consider any violation of the assumptions according to which the system is assumed to operate (as long as the code stays intact). We refer to them as \emph{(rare) transient faults}. They can model, for example, the event in which more than $\kappa$ links fail concurrently. A transient fault can also corrupt the state of the nodes or the messages in the communication channels.

\Subsubsection{Benign vs. transient faults}
We define the set of \emph{benign faults} to include any fault that is not both rare and transient. The correctness proof of the proposed algorithm demonstrates the system's ability to recover after the occurrence of either benign or transient faults, which are not necessarily rare.  Our experiments, however, consider all benign faults and no rare transient faults due to the computation limitations that exist when considering all possible ways to corrupt the system state (Section~\ref{sec:limitations}).

\begin{figure*}[t!]
	\centering
	\begin{smaller}
	\begin{tabular}{lll}
		\cline{2-3}
		\multicolumn{1}{l|}{}                    & \multicolumn{2}{l|}{~~~~~~~~~~~~~~~~~~~~~~~~~~~~~~~~~~~~~~~~~~~~\textbf{Frequency}}                                                                               \\ \hline
		\multicolumn{1}{|l|}{\textbf{Duration}}  & \multicolumn{1}{l|}{\textit{Rare}}                          & \multicolumn{1}{l|}{\textit{Not rare}}                     \\ \hline
		\multicolumn{1}{|l|}{}                   & \multicolumn{1}{l|}{Any violation of the assumptions according}         & \multicolumn{1}{l|}{Packet failures: omissions,}        \\
		\multicolumn{1}{|l|}{\textit{}}          & \multicolumn{1}{l|}{to which the system is assumed to}           & \multicolumn{1}{l|}{duplications, reordering} \\
		\multicolumn{1}{|l|}{\textit{}}          & \multicolumn{1}{l|}{operate (as long as the code stays intact).}    & \multicolumn{1}{l|}{(assuming communication}      \\
		\multicolumn{1}{|l|}{\textit{Transient}} & \multicolumn{1}{l|}{This can result in any state corruption.} & \multicolumn{1}{l|}{fairness holds).}                                   \\ \cline{2-3} 
		\multicolumn{1}{|l|}{}                   & \multicolumn{1}{l|}{}                                       & \multicolumn{1}{l|}{Link failures (assuming} \\
		\multicolumn{1}{|l|}{}                   & \multicolumn{1}{l|}{}                                       & \multicolumn{1}{l|}{at most $\kappa$ links failures).}   \\ \hline
		\multicolumn{1}{|l|}{\textit{Permanent}} & \multicolumn{1}{l|}{Node and link failures.}                 & \multicolumn{1}{l|}{}                                   \\ \hline
		\vspace*{0.25em}
	\end{tabular}
\end{smaller}
	\includegraphics[clip=true,scale=\figureSizeSwitch{0.75}{0.73}]{./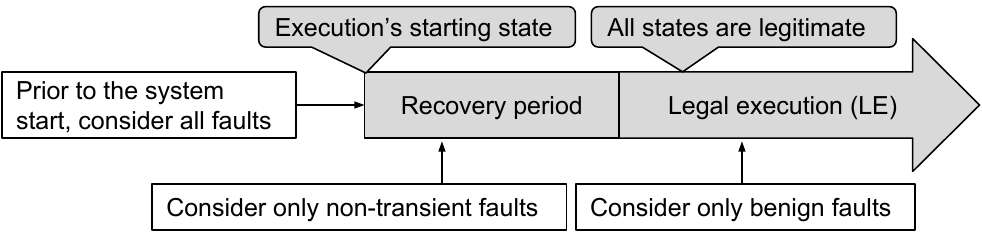}\\
	\caption{\label{fig:self-stab-SDN}\footnotesize{The table above details our fault model and the chart illustrates when each fault set is relevant. The chart's gray boxes represent the system execution, and the white boxes specify the failures considered to be possible at different execution parts and recovery guarantees of the proposed self-stabilizing algorithm. The set of benign faults includes both transient link failures as well as permanent link and node failures.}}
\end{figure*}

\Subsection{Self-Stabilization}
We define the system's task by a set of executions called {\em legal executions} ($LE$) in which the task's requirements hold. That is, each controller $p_i$ constructs a $\kappa$-fault-resilient flow to every node $p_j \in P$ (either a switch or a controller). We say that a system state $c$ is {\em legitimate}, when every execution $R$ that starts from $c$ is in $LE$. A system is {\em self-stabilizing}~\cite{D2K} with relation to task $LE$, when every (unbounded) system execution reaches a legitimate state with relation to $LE$ (cf. Figure~\ref{fig:self-stab-SDN}). The criteria of self-stabilization in the presence of faults~\cite[Section 6.4]{D2K} requires the system to recover within a bounded period after the occurrence of a single benign failure during legal executions (in addition to the design criteria of self-stabilization that requires recovery within a bounded time after the occurrence of the last transient fault). We demonstrate self-stabilization in Section~\ref{sec:recover} and self-stabilization in the presence of faults in Section~\ref{sec:post}.

Self-stabilizing systems require the use of bounded memory, because real-world systems have only access to bounded memory. Moreover, the number of messages sent during an execution does not have an immediate relevance in the context of self-stabilization. The reason is that self-stabilizing algorithms can never terminate and stop sending messages, because if they did it would not be possible for the system to recover from transient faults (cf. \cite[Chapter 2.3]{D2K}). That is, suppose that the algorithm includes a predicate, such that when the predicate is true the algorithm forever stops sending messages. Then, a single transient fault can cause this predicate to be true in the starting state of an execution, from which the system can never recover. The latter holds, because the algorithm will never send any message and yet in the starting system state any variable that is not considered by the predicate can be corrupted.

\Subsubsection{Execution fairness}
\label{sec:executionFairness}
We say that a system execution is \textit{fair} when every step that is applicable infinitely often is executed infinitely often and fair communication is kept (both at the link and the transport layer). Note that only failing nodes ever stop taking steps and thus a violation of the fairness (communication or execution) assumptions implies the presence of transient faults, which we assume to happen only before the starting system state of any execution.   
%
We clarify that fair execution and communication are weaker assumptions than partial synchrony~\cite{dwork1988consensus} because they imply unknown upper bounds on relative processor speeds and message delay. 

\Subsubsection{Asynchronous frames}
\label{sec:cycleFrame}
The first \textit{(asynchronous) frame} in a fair execution $R$ is the shortest prefix $R'$ of $R=R' \circ R''$, such that each controller starts and ends at least one complete iteration (with round-trips) during $R'$ (see Section~\ref{sec:messageRoundtrips}), where $\circ$ denotes an operation that concatenates two executions. The second frame in execution $R$ is the first frame in execution $R''$, and so on.

\Subsubsection{Complexity measures}
\label{sec:timeComplexity}
The \textit{stabilization time} (or recovery period from transient faults) of a self-stabilizing system is the number of asynchronous frames it takes a fair execution to reach a legitimate system state when starting from an arbitrary one. The recovery period from benign faults is also measured by the number of asynchronous frames it takes the system return to a legal execution after the occurrence of a single benign failure.

We also consider the design criterion of \emph{memory adaptiveness} by Anagnostou et al.~\cite{DBLP:conf/wdag/AnagnostouEH92}. This criterion requires that, after the recovery period, the use of memory by each node is a function of the actual network dimensions. In our system, a memory adaptive algorithm has space requirements that depend on $n_C$, which is the actual number controllers rather than their upper bound, $N_C$. Moreover, when considering \ems{non-adaptive solutions,} one can achieve a shorter recovery period from transient faults (Section~\ref{sec:conclusion}).

For the sake a simple presentation, our theoretical analysis assumes that all local computations are done within a negligible time that is independent of, for example, the number of messages sent and received during each frame. We do however consider all network dimensions that are related to the recovery costs (including the number of messages sent and received during each frame) during the evaluation of the proposed prototype (Section~\ref{sec:eval}). 

\section{Renaissance: A Self-Stabilizing SDN Control Plane}
\label{sec:selfStab} 
We present a self-stabilizing SDN control plane, called $\system$, that enables each controller to discover the network, remove any stale information in the configuration of the discovered unmanaged switches (e.g., rules of failed controllers), and construct a $\kappa$-fault-resilient flow to any other node (switch or controller) that it discovers in the network. For the sake of presentation clarity, we start with a high-level description of the proposed solution in Algorithm~\ref{alg:SHORTselfStabCode} before we present the solution details in Algorithm~\ref{alg:selfStabCode}.

\LinesNumbered

\begin{algorithm*}[t!]
	\begin{\VCalgSize}
	\BlankLine
	\noindent \textbf{Local state:} 
	$replyDB \subseteq \{ m(j) : p_j \in P \}$ has the most recently received query replies\label{ln:SHORTlocal}\; 
	$currTag$ and $prevTag$ are $p_i$'s current and previous synchronization round, respectively\label{ln:SHORTtags}\;
	
	\noindent \textbf{Interface:} $myRules(G,j,tag)$: returns the rules of $p_i$ on switch $p_j$ given a topology $G$ on round $tag$\label{ln:SHORTmyRules}\;
	
	\BlankLine
	
	\noindent \textbf{do forever} \Begin{\label{ln:SHORTdoForeverStart}
		
		Remove from $replyDB$ any reply from unreachable (in terms of graph connectivity) senders or not from round $prevTag$ or $currTag$. Also, remove from $replyDB$ any response from $p_i$ and then add a record that includes the directly connected neighbors, $N_c(i)$\label{ln:SHORTstale}\;
		
		\If{$replyDB$ includes a reply (with tag $currTag$) from every node that is reachable (in terms of graph connectivity) according to the accumulated local topology, $G$, in $replyDB$\label{ln:SHORTifAllNetDiscovered}}{Store $currTag$'s value in $prevTag$ and get a new and unique tag for $currTag$. By that, $p_i$ starts a new synchronization round\label{ln:SHORTifAllNetDiscoveredStart}\label{ln:SHORTifAllNetDiscoveredEnd};}
		
		\ForEach{switch $p_j \in P_S$ and $p_j$'s most recently received reply}{\label{ln:SHORTforEachSwitch}
			\If{this is the start of a new synchronization round\label{ln:SHORTifSynch}}{Remove from $p_j$'s configuration any manager $p_k$ or rule of $p_k$ that was not discovered to be reachable during round $prevTag$\label{ln:SHORTrm};}
			Add $p_i$ in $p_j$'s managers (if it is not already included) and replace $p_i$'s rules in $p_j$ with $myRules(G, j, currTag)$\;
		}
		
		\lForEach{$p_j \in P$ that is reachable from $p_i$ according to the most recently received replies in $replyDB$\label{ln:SHORTsendStart}}{\textbf{send} \textbf{to} $p_j$ (with tag $currTag$) an update message (if $p_j \in P_S$ is a switch) and query $p_j$'s configuration\label{ln:SHORTsendUpdToSwitch}}
	}
	
	\BlankLine
	
	\noindent \textbf{upon query reply $m$ from $p_j$\label{ln:SHORTqueryReturn}} \Begin{
		\lIf{there is no space in $replyDB$ for storing $m$}{perform a C-reset by including in $replyDB$ only the direct neighborhood, $N_c(i)$\label{ln:SHORTcheckResponsesSize}} 
		\textbf{if} $m$'s tag equals to $currTag$ \textbf{then} include $m$ in $replyDB$ after removing the previous response from $p_j$\label{ln:SHORTrcv}\;
	}
	
	\BlankLine
	
	\noindent \textbf{upon arrival of a query (with a $syncTag$) from $p_j$}\label{ln:SHORTarrivalCtrl} \Begin{\textbf{send  to} $p_j$ a response that includes the local topology, $N_c(i)$, and $syncTag$~\label{ln:SHORTqryCon}}
	
	\caption{\label{alg:SHORTselfStabCode}\footnotesize{Self-stabilizing SDN, high-level code description for controller $p_i$. Algorithm~\ref{alg:selfStabCode} is a detailed version of this algorithm.}}
	\end{\VCalgSize}
\end{algorithm*}

\Subsection{High-level description of the proposed algorithm}
Algorithm~\ref{alg:SHORTselfStabCode} creates 
an iterative process of topology discovery that, first, lets each controller identify 
the set of nodes that it is directly connected to; from there, it finds the 
nodes that are directly connected to them; and so on. 
This network discovery process is combined with another process 
for bootstrapping communication between any controller 
and any node in the network, i.e., connecting 
each controller to its direct neighbors, and then 
to their direct neighbors, and so on, until it is connected to the entire 
reachable network. 

Each controller associates independently each iteration with a unique tag~\cite{DBLP:journals/jcss/AlonADDPT15} that synchronizes a round in which the controller performs configuration updates and queries. 
Controller $p_i$ also maintains the variables $currTag$ and $prevTag$ (line~\ref{ln:SHORTtags}) of the round synchronization procedure, which starts when $p_i$ queries all reachable nodes and ends when it receives replies from all of these nodes (cf. lines~\ref{ln:SHORTifAllNetDiscovered}--\ref{ln:SHORTifAllNetDiscoveredEnd}, as well as, Section~\ref{sec:models}).
Upon receiving a query response, $p_i$ runs lines~\ref{ln:SHORTqueryReturn}--\ref{ln:SHORTrcv} and replies to other controllers' queries in lines~\ref{ln:SHORTarrivalCtrl}--\ref{ln:SHORTqryCon}.

A controller $p_i\in P_C$ keeps a local state of query replies (cf. Section~\ref{sec:SwitchesAndRules}) from other nodes (line~\ref{ln:SHORTlocal}). 
These replies allow $p_i$ to accumulate information about the network topology according to which the switch configurations are updated in each round.
The following three basic functionalities of Algorithm~\ref{alg:SHORTselfStabCode} are provided by the do-forever loop in lines~\ref{ln:SHORTdoForeverStart}--\ref{ln:SHORTsendUpdToSwitch}, which we detail below.

\Subsubsection{Establishing communication between any controller and any other node}
A controller $p_i \in P_C$ can communicate and manage a switch $p_j \in P_S$ only after $p_i$ has installed rules at all the switches on a path between $p_i$ and $p_j$. 
This, of course, depends on whether there are no permanent link failures on the path. 
In order to discover these link failures, we use local mechanisms for failure detection at each node for querying about the status of every link (cf. Section~\ref{sec:topologyDiscovery}). 
These mechanisms consider any permanent link failure as a transient fault and we assume that Algorithm~\ref{alg:SHORTselfStabCode} starts running only after the last occurrence of any transient fault (cf. Figure~\ref{fig:self-stab-SDN}). 
Thus, as soon as there is a flow installed between $p_i$ and $p_j$ and there are no permanent failures on the primary path (Section~\ref{sec:models}), $p_i$ and $p_j$ can exchange messages that arrive \textit{eventually} since it only depends on the temporary availability of the link which supports the communication fairness assumption (Section~\ref{sec:communicationFairness}).

The above iterative process of network topology discovery and the process of rule installation consider $\kappa$-fault-resilient flows (cf. Section~\ref{sec:kappaFlowsAboveZeroConstructing} and $myRules()$ function in Section~\ref{sec:models}).
These flows are computed through the interface $myRules(G, j, tag)$ (line~\ref{ln:SHORTmyRules}), where $G$ is the input topology, $p_j$ is the switch to store these rules, and $tag$ is the tag of the synchronization round.
Once the entire network topology is discovered, Algorithm~\ref{alg:SHORTselfStabCode} guarantees the installation of a $\kappa$-fault-resilient flow between $p_i$ and $p_j$. Thus, once the system is in a legitimate state, the availability of $\kappa$-fault-resilient flows implies that the system is resilient to the occurrence of at most $\kappa$ temporary link failures (and recoveries)
and $p_i$ can communicate with any node in the network within a bounded time.

\Subsubsection{Discovering the network topology and dealing with unreachable nodes}
Algorithm~\ref{alg:SHORTselfStabCode} lets the controllers connect to each other via $\kappa$-fault-resilient flows. 
Moreover, Algorithm~\ref{alg:SHORTselfStabCode} can detect situations in which controller $p_k \notin P_C$ is not reachable from controller $p_i$ (line~\ref{ln:SHORTstale}). 
The reason is that $p_i$ is guaranteed to (i) discover the entire network eventually, and (ii) communicate  with any node in the network. 
This means that $p_i$ eventually gets a response from every node in the network. 
Once that happens, the set of nodes that respond to $p_i$ equals to the set of nodes that were discovered by $p_i$ (line~\ref{ln:SHORTifAllNetDiscovered}) and thus $p_i$ can restart the process of discovering the network (line~\ref{ln:SHORTifAllNetDiscoveredEnd}). 

The start of a new round (in which $p_i$ rediscovers the network) allows $p_i$ to also remove information at the switches that is related to any unreachable controller $p_k \in P_C$, only when it has succeeded in discovering the network and bootstrapped communication.
We note that, during new rounds (line~\ref{ln:SHORTifSynch}), $p_i$ removes information related to $p_k$ from any switch $p_j$ (line~\ref{ln:SHORTrm}); whether this information is a rule or $p_k$'s membership in $p_j$'s management set. 
This stale information clean-up eventually brings the system to a legitimate state, 
as we will prove in Section~\ref{sec:proof}. 

Recall that we regard the long-term failure 
of links (or of more than $\kappa$ links) as transient faults. 
After the occurrence of the last transient fault, the network returns to 
fulfill our assumptions about the topology $G_c$, i.e., $G_c$ 
is $(\kappa+1)$-edge connected. 
Then, Algorithm~\ref{alg:SHORTselfStabCode} brings the system 
back to a legitimate state (Section~\ref{sec:proof}). 
The do-forever loop of Algorithm~\ref{alg:SHORTselfStabCode} completes by sending rule and manager updates to every switch that has a reply in $replyDB$, as well as querying every reachable node, with the current synchronization round's tag (lines~\ref{ln:SHORTsendStart}--\ref{ln:SHORTsendUpdToSwitch}).

\LinesNumbered

\begin{figure*}[t]
\begin{\VCalgSize}
	\begin{framed}
	
	\noindent \textbf{Symbols and operators:} `$\bullet$' stands for `any sequence of values', $()$ is the empty sequence, $\circ$  (binary) is the sequence concatenation operator and $\bigcirc$ (unary) concatenates a set's items in an arbitrary order.
	
	\noindent \textbf{Constants:} $N_c(i) \subseteq P$, $p_i$'s directly connected nodes. 
	$maxRules$ and $maxManagers$, maximum number of rules and managers, respectively.
	$maxReplies$: maximum size of the set $replyDB$.
	
	\noindent \textbf{Interfaces:} 
	Recall the interface function $myRules(G, j, tag)$, which creates $p_i$'s rules at switch $p_j$ according to $G$ with tag $tag$ (Section~\ref{sec:kappaFlowsAboveZeroConstructing}).  The interface between controller $p_i\in P_C$ and the abstract switch $p_j$ appears in the table below.
\begin{flushleft}
	\begin{scriptsize}
		\hspace*{-0.2cm}\begin{tabular}{|c|c|c|}
			\hline
			\textbf{Command type} & \textbf{Command} & \textbf{Switch $p_j$'s control module action}\\ \hline
			new round &$\langle \text{`$newRound$'}, t_{metaRule} \rangle$& updates current synchronization tag of the switch\\ \hline
			\multirow{4}{*}{update command}& $\langle `\text{$delMngr$'}, k \rangle$ & deletes $p_k$ from $manager(j)$\\ \cline{2-3} 
			& $\langle \text{`$addMngr$'},k \rangle$ & adds $p_k$ in $manager(j)$\\ \cline{2-3}
			& $\langle \text{`$delAllRules$'}, k \rangle$ & deletes all rules of $p_k$\\ \cline{2-3} 
			& $\langle \text{`$updateRule$'}, newRules \rangle$ & replaces all rules of $p_i$ with $newRules$\\ \hline
			query command & $\langle \text{`$query$'}, t_{query} \rangle$ & sends query response $m(j)$ to $p_i$\\ \hline
		\end{tabular}
	\end{scriptsize}
\end{flushleft}
		\noindent \textbf{Local state:} A controller's local state is the set $replyDB$ which stores the most recently received query replies. 
	A query reply $m = \langle ID, N_c, Mng, rules \rangle$ includes the respondent's ID, $m.ID \in P$, its communication neighborhood, $m.N_c \subseteq P$, its set of managers, $m.Mng \subseteq P_C$, and its set of installed rules, $m.rules$.
	A rule $r = \langle cID, sID, src, dest, prt, fwd, tag \rangle \in m.rules$ includes the switch's ID, $r.sID$, the ID of the controller which installed the rule, $r.cID$, the source and destination fields, $r.src$, and respectively, $r.dest$, the rule's priority $r.prt$, the ID of the neighbor to which the packet should be forwarded, $r.fwd$, and the rule's tag, $r.tag$, where 
	$r.sID, r.fwd, r.dest \in P$, $r.cID, r.src\in P_C$, $r.prt\in \{0,\ldots, n_{prt}\}$, and $r.tag\in tagDomain$. A command record $x$ includes the switch's ID, $x.sID$, and the command, $x.cmd$;
	$currTag$ and $prevTag$ are $p_i$'s current, and respectively, previous synchronization round tags;

\end{framed}
	\caption[Variables]{\label{fig:recSAvars}\label{fig:interface}\footnotesize{A list of symbols, operators, constants, interfaces and variables in Algorithm~\ref{alg:selfStabCode}.}}
\end{\VCalgSize}
\end{figure*} 
 
\begin{algorithm*}[t!]
	\begin{\VCalgSize}
		
				\noindent \textbf{Local state}: 
		$replyDB$$\subseteq$$\{ m(j)$$:=$$\langle j,$$N_c(j),$$manager(j),$$rules(j) \rangle \}_{p_j \in P }$\label{ln:local}\;
		 
		$currTag$ and $prevTag$ are $p_i$'s current and previous tags respectively\label{ln:tags}\;
		
		\noindent \textbf{Macros:}
		$res(x)=\{m \in replyDB : \forall_{r \in m.rules}\, r.tag = x\}\cup \{\langle i, N_c(i), \emptyset, \emptyset \rangle\}$\label{ln:resX}\; 
		$G(S)$$:=$$(\{p_k$$:$$\exists_{m \in S}$$:$$(m.ID$$=$$k$$\lor$$p_k \in m.N_c)\}, \{ (j,k)$$:$$\exists_{m \in S} : (m.ID = j \land p_k \in  m.N_c\})$\label{ln:G}\;
		$\dis := res(currTag) \cup \{ m \in res(prevTag): \nexists_{m'\in res(currTag)} m'.ID = m.ID\}$\label{ln:dis}\;		
		$p_j\rightarrow_G p_k:=$ true if there is a path from $p_j$ to $p_k$ in $G$\label{ln:arrow}\;
		
					
		\noindent \textbf{do forever} \Begin{\label{ln:doForeverStart} \tcc{Use replies from reachable senders with $prevTag$ or $currTag$} 
						
			$replyDB \gets \{m \in replyDB : 
			m.ID = k \neq i \land 
			(\exists_{x \in \{currTag, prevTag \}} 
			m \in res(x) \land 
			p_i \rightarrow_{G(res(x))} p_k 
			\} 
			\cup \{ \langle i, N_c(i), \emptyset, \emptyset \rangle \}$\label{ln:stale}\;
			%
			
			\textbf{let} $(newRound,msg)$$:=$$(false,\emptyset)$\label{ln:newRoundfalse}\tcc*{$newRound$ and $msg$ get defaults}
			
			\tcc{a new round with a new tag; remove replies with tag $currTag$}
			\If{$\forall_{p_{\ell}\in G(res(currTag))} (p_i$$\rightarrow_{G(res(currTag))}$$p_{\ell}$$\implies$$\exists_{m \in res(currTag)} m.ID = \ell)$\label{ln:ifAllNetDiscovered}}{
				$(newRound,prevTag) \gets (true, currTag)$;
				$currTag \gets nextTag()$\label{ln:ifAllNetDiscoveredStart}\;
				$replyDB \gets replyDB \setminus res(currTag)$\label{ln:ifAllNetDiscoveredEnd}\;
			}
			
			\tcc{The reference tag is $currTag$ only when the topology changes}
			
			\lIf{$G(\dis)=G(res(prevTag))$}{\textbf{let} $\mathit{referTag} := prevTag$ \textbf{else let} $\mathit{referTag} := currTag$\label{ln:GDisGresprevTag}}
			
			\ForEach{$p_j \in P_S : \exists_{m \in res(referTag)}\, m.ID = j$\label{ln:forEachSwitch}}{
				
				\tcc{On new rounds, remove unreachable or rule-less managers}
				\textbf{let} $M:=\{p_k \in m.Mng: (\exists_{r \in m.rules}\, r.cID= k) \land$ $(\neg newRound \lor p_i \rightarrow_{G(res(prevTag))} p_k) \}\cup \{p_i\}$\label{ln:rm}\;
				
				
				${msg}$$\gets$$msg$$\cup$$\{(p_j,$$\langle \text{`}delMngr\text{'},k \rangle)$$:$$p_k$$\in$$(m.Mng$$\setminus$$M)\}$$\cup$$\{(p_j,$$\langle \text{`}addMngr\text{'}, i \rangle)\}$\label{ln:msgM}\; 
				
				\tcc{Remove any $p_j$'s rule related to an unreachable node, $p_k$}
				
				${msg} \gets {msg} \cup \{(p_j,\langle \text{`}delAllRules\text{'}, k \rangle): (\exists_{r\in m.rules}\, r.cID =k) \land p_k \notin M\}$\label{ln:delAllRules}\;
				

				\tcc{$p_i$ refreshes its rules at switch $p_j$ with $referTag$}
				
				${msg}$$\gets$${msg}$$\cup$$\{(p_j,$$\langle \text{`}updateRule\text{'},$$myRules(G(res(\mathit{referTag})),$$j,$$currTag) \rangle)\}$\label{ln:msgL}\; 
				
			}

			\tcc{Send prepared messages to all reachable nodes aggregately }
			
			\lForEach{$p_j : p_i \rightarrow_{G(\dis)} p_j$}{\textbf{send} $(\langle \text{`}newRound\text{'}, currTag \rangle)\circ \bigcirc \{x.cmd : x\in msg \land x.sID = j \} \circ (\langle \text{`}query\text{'}, currTag \rangle)$ \textbf{to} $p_j$\label{ln:sendUpdToSwitch}}}
		

		\noindent \textbf{upon query reply $m$ from $p_j$\label{ln:queryReturn}} \Begin{
			\tcc{make space for $m$ (C-reset) and tests $m$'s tag is $prevTag$}
			\lIf{$|replyDB\cup\{m\}|>maxReplies$\label{ln:checkResponsesSize}}{$replyDB \gets \{\langle i, N_c(i), \emptyset, \emptyset \rangle\}$\label{ln:resetResponses}}
			\If{$(\exists_{r \in m.rules} r.tag = currTag)$}{$replyDB \gets (replyDB \setminus \{m'\in replyDB : m'.ID = m.ID\}) \cup \{m\}$}\label{ln:rcv}
		}
		
		
		\noindent \textbf{upon arrival of $(\bullet \circ (\langle \text{`}query\text{'}, tag \rangle))$ from $p_j$} {\textbf{do send} $\langle i, N_c(i), \bot, \{\langle j, i, \bot,\bot,\bot,\bot, tag \rangle\} \rangle$ \textbf{to} $p_j$;\label{ln:qryCon}}
		
		
	\end{\VCalgSize}
			\caption{\label{alg:selfStabCode}\footnotesize{Self-stabilizing algorithm for SDN control plane, controller $p_i$'s code (Algorithm~\ref{alg:SHORTselfStabCode}'s detailed version with definitions at Figure~\ref{fig:interface}).}}
\end{algorithm*}

\remove{
\begin{algorithm*}[t!]
	\begin{\VCalgSize}

		\noindent \textbf{Variables:} 
		$replyDB \subseteq \{ m(j) : p_j \in P \}$ most recently received replies $m(j)$, $p_j \in P$, 
		where $m(j)$ := $\langle j$, $N_c(j)$, $manager(j)$, $rules(j) \rangle$, $N_c(j)$ is $p_j$'s neighbors, $manager(j) \subseteq$ $P_C$ has $p_j$'s managers, and $rules(j) \subseteq \{ \langle k, j, {src}, {dest}, prt, z, tag \rangle : (p_k, p_j, p_z, p_{dest} \in P) \land p_{src}\in P_C \land prt\in \{0,\ldots, n_{prt}\} \land tag\in tagDomain \}$ is $p_j$'s rule set\label{ln:local}; 
		$currTag$ and $prevTag$ are $p_i$'s current, and resp., previous tags\label{ln:tags}\;
		
		\noindent \textbf{Macros:} $res(x)=\{\langle \bullet, rules(j) \rangle \in replyDB : \forall_{r \in rules(j)}\, r = \langle \bullet, x \rangle \}\cup \{\langle i, N_c(i), \emptyset$, $\emptyset \rangle\}$\label{ln:resX}; $G(S):=(\{p_k :\exists_{\langle j, N_c(j), \bullet \rangle \in S} : (k=j \lor p_k \in N_c(j))\}$, $\{ (j,k): \exists_{\langle j, N_c(j), \bullet \rangle \in S} :$ $ (p_k \in  N_c(j))\})$\label{ln:G}; $\dis := res(currTag) \cup \{ \langle k, \bullet, prevTag \rangle \in res(prevTag): \langle k, \bullet,$ $ currTag \rangle \notin res(currTag)\}$\label{ln:dis}; $p_j\rightarrow_G p_k:=$ true if $G$ has a path from $p_j$ to $p_k$\label{ln:arrow}\;

		\noindent \textbf{do forever} \Begin{\label{ln:doForeverStart} \tcc{Use replies from reachable senders with $prevTag$ or $currTag$} 
			
			$replyDB \gets \{\langle k, \bullet, rules \rangle \in replyDB : k\neq i \land (\exists_{x \in \{currTag, prevTag \}}$ $\langle k, \bullet, rules \rangle \in res(x) \land p_i \rightarrow_{G(res(x))} p_k \land  
			\langle i, \bullet, x \rangle \in rules) \} \cup \{ \langle i, N_c(i), \emptyset, \emptyset \rangle \}$\label{ln:stale}\;
			
			\textbf{let} $(newRound,msg)$$:=$$(false,\emptyset)$\label{ln:newRoundfalse}\tcc*{$newRound$ and $msg$ get defaults}
			
			\tcc{a new tag for a new round; remove replies with that tag}
			\If{$(\forall p_{\ell} : p_i \rightarrow_{G(res(currTag))} p_{\ell} \implies \langle \ell, \bullet \rangle \in res(currTag)$\label{ln:ifAllNetDiscovered}}{
				$(newRound,prevTag) \gets (true, currTag)$;
				$currTag \gets nextTag()$\label{ln:ifAllNetDiscoveredStart}\;
				$replyDB \gets replyDB \setminus \{\langle j,\bullet\rangle \in res(currTag) : p_j\in P \}$\label{ln:ifAllNetDiscoveredEnd}\;
			}
			
			\tcc{The reference tag is $currTag$ only when the topology changes}
			
			\lIf{$G(\dis)=G(res(prevTag))$}{\textbf{let} $\mathit{referTag} := prevTag$ \textbf{else let} $\mathit{referTag} := currTag$\label{ln:GDisGresprevTag}}
			
			\ForEach{$p_j \in P_S : \langle j,Ngb,Mng,Rul \rangle \in res(referTag)$\label{ln:forEachSwitch}}{
				
				\tcc{On new rounds, remove unreachable or rule-less managers}
				\textbf{let} $M$$:=$$\{p_i\} \cup \{p_k$$\in$$Mng$$:$$(\exists_{r \in Rul}\, r$$=$$\langle k, \bullet \rangle) \land (p_i$$\rightarrow_{G(res(prevTag))}$$p_k\lor \neg newRound) \}$\label{ln:rm}\;
				
				${msg}$$\gets$$msg$$\cup\{(p_j,\langle \text{`}delMngr\text{'},k \rangle)$$:$$p_k \in (Mng \setminus M)\} \cup \{(p_j,\langle \text{`}addMngr\text{'}, i \rangle)\}$\label{ln:msgM}\; 
				
				\tcc{Remove any $p_j$'s rule related to an unreachable node, $p_k$}
				
				${msg} \gets {msg} \cup \{(p_j,\langle \text{`}delAllRules\text{'}, k \rangle): (\exists_{r\in Rul}\, r =\langle k, \bullet \rangle) \land p_k \notin M\}$\label{ln:delAllRules}\;
				
				\tcc{$p_i$ refreshes its rules at switch $p_j$ with $referTag$}
				
				${msg}$$\gets$${msg}$$\cup$$\{(p_j,$$\langle \text{`}updateRule\text{'},$$myRules(G(res(\mathit{referTag})),$$j,$$currTag) \rangle) \}$\label{ln:msgL}\; 
			}
						
			\tcc{Send the prepared messages to all reachable nodes } 
			
			\lForEach{$p_j : p_i \rightarrow_{G(\dis)} p_j$}{\textbf{send} $(\langle \text{`}newRound\text{'}, currTag \rangle)\circ [\bigcirc_{m: (p_j,m) \in msg} (m)]\circ (\langle \text{`}query\text{'}, currTag \rangle)$ \textbf{to} $p_j$\label{ln:sendUpdToSwitch}}}
		
		\noindent \textbf{upon query reply $m:=\langle j,\bullet, rls \rangle$ from $p_j$\label{ln:queryReturn}} \Begin{
			\tcc{make space for $m$ (C-reset) and tests $m$'s tag is $prevTag$}
			\lIf{$|replyDB\cup\{m\}|>maxReplies$\label{ln:checkResponsesSize}}{$replyDB \gets \{\langle i, N_c(i), \emptyset, \emptyset \rangle\}$\label{ln:resetResponses}}
			\lIf{$(\exists_{r \in rls}\, r = \langle \bullet, currTag \rangle)$}{$replyDB \gets (replyDB \setminus \{ \langle j, \bullet \rangle\}) \cup \{m\}$}\label{ln:rcv}
		}
		
		\noindent \textbf{upon arrival of $(\bullet \circ (\langle \text{`}query\text{'}, tag \rangle))$ from $p_j$} {\textbf{do send} $\langle i, N_c(i), \bot, \{\langle j, i, \bot,\bot,\bot,\bot, tag \rangle\} \rangle$ \textbf{to} $p_j$\label{ln:qryCon}}
		
			\caption{\label{alg:selfStabCode}\footnotesize{Self-stabilizing algorithm for SDN control plane, controller $p_i$'s code (Algorithm~\ref{alg:SHORTselfStabCode}'s detailed version).}}
	\end{\VCalgSize}
\end{algorithm*}
} 

\Subsection{Refining the model: variables, building blocks, and interfaces}
\label{s:refiningModel}

After the provision of a high-level description of the proposed solution in Algorithm~\ref{alg:SHORTselfStabCode}, we provide the solution details in Algorithm~\ref{alg:selfStabCode}, which requires more notation, interfaces, and building blocks.

\paragraph{Local Variables}
Each controller's state includes $replyDB$ (line~\ref{ln:local}), which is the set of the most recent query replies, and the tags $currTag$ and $prevTag$, which are $p_i$'s current, and respectively, previous synchronization round tags. Each response $m(j) \in replyDB$ can arrive from either a switch or another controller and it has the form $\langle j, N_c(j), manager(j), rules(j) \rangle$, for $p_j \in P$. The code denotes by $N_c(j)$ the neighborhood of $p_j$, by $manager(j) \subseteq P_C$ the controllers of $p_j$, and by $rules(j) \subseteq \{ \langle k, j$, ${src}$, ${dest}$, $prt$, $z, tag \rangle :$ $(p_k, p_j, p_z, p_{dest} \in P) \land (p_{src} \in P_C) \land prt \in \{0,\ldots, n_{prt}\} \land tag \in tagDomain \}$ the rule set of $p_j$. Throughout Algorithm~\ref{alg:selfStabCode} and for ease of presentation we refer to the elements of responses and rules using the struct notation, which is used by the C programming language. We refer to the fields of $m = \langle ID, N_c, Mng, rules\rangle$ stated above, by $m.ID = j$, $m.N_c = N_c(j)$, $m.Mng = manager(j)$, and $m.rules = rules(j)$. We assume that the size of $replyDB$ is bounded by $maxReplies\geq 2(N_C + N_S)$, hence the local state has bounded size (the factor of 2 is due to responses from the rounds  $prevTag$ and $currTag$).

\paragraph{An internal building block: round synchronization}
\label{s:sync}
An SDN controller 
accesses the abstract switch in synchronized rounds. 
Each round has a unique tag that distinguishes the given 
round from its predecessors. We assume access to a 
self-stabilizing algorithm that generates unique \textit{tags} 
of bounded size from a finite domain of tags, $tagDomain$.
The algorithm provides a function called $nextTag()$ that, 
during a legal execution, returns a unique tag.
That is, immediately before calling $nextTag()$ there is no 
tag anywhere in the system that has the returned value from that call. 
Given two tags, $t_1$ and $t_2$, 
we require that $t_1=t_2$ holds if, and only if, they have 
identical values. 
We use these tags for synchronizing the rounds in which the controllers 
perform configuration updates and queries. 
Namely, in the beginning of a round, controller $p_i \in P_C$ 
generates a new tag and stores that tag in the variable $currTag \gets nextTag()$. 
Controller $p_i$ then attempts to install at every reachable 
switch $p_j\in P_S$ a special meta-rule $\langle i, j, \bot, \bot, n_{prt},  \bot, t_{metaRule} \rangle$, which includes, in addition to $p_i$'s identity, the tag 
$t_{metaRule}=currTag$ and has the lowest priority (before making any configuration update on that switch). 
It then sends a query to all (possibly) reachable nodes in the network and 
combines that query with the tag $t_{query}=currTag$. 
The response to that query from other controllers 
$p_j \in P_C$ includes the query tag, $t_{query}$. 
The response to the query from the switch $p_k \in P_S$ 
includes the tag $t_{metaRule}$ of the most recently installed 
meta-rule that $p_k$ has in its configuration. 
The controller $p_i$ ends its current round once it has received a 
response from every (possibly) reachable node in the network and 
that response has the tag of $currTag$. 

We note the existence of self-stabilizing algorithms, 
such as the one by Alon et al.~\cite{DBLP:journals/jcss/AlonADDPT15}, 
that in fair executions (that are legal with respect to the self-stabilizing 
end-to-end communication protocol) provide unique tags within a 
number of synchronization rounds that is bounded (by a constant whenever 
the execution is legal with respect to the self-stabilizing 
end-to-end communication protocol).
We refer to that known bound by $\Delta_{synch}$ and note that 
during a legal execution of the round synchronization algorithm, 
it holds that controller $p_i$ receives only a response message 
$m$ that matches $currTag$, i.e., it discards any message with a different tag. 
Moreover, since during legal executions $nextTag()$ returns 
only unique tags, $m$ and its acknowledgment are guaranteed 
to form a complete round-trip. 
Note that we do not require $nextTag()$ to support concurrent 
calls since every controller manages its own synchronization rounds; 
one round at a time.
We note the existence of other relevant synchronizers, such as the $\alpha$-synchronizer by Awerbuch et el.~\cite{DBLP:journals/tdsc/AwerbuchKMPV07,D2K}, which have simpler tags than~\cite{DBLP:journals/jcss/AlonADDPT15}. However, we prefer the elegant interface defined in~\cite{DBLP:journals/jcss/AlonADDPT15}.

\paragraph{Interfaces}
Controller $p_i$ can send requests or \emph{queries}
to any other node $p_j$ 
(which could be either another controller or a switch).
We detail the switch interface below and illustrate it in Figure~\ref{fig:interface}.


The controllers send command batches, 
which are sequences of commands. 
The  special 
metadata command $\langle \text{`}newRound\text{'}, t_{metaRule} 
\rangle$ is always the first command and updates the special meta-rule to store $t_{metaRule}$. 
We use it for starting 
a new round (where $t_{metaRule}=t$
is the round's tag).
This starting command could be followed by a number of commands, 
such as  $\langle \text{`}delMngr\text{'}, k \rangle$ for the 
removal of controller $p_k$ from the management of 
switch $p_j$, $\langle \text{`}addMngr\text{'},k \rangle$ for the addition 
of controller $p_k$ from the management of switch 
$p_j$, and $\langle \text{`}delAllRules\text{'}, k \rangle$ for the 
deletion of all of $p_k$'s rules from the configuration 
of switch $p_j$, where $p_k \in P_C\setminus \{p_i\}$. 
The rules' update is done via $\langle \text{`}updateRule\text{'}, 
newRules \rangle$ and it is the second last command. 
This update replaces all of $p_i$'s rules at switch $p_j$ 
(except for the special meta-rule) with the rules in $newRules$.
These commands are to be followed by the round's 
query $\langle \text{`}query\text{'},t_{query} \rangle$, where 
$t_{query}=t$ is the query's tag. The switch $p_j$ 
replies to a query by sending $m=\langle j, N_c(j)$, 
$manager(j)$, $rules(j) \rangle$ to $p_i$, such that the rule 
set includes also the special meta-rule $\langle i, \bullet, t 
\rangle \in rules(j)$. Whenever $p_j \in P_C$ is another controller, 
response to a query is simply $\langle i, N_c(i), \bot, 
\{\langle j, i, \bot,\bot,\bot,\bot, t_{query} \rangle\} \rangle$ 
(line~\ref{ln:qryCon}). Note that controller $p_j$ simply ignores 
all other types of commands.  
We use the interface function $myRules(G, j, tag)$ (Section~\ref{sec:kappaFlowsAboveZeroConstructing}) for 
creating the packet forwarding rules that controller $p_i$ installs at switch $p_j$ when $p_i$'s current view on the network topology is $G$ in round $tag$ (Figure~\ref{fig:recSAvars}). 

\Subsection{Algorithm details}
\label{sec:description}
Algorithm~\ref{alg:selfStabCode} presents the proposed solution with a greater degree of details than Algorithm~\ref{alg:SHORTselfStabCode}. 
Algorithm~\ref{alg:selfStabCode} is centered around 
a \emph{do forever} loop, which starts by removing stale 
information from $replyDB$ (line~\ref{ln:stale}). 
This removal action includes refreshing information related 
to controller $p_i$, which deletes information about any node 
that is not reachable from $p_i$. The reachability test 
uses the currently known information about the network topology, 
$G$ and the relation $\rightarrow_G$ (line~\ref{ln:arrow})
that tells whether node $p_j$ is reachable 
from controller $p_i$ in $G$, given the information in $replyDB$. 

Algorithm~\ref{alg:selfStabCode} accesses 
the switch configurations in synchronization rounds. 
Lines~\ref{ln:newRoundfalse}--\ref{ln:ifAllNetDiscoveredEnd} manage the start (and end) 
of synchronization rounds. When a new round starts, 
i.e., the condition of the if-statement of line~\ref{ln:ifAllNetDiscovered} 
holds, controller $p_i$ marks the start of a new round ($newRound_i=true$), 
updates the values of the tags $prevTag_i$ and $currTag_i$ and clears 
any record with tag $currTag$ of the replies stored in $replyDB_i$ (line~\ref{ln:ifAllNetDiscoveredStart} and~\ref{ln:ifAllNetDiscoveredEnd}).

Algorithm~\ref{alg:selfStabCode} refreshes (and reconstructs) the information about remote nodes (controllers and switches including the ones that are directly attached to it) by sending queries (line~\ref{ln:sendUpdToSwitch}) and updating the set of stored replies (line~\ref{ln:rcv}). Notice that controller $p_i$ also responds to query requests coming from other controllers (line~\ref{ln:qryCon}). Algorithm~\ref{alg:selfStabCode} uses these replies for completing the information about the switches that are directly connected to a remote controller (and thus the other fields in the response messages are the empty sets). 

The heart of Algorithm~\ref{alg:selfStabCode} includes the 
updates of every switch $p_j \in P_S$ (line~\ref{ln:forEachSwitch} to~\ref{ln:delAllRules}). 
For every switch $p_j$ (line~\ref{ln:forEachSwitch}), 
controller $p_i$ considers $p_j$'s stored response 
$\langle j, Ngb_i, Mng_i, Rul_i\rangle$ for which it prepares a set of commands 
to be stored in the set $msg_i$ (lines~\ref{ln:newRoundfalse},~\ref{ln:msgM},~\ref{ln:delAllRules},~\ref{ln:msgL} and~\ref{ln:sendUpdToSwitch}).    To that end, 
$p_i$ first calculates the set of managers that $p_j$ should 
have in the following manner. If this iteration of the do 
forever loop (lines~\ref{ln:doForeverStart} to~\ref{ln:sendUpdToSwitch}) 
is the first one for the round $currTag_i$, the value of $newRound_i$ is 
$true$ (line~\ref{ln:ifAllNetDiscoveredStart});  this leads $p_i$ to 
remove any controller $p_k$ that is not reachable according 
to ${G(res(prevTag))}$ (lines~\ref{ln:rm} to~\ref{ln:delAllRules}). 
Whenever the iteration is not the first one, 
$p_i$ merely asserts that it is a manager of $p_j$. 

Controller $p_i$ removes any rules of an unreachable 
controller $p_k$ (line~\ref{ln:delAllRules}) and 
updates all of its rules at switch $p_j$ (line~\ref{ln:msgL}) 
using the interface function $myRules()$ (line~\ref{ln:msgL}) 
and the reference tag, $\mathit{referTag}$ 
(line~\ref{ln:G} and line~\ref{ln:GDisGresprevTag}). 
The proposed algorithm 
selects $\mathit{referTag}$'s value to be $prevTag$ during 
legal executions. During recovery periods, the discovered 
topology can differ from that one that is stored with the tag $prevTag$. 
In that case, the algorithm selects $currTag$ as the reference tag. 
After preparing these commands to all the switches, controller $p_i$ 
prepares query commands to all reachable nodes (including both 
controllers and switches) and then sends all prepared commands 
to their designated destinations. Note that each of these configuration 
updates are done via a single message that aggregates all 
commands for a given destination (line~\ref{ln:sendUpdToSwitch}). 

We note that when a query response arrives at $p_i$, before the update of the response set (line~\ref{ln:rcv}), $p_i$ checks that there is sufficient storage space for the arriving response (line~\ref{ln:resetResponses}). If space is lacking, $p_i$ performs what we call a `C-reset'. Note that $p_i$ stores replies only for the current synchronization round, $currTag$.

\section{Correctness Proof}
\label{sec:proof}
We prove the correctness of Algorithm~\ref{alg:selfStabCode} by 
showing that when the system starts in an arbitrary state, 
it reaches a legitimate state (Definition~\ref{def:legitimateState}) 
within $$\bigO{(((\Delta_{comm} + \Delta_{synch}))D)  [((\Delta_{comm} + \Delta_{synch})D) \cdot N_S + N_C])}$$ frames (Theorem \ref{lem:staleResetIllegalDelete}). Moreover, we show that when 
starting from a legitimate state, the system satisfies the task 
requirements and it is also resilient to a bounded number of 
failures (lemmas~\ref{thm:kappaLink} and~\ref{lem:nodeAdditionDeletion}). 


We refer to the values of variable $X$ at 
node $p_i$ (controller or switch) as $X_i$, i.e., 
the variable name with a subscript that indicates the node index. 
Similarly, we refer to the return values of function $f$ 
at controller $p_k$ as $f_k$.

\begin{definition}[Legitimate System State]
	\label{def:legitimateState}
	State $c \in R$ is legitimate with respect to Algorithm~\ref{alg:selfStabCode} when, for every controller $p_i \in P_C$ and node $p_k \in P\setminus \{p_i\}$, the following conditions hold.
	\begin{enumerate}
		\item \label{def:legitimateState:dis} 
		$\langle k$, $N_c(k)$, $manager(k)$, $rules(k) \rangle$ $\in$ $replyDB_i$ if, 
		and only if, $N_c(k)$, $manager(k)$, and $rules(k)$ are $p_k$'s 
		neighborhood, managers, and respectively, set of packet 
		forwarding rules (line~\ref{ln:local}) as well as $p_i \rightarrow_G p_k$ (line~\ref{ln:arrow}). 
		Moreover, for the case of controller $p_k \in P_C$, the task does 
		not require $p_k$ to have any managers or rules, 
		i.e., $manager(k) = \emptyset$ and $rules(k) = \emptyset$. 
		\item \label{def:legitimateState:manage} Any controller 
		is the  manager of every switch and only these controllers 
		can be the mangers 
		of any switch, i.e., $p_i \in P_C \land p_k \in P_S \iff p_i \in manager(k)$.
		\item \label{def:legitimateState:flow} The rules installed in the 
		switches encode $\kappa$-fault-resilient flows between controller 
		$p_i$ and node $p_k$ in the network $G_c$ 
		(Section~\ref{sec:kappaFlowsAboveZeroConstructing}).
		\item \label{def:legitimateState:end2end} The end-to-end protocol (Section~\ref{sec:transportModel}) as well as the round 
		synchronization protocol (Section~\ref{sec:topologyDiscovery}) 
		between $p_i$ and $p_{k}$ are in a legitimate state.
	\end{enumerate}
\end{definition}

\Subsection{Overview}

The proof of Theorem \ref{lem:staleResetIllegalDelete} starts by establishing bounds on the number of rules that each switch needs to store (Lemma~\ref{thm:boundedSwitchMemory}). 
The proof arguments are based on the bounded network size and the memory management scheme of the abstract switch (Section~\ref{sec:memory}), which guarantees that, during a legal execution, all non-failing controllers are able to store their rules (Lemma~\ref{thm:boundedSwitchMemory}). 
The bounded network size also helps to bound, during a legal execution, the amount of memory that each controller needs to have (Lemma~\ref{thm:boundedControllerMemory}). 
This proof also bounds the number of C-resets that a controller might take (line~\ref{ln:checkResponsesSize}) during the  period in which the system recovers from transient faults. 
This is line~\ref{ln:SHORTcheckResponsesSize} in Algorithm~\ref{alg:SHORTselfStabCode}.
Note that this bound on the number of C-resets is important because C-resets delete all the information that a controller has about the network state. 

C-resets are not the only disturbing actions that might occur during the recovery period. 
The system cannot reach a legitimate state before it removes stale information from the configuration of every switch. 
Note that failing controllers cannot remove stale information that is associated with them and therefore non-failing controllers have to remove this information for them. 
Due to transient faults, it could be the case that one controller can remove information that is associated with another non-failing controller. 
We refer to these `mistakes' as illegitimate deletion of rules or managers (Section~\ref{sec:illegiDel}). 
Note that illegitimate deletions occur when the (stale) information that a controller has about the network topology differ from the actual network topology, $G_c$. 
Moreover, due to stale information in the communication channels, any given controller might aggregate (possibly stale) information about the network more than once and thus instruct more than once the switch to delete illegitimately the rules of other controllers. 

Theorem~\ref{thm:boundedResetsIllegitimateDeletions} bounds the number of these illegitimate deletions. 
It does so by counting the number of possible steps in which a controller might have stale information about the network and that stale information leads the controller to perform an illegal deletion. 
The proof arguments start by considering a starting state in which controller $p_i \in P_c$ is just about to take a step that instructs the switches to perform illegitimate deletions. 
The proof then argues that between any two such steps, controller $p_i$ has to aggregate information about the network in such a way that $p_i$ \ems{(mistakenly) decides that it has completed the task of topology discovery.} 
%
%
But, this can only happen after receiving a reply from every node in the preserved topology (Claim~\ref{thm:whenDone}). 
By induction on the distance $k$ between controller $p_i \in P_c$ and node $p_j \in P\setminus \{ p_i \}$, the proof shows that the information that $p_i$ has about $p_j$ is correct within $k\cdot(\Delta_{comm}+\Delta_{synch}+1)+1$ times in which $p_i$ instruct the switches to perform an illegitimate deletion, because there is a bounded number of stale information in the communication channel between $p_i$ and $p_j$ (Lemma~\ref{thm:whenAfterDone}). 
Thus, the total number of illegitimate deletions is at most $D\cdot(\Delta_{comm}+\Delta_{synch}+1)+1$.

The proof demonstrates recovery from transient faults by considering a period in which there are no C-resets and no illegitimate deletions (Section~\ref{sec:recover}). 
In such a period, all the controllers construct $\kappa$-fault-resilient flows to any other node in the network (Lemma~\ref{thm:propagation}). 
This part of the proof is again by induction on the distance $k$ between controller $p_i \in P_c$ and node $p_j \in P\setminus \{ p_i \}$. 
The induction shows that, within $((\Delta_{comm}+\Delta_{synch})+2)k$ frames, $p_i$ discovers correctly its $k$-distance neighborhood and establishes a communication channel between $p_i$ and $p_j$. 
This means that within $((\Delta_{comm}+\Delta_{synch})+2)D$ frames in which there are no C-resets and no illegitimate deletions, the system reaches a legitimate state (Lemma~\ref{thm:noDelMnger}).


The above allows Theorem~\ref{lem:staleResetIllegalDelete} to show that \ems{after at most  $$\bigO(((\Delta_{comm}+\Delta_{synch}))D)[((\Delta_{comm}+\Delta_{synch})D)\cdot N_S + N_C])$$} frames in $R$, there is a period of \ems{$\bigO((\Delta_{comm}+\Delta_{synch}))D)$} frames in which there are no C-resets and no illegitimate deletions and thus the system reaches a legitimate state. 
Lemma~\ref{thm:kappaLink} shows that, when starting from a legitimate state an then letting a single link in the network to be added or remove from $G_c$, the system recovers within $\bigO(D)$ frames. 
The arguments here consider that number of frames it takes for each controller to notice the change and to update all the switches. 
By similar arguments, Lemma~\ref{lem:nodeAdditionDeletion} shows that after the addition or removal of at most $N_C -1$ controllers, the system reaches a legitimate system state within $\bigO(D)$ frames.  

\Subsection{Analysis of memory and message size requirements}
\label{sec:basicFacts}
Lemmas~\ref{thm:boundedSwitchMemory} 
and~\ref{thm:boundedControllerMemory} bound the needed 
memory at every node during a legal execution. 
Recall that we assume that the switches implement a mechanism 
for dealing with clogged memory (Section~\ref{sec:memory}), 
such that once controller $p_i \in P_C$ refreshes its rules on a 
given switch, that switch never removes $p_i$'s rules.

Lemma~\ref{def:legitimateState} considers 
an event that can delay recovery, i.e., the removal of a rule at a 
switch due to lack of space.
Lemma~\ref{def:legitimateState} bounds the needed memory for every switch, and thus relates to events that can delay recovery, i.e., the removal of a rule at a 
switch due to lack of space.

\begin{lemma}[Bounded Switch Memory]
	\label{thm:boundedSwitchMemory}
	(i) Suppose that $R$ is a legal execution of Algorithm~\ref{alg:selfStabCode}. 
	A switch needs to let no more than $maxManagers \geq  N_C$ controllers 
	to manage it and (2) no more than $maxRules  \geq N_C\cdot (N_C + N_S - 1)\cdot n_{prt}$ 
	packet forwarding rules. 
\end{lemma}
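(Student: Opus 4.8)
The plan is to bound the two quantities separately, using the characterization of a legal execution from Definition~\ref{def:legitimateState} together with the switch memory-management scheme of Section~\ref{sec:memory}, which guarantees that once a controller refreshes its rules on a switch those rules are never evicted.

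For the number of managers, I would invoke condition~\ref{def:legitimateState:manage} of Definition~\ref{def:legitimateState}: during a legal execution $p_i \in manager(k)$ holds exactly when $p_i \in P_C$ and $p_k \in P_S$. Hence the manager set of any switch is precisely $P_C$, whose size is $n_C \le N_C$. Because the execution is legal there are no stale managers of failed or unreachable controllers (these have been removed; cf.\ line~\ref{ln:rm} and condition~\ref{def:legitimateState:dis}), so a switch never has to admit more than $N_C$ managers, and $maxManagers \ge N_C$ suffices.

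For the rules, I would count the rules stored at a fixed switch $p_j$ by the triple $(cID, dest, prt)$, noting that in a legal execution the only rules present are those that the $n_C$ controllers install through $myRules()$ to encode $\kappa$-fault-resilient flows (condition~\ref{def:legitimateState:flow}); the cleanup in lines~\ref{ln:rm}--\ref{ln:delAllRules} with condition~\ref{def:legitimateState:dis} ensures no rule of an unreachable controller remains. The three structural observations are: (i) $cID$ ranges over $P_C$, contributing a factor $n_C \le N_C$; (ii) for a fixed installing controller $p_k$ its flows originate at $p_k$, so $src = k$ is determined by $cID$ (using $r.src \in P_C$ from line~\ref{ln:local}) while the far endpoint $dest$ ranges over $P \setminus \{p_k\}$, contributing a factor $|P|-1 \le N_C + N_S - 1$; and (iii) for a single source--destination pair the $\kappa$-fault-resilient encoding places at one switch at most one rule per failover level, i.e.\ one rule per priority (the primary path at the highest priority plus one backup for each additional tolerated failure), which is at most $\kappa + 1 \le n_{prt}$ rules (cf.\ Section~\ref{sec:models} and Section~\ref{sec:kappaFlowsAboveZeroConstructing}). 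Multiplying the three factors gives $N_C \cdot (N_C + N_S - 1) \cdot n_{prt}$, with the single per-controller meta-rule absorbed by the slack in $\kappa + 1 \le n_{prt}$ (the priority domain $\{0,\dots,n_{prt}\}$ having $n_{prt}+1$ values).

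Finally I would connect the combinatorial count to the phrase ``needs to let'': since the memory scheme of Section~\ref{sec:memory} refreshes timestamps on every controller access and evicts only the oldest entries, once all $n_C$ non-failing controllers have refreshed their rules---which occurs in any legal execution---no live rule is evicted, so the switch's retained rule set is exactly the set counted above. The main obstacle I anticipate is item~(iii): making precise that a $\kappa$-fault-resilient flow contributes at most $n_{prt}$ rules per source--destination pair at a single switch (rather than one rule per backup path) and that $src = cID$, so that the source field does not introduce a second factor of $N_C$; the remainder is bookkeeping over the finite field domains.
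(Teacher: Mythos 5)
Your proposal is correct and follows essentially the same argument as the paper: the rule bound comes from the identical product decomposition (at most $N_C$ installing controllers, each being the source of flows to at most $N_C+N_S-1$ destinations, with at most $n_{prt}$ rules per flow at any one switch), and the manager bound from the fact that in a legal execution the manager set of every switch is exactly the set of live controllers, which the FIFO memory scheme never evicts once each controller refreshes its entries. The only cosmetic differences are that you invoke condition~\ref{def:legitimateState:manage} of Definition~\ref{def:legitimateState} directly where the paper argues via repeated switch accesses keeping each controller among the $N_C$ most recent entries, and that you make explicit the bookkeeping ($src=cID$, meta-rule absorbed in the priority slack) that the paper leaves implicit.
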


\begin{proof}
	Let $p_j\in P_S$ be a switch. 
	
	\noindent \textbf{Number of managers.}
	Recall that we assume that $maxManagers\geq N_C \geq 
	|P_C|$, i.e., the bound is large enough 
	to store all managers (once all stale information is removed in 
	a FIFO manner that is explained in Section~\ref{sec:memory}). 
	During a legal execution $R$ of Algorithm~\ref{alg:selfStabCode}, 
	every controller accesses every switch repeatedly (line~\ref{ln:sendUpdToSwitch}). 
	This way, every $p_i\in P_C$, is always among the $N_C$ most recently installed controllers at $p_j\in P_S$.

	\noindent \textbf{Number of rules.}
	Recall that a rule is a tuple of the form $\langle k$, 
	$i$, ${src}$, ${dest}$, $prt$, $j, tag \rangle$, 
	where $p_k \in P_C$ is the controller that created this rule, 
	$p_i \in P_S$ is the switch that stores this rule, 
	$p_{src} \in P_C$ and $p_{dest} \in P$ are the source, 
	and respectively, the destination of the packet, $prt$ is the 
	packet's priority, $p_j \in P$ is the relay node (i.e., the rule's action field) 
	and $tag$ is the synchronization round tag.
	
	To show that there are no more than $N_C\cdot (N_C + N_S - 1)\cdot n_{prt}$ rules that a switch needs to store, recall that each of the $N_C$ controllers $p_{src} \in P_C$ constructs $\kappa$-fault-resilient flows to every node $p_{dest} \in P \setminus \{p_{src}\}$ in the network. 
	Thus, switch $p_i \in P_S$ might be a hop on the $\kappa$-fault-resilient flow between $p_{src}$ and $p_{dest}$. 
	That is, there are at most $ N_C\cdot (N_C + N_S - 1)$ such flows that pass via $p_i$, because for each of the $N_C$ possible flow sources $p_{src}$, there are exactly $(N_C + N_S - 1)$ destinations $p_{dest}$. 
	Each such flow stores at most $n_{prt} \geq \kappa+1$ rules at $p_i$, i.e., one for each priority. 
	Note that, during a legal execution, each switch $p_i \in P_S$ stores at most one tag per $p_{src} \in P_C$ (line~\ref{ln:sendUpdToSwitch}).
\end{proof} 

Lemma~\ref{thm:boundedControllerMemory} considers an event C-reset, 
which can delay recovery.

\begin{lemma}[Bounded Controller Memory]
	\label{thm:boundedControllerMemory}
	(1)~Let $a_x\in R$ be the first step in which controller $p_i$ runs lines~\ref{ln:queryReturn}--\ref{ln:rcv} (upon query reply).
	For every state in $R$ that follows step $a_x$, 
	node $p_i$ stores no more than $maxReplies$ 
	replies in the set $replyDB_i$. 
	(2)~Suppose that $R$ is a 
	legal execution. 
	Controller $p_i\in P_C$ needs to store, 
	in the set $replyDB_i$, no more than $maxReplies \geq 2 \cdot (N_C +N_S) $ items. 
	(3)~Suppose that $R$ is any execution, which may start in an arbitrary state. 
	Controller $p_i$ performs a C-reset at most once in $R$, i.e., takes 
	a step $a_{x'}\in R$ that includes the execution of line~\ref{ln:checkResponsesSize}, in which the if-statement condition is true.
\end{lemma}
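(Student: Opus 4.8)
The plan is to track the single quantity $|replyDB_i|$ across all three parts, exploiting that the \emph{only} operation that can enlarge $replyDB_i$ is the reply handler in lines~\ref{ln:queryReturn}--\ref{ln:rcv}: line~\ref{ln:stale} and line~\ref{ln:ifAllNetDiscoveredEnd} of the do-forever loop merely filter or subtract entries, never adding anything beyond the single self-record $\langle i, N_c(i), \emptyset, \emptyset\rangle$. The backbone of the whole argument is a structural invariant I will call \emph{canonical form}: $replyDB_i$ contains at most one record per node identifier. Since every stored record carries an $ID$ field in $P$, canonical form immediately yields $|replyDB_i| \le |P| = n_C + n_S \le N_C + N_S$.

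For part~(1), I would show that each invocation of the reply handler re-establishes $|replyDB_i| \le maxReplies$ as a post-condition. If the guard of line~\ref{ln:checkResponsesSize} fires, the C-reset of line~\ref{ln:resetResponses} collapses $replyDB_i$ to a singleton and the subsequent line~\ref{ln:rcv} adds at most one record, so $|replyDB_i|\le 2\le maxReplies$. If the guard does not fire, then $|replyDB_i\cup\{m\}|\le maxReplies$ by definition, and since line~\ref{ln:rcv} deletes the records sharing $m.ID$ before inserting $m$, its result is a subset of $replyDB_i\cup\{m\}$ and hence still of size at most $maxReplies$. Because the do-forever loop only removes records, the bound, once true right after the first reply step $a_x$, is preserved at every later state.

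For part~(2), I would invoke Definition~\ref{def:legitimateState}(\ref{def:legitimateState:dis}) together with the tag filter of line~\ref{ln:stale}: in a legal execution $replyDB_i$ retains exactly the records of nodes reachable from $p_i$ whose rules carry $currTag$ or $prevTag$, and line~\ref{ln:rcv} keeps at most one record per $ID$. Thus each reachable node contributes at most one record; bounding the reachable nodes by $N_C+N_S$ and allowing, conservatively, for both live rounds yields $|replyDB_i|\le 2(N_C+N_S)$, which is exactly the assumed value of $maxReplies$.

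Part~(3) is the crux, and I would prove that the \emph{first} C-reset is also the last. A C-reset can occur only inside the reply handler, when line~\ref{ln:checkResponsesSize} holds. The key observation is that immediately after any C-reset $replyDB_i=\{\langle i, N_c(i), \emptyset, \emptyset\rangle\}$ is trivially in canonical form, and each of the only three operations that touch $replyDB_i$ thereafter preserves canonical form: line~\ref{ln:rcv} deletes every record with $ID=m.ID$ before inserting $m$, while line~\ref{ln:stale} and line~\ref{ln:ifAllNetDiscoveredEnd} are pure filters. Hence from the first C-reset onward $|replyDB_i|\le N_C+N_S$, so for any later reply $m$ we have $|replyDB_i\cup\{m\}|\le (N_C+N_S)+1\le 2(N_C+N_S)\le maxReplies$, and the guard of line~\ref{ln:checkResponsesSize} is never satisfied again. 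The main obstacle is precisely that this guard tests the \emph{raw} union $|replyDB_i\cup\{m\}|$ \emph{before} the per-$ID$ de-duplication of line~\ref{ln:rcv}, so the ``identifiers live in $P$'' bound is useless in the arbitrary start state, where many distinct records may share an $ID$; canonical form is what rescues the argument, but only after the cleansing effect of the first C-reset (or, vacuously, if no C-reset ever occurs).
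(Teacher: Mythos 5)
Your proposal is correct, and for parts (1) and (2) it follows the paper's own proof almost step for step: part (1) rests on the observation that only the reply handler can enlarge $replyDB_i$ and that lines~\ref{ln:checkResponsesSize}--\ref{ln:rcv} re-establish the bound as a post-condition, and part (2) rests on the tag filter (only the rounds $prevTag_i$ and $currTag_i$ survive lines~\ref{ln:stale} and~\ref{ln:ifAllNetDiscoveredEnd}) combined with reachability, giving the $2(N_C+N_S)$ bound. The one real divergence is part (3), which the paper disposes of in a single sentence --- once a C-reset happens, ``parts (i) and (ii) of this proof imply that this can never happen again'' --- without saying why; strictly speaking those two parts alone do not suffice, since part (1) still permits $|replyDB_i| = maxReplies$ (after which one fresh reply would re-trigger the guard), and part (2) is stated only for legal executions, whereas part (3) concerns arbitrary ones. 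Your \emph{canonical form} invariant (at most one record per identifier, established by the C-reset and preserved by the per-$ID$ de-duplication of line~\ref{ln:rcv} and the pure filters of lines~\ref{ln:stale} and~\ref{ln:ifAllNetDiscoveredEnd}) is exactly the missing glue: it caps $|replyDB_i|$ at $N_C+N_S$ from the first C-reset onward, so the guard $|replyDB_i \cup \{m\}| > maxReplies \geq 2(N_C+N_S)$ can never fire again. In short, your write-up is the same proof strategy as the paper's, but with the crux of part (3) made explicit and airtight --- including the correct diagnosis that the guard is evaluated on the raw union before de-duplication, which is precisely why the argument needs the post-reset invariant rather than the bounds of parts (1) and (2) alone.
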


\begin{proof}
	\noindent \textbf{Part (1).~~}
	We note that $p_i$ modifies $replyDB_i$ only 
	in line~\ref{ln:stale} and line~\ref{ln:ifAllNetDiscoveredEnd} 
	in the do-forever loop (lines~\ref{ln:doForeverStart}--\ref{ln:sendUpdToSwitch}), 
	and in lines~\ref{ln:resetResponses} and~\ref{ln:rcv} in the query reply procedure (lines~\ref{ln:queryReturn}--\ref{ln:rcv}).
	In line~\ref{ln:stale} and line~\ref{ln:ifAllNetDiscoveredEnd}, 
	the size of $replyDB_i$ either decreases (possible only at the first step 
	that $p_i$ executes line~\ref{ln:stale} or line~\ref{ln:ifAllNetDiscoveredEnd}) 
	or stays the same.
	Thus, the rest of this proof focuses only at lines~\ref{ln:resetResponses} 
	and~\ref{ln:rcv}, where the set $replyDB_i$ increases due 
	to the addition of an incoming reply (line~\ref{ln:rcv}).
	
	Let $a_{x'}$ be the first step in $R$, in which controller 
	$p_i$ executes lines~\ref{ln:queryReturn}--\ref{ln:rcv} 
	due to a message $m_j$ that $p_i$ receives from node $p_j$. 
	By line~\ref{ln:checkResponsesSize}, if 
	$|replyDB_i \cup \{m_j\}| > maxReplies$ holds, 
	then $p_i$ performs a C-reset, i.e., sets 
	$replyDB_i \gets \{\langle i, N_c(i), \emptyset, \emptyset \rangle\}$, 
	which implies that $|replyDB_i| = 1$ after the execution of 
	line~\ref{ln:resetResponses}. 
	Hence, after the execution of line~\ref{ln:rcv} in step $a_{x'}$, 
	$|replyDB_i| < maxReplies$ holds for the state $c_{{x'}+1}$, 
	which follows $a_{x'}$ immediately.
	Similarly, since the size of $replyDB_i$ increases only when $p_i$ executes line~\ref{ln:rcv}, for every step $a_{x''}$ and the system state $c_{{x''}+1}$ 
	that appears in $R$ after $c_{{x'}+1}$, it is true that $|replyDB_i| 
	\leq maxReplies$ holds in $c_{{x''}+1}$, due to line~\ref{ln:resetResponses}.
	Thus, for every system state that follows the first step $a_{x'} \in R$,
	it holds that $|replyDB_i| \leq maxReplies$.
	
	\noindent \textbf{Part (2).~~}
	Line~\ref{ln:stale} removes from 
	$replyDB_i$ any response 
	that its synchronization round tag is not 
	in the set $\{prevTag_i,currTag_i\}$ and line~\ref{ln:rcv} 
	does not add to $replyDB_i$ a response that its synchronization 
	round tag is not $currTag_i$. Moreover, line~\ref{ln:ifAllNetDiscoveredEnd} 
	makes sure that when finishing one synchronization 
	round and then transitioning to the next one, $replyDB_i$ 
	includes replies only with synchronization round tags that are 
	$prevTag_i$. Therefore, there are no more than two 
	synchronization round  tags that could be simultaneously present in 
	$replyDB_i$. 
	Moreover, line~\ref{ln:stale} also removes any response from an 
	unreachable node, because item~\ref{def:legitimateState:dis} of Definition~\ref{def:legitimateState} holds in any system state of 
	a legal execution. This further limits the set $replyDB_i$ to 
	includes response from at most $N_C + N_S$ nodes. Therefore, 
	$|replyDB_i| \leq 2 \cdot (N_C + N_S)$.
	
	\noindent \textbf{Part (3).~~}
	Suppose that $p_i$ does perform a C-reset during $R$. Once that 
	happens, parts (i) and (ii) of this proof imply that this can never happen again.
\end{proof}

Lemma~\ref{thm:message} demonstrates that the proposed algorithm requires bounded message size.  

\begin{lemma}
	\label{thm:message}
	The message size before and after the recovery period is in $\bigO( maxRules \log N)$, and respectively, $\bigO( \Delta N\log N )$ bits, where $N=N_C+N_S$ and $\Delta$ is the maximum node degree.
\end{lemma}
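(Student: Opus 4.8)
The plan is to bound the size of the only message a controller $p_i$ transmits, namely the aggregated command batch of line~\ref{ln:sendUpdToSwitch}. This batch is the concatenation of a single $newRound$ command, the per-destination commands accumulated in $msg$ (the $delMngr$ and $addMngr$ commands of line~\ref{ln:msgM}, the $delAllRules$ commands of line~\ref{ln:delAllRules}, and the single $updateRules$ command of line~\ref{ln:msgL}), and a single $query$ command. First I would observe that every field occurring in any command is drawn from a bounded domain: node identifiers lie in $P$ (so at most $N=N_C+N_S$ values), controller identifiers in $P_C$, priorities in $\{0,\ldots,n_{prt}\}$, and tags in $tagDomain$; hence each identifier, priority, and tag is encodable in $\bigO(\log N)$ bits, and each individual command occupies $\bigO(\log N)$ bits plus whatever payload it carries. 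The $newRound$, $addMngr$, and $query$ commands carry a constant number of such fields, and by the type constraints of line~\ref{ln:local} there are at most $N_C=\bigO(N)$ of the $delMngr$ commands (one per controller in $m.Mng\setminus M$) and at most $N_C$ of the $delAllRules$ commands (one per controller appearing in $m.rules$ but not in $M$); together these auxiliary commands contribute $\bigO(N\log N)$ bits. The dominant term is therefore the payload $newRules=myRules(G(res(\mathit{referTag})),j,currTag)$ of the $updateRules$ command, so the whole argument reduces to counting the rules that a single controller installs on a single switch.

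Next I would bound this count in the two regimes. After the recovery period (legal execution), Definition~\ref{def:legitimateState} guarantees that $p_i$'s rules at $p_j$ encode $\kappa$-fault-resilient flows from $p_i$ to each of the at most $N-1$ destinations $p_{dest}\in P\setminus\{p_i\}$ (Section~\ref{sec:kappaFlowsAboveZeroConstructing}). For a fixed source--destination pair, the failover rules stored at $p_j$ differ only in their forwarding target $r.fwd$, which must be a neighbor of $p_j$; since $p_j$ has at most $\Delta$ neighbors, there are at most $\Delta$ such rules per destination, whence $\bigO(\Delta N)$ rules in $newRules$, each of size $\bigO(\log N)$. Combining with the $\bigO(N\log N)$ auxiliary commands yields a message size of $\bigO(\Delta N\log N)$ bits. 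Before the recovery period the view $G(res(\mathit{referTag}))$ and the stored replies may be arbitrarily corrupted, so the failover/degree structure cannot be exploited; here I would use only the coarse counting fact that $myRules$ emits at most one rule per (destination, priority) pair, i.e. at most $(N-1)(n_{prt}+1)$ rules, which by the setting $maxRules\geq N_C\cdot(N_C+N_S-1)\cdot n_{prt}$ of Lemma~\ref{thm:boundedSwitchMemory} (and Section~\ref{sec:memory}) is at most $maxRules$. Each such rule is $\bigO(\log N)$ bits, giving $\bigO(maxRules\log N)$ bits, which dominates the $\bigO(N\log N)$ auxiliary commands.

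The hard part will be the per-destination rule count in the legal regime: one must argue from the $\kappa$-fault-resilient flow construction of Section~\ref{sec:kappaFlowsAboveZeroConstructing} that the failover rules installed at a node for a single flow are in one-to-one correspondence with (a subset of) that node's incident edges, so that their number is at most $\Delta$ rather than the a priori bound $n_{prt}+1$; this is precisely what refines the pre-recovery estimate $\bigO(maxRules\log N)$ to $\bigO(\Delta N\log N)$ once the switch view is correct. The remaining work is bookkeeping---checking that the $delMngr$ and $delAllRules$ commands number $\bigO(N)$ and that every field occupies $\bigO(\log N)$ bits---which is routine once the field domains $P$, $P_C$, $\{0,\ldots,n_{prt}\}$, and $tagDomain$ are fixed.
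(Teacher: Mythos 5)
Your analysis of the controller-to-switch direction follows essentially the same route as the paper: decompose the batch of line~\ref{ln:sendUpdToSwitch} into the command sets of lines~\ref{ln:msgM}, \ref{ln:delAllRules}, and~\ref{ln:msgL}, encode each field in $\bigO(\log N)$ bits, and let the $updateRules$ payload dominate, with the legal-execution refinement coming from the fact that the number of failover alternatives per destination is tied to the degree (the paper phrases this as ``$n_{prt} \geq \Delta+1$ suffices for expressing all rules'' rather than your one-to-one correspondence with incident edges, but it is the same idea). However, there is a genuine gap in scope: the lemma bounds \emph{the} message size of the algorithm, and the algorithm has a second message type that you never analyze, namely the query replies (the switch response $\langle j, N_c(j), manager(j), rules(j)\rangle$ and the controller response of line~\ref{ln:qryCon}). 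These are not dominated by the command batch you bound: a switch's reply carries its \emph{entire} stored configuration, i.e., during recovery up to $maxRules$ possibly corrupted rules plus up to $maxManagers$ manager entries, and during a legal execution the rules installed by \emph{all} controllers (cf.\ Lemma~\ref{thm:boundedSwitchMemory}), not just the rules of the single querying controller whose $myRules()$ output you count. The paper's proof devotes a separate calculation to exactly this case, and without it your argument establishes the bound for only half of the traffic.

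Two smaller points. First, your claim that there are at most $N_C$ $delAllRules$ commands rests on the type constraint $r.cID \in P_C$ holding in an arbitrary starting state; but stale rules of failed or nonexistent controllers are precisely what this command is for, so in the corrupted regime the only safe count is the number of reported rules, which is why the paper bounds this set by $maxRules$ commands (this does not change your final $\bigO(maxRules\log N)$ figure, since that term absorbs it, but the intermediate claim as stated is not justified). Second, asserting that tags cost $\bigO(\log N)$ bits silently fixes a particular tag scheme; the paper keeps a parameter $c_{tag}$ and only then notes that the tags of Alon et al.\ need $\bigO(\log N)$ bits, which is the cleaner way to state it.
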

\begin{proof}
	The size of the messages sent differs during and after the recovery period.
	Algorithm~\ref{alg:selfStabCode} involves messages sent from a controller to any other node and their subsequent replies to the controller. A message from a controller to a switch is a set of commands $msg$ initialized to the empty set in line~\ref{ln:newRoundfalse}. Commands are appended in $msg$ in lines~\ref{ln:msgM}, \ref{ln:delAllRules}, and \ref{ln:msgL}, before a controller appends two more commands to $msg$ (line~\ref{ln:sendUpdToSwitch}) and sends it to a switch. We denote with $msg_{\ref{ln:msgM}}$, $msg_{\ref{ln:delAllRules}},$ $msg_{\ref{ln:msgL}}$ the sets of commands appended to $msg$ in the respective lines. Thus, $|msg| = |msg_{\ref{ln:msgM}}| + |msg_{\ref{ln:delAllRules}}| + |msg_{\ref{ln:msgL}}| + \bigO(\log c_{tag})$ bits, where $|msg_{x}|$ refers to the message size due to line x and $c_{tag}$, is the maximum size of a tag. Note that when using tags based on the ones in~\cite{DBLP:journals/jcss/AlonADDPT15}, $\bigO(\log(N))$ bits are needed, whereas using the ones by Awerbuch et el.~\cite{DBLP:journals/tdsc/AwerbuchKMPV07,D2K} requires $\bigO(1)$ bits.
	
	We now calculate the size of each $msg_x$, for each line $x$ mentioned above, following the analysis of the current section. Recall from Section~\ref{sec:SwitchesAndRules} that the size of a single rule is in $\bigO(\log N_C + \log N_S + \log n_{prt} + \log c_{tag})$ bits, where $n_{prt} \geq \Delta + 1$ suffices for expressing all rules. A command in $msg_{\ref{ln:msgM}}$, $msg_{\ref{ln:delAllRules}}$, and $msg_{\ref{ln:msgL}}$ has size in $\bigO(\log N_C + \log N_S)$, $\bigO(\log N_C + \log N_S)$, and respectively, in $\bigO((N_C + N_S -1)n_{prt}(\log N_C + \log N_S + \log n_{prt} + \log c_{tag}))$ bits. During recovery the following hold for the product of cardinality with command size for each set: $|msg_{\ref{ln:msgM}}| \in \bigO(maxManagers\cdot (\log N_C+ \log N_S))$,  $|msg_{\ref{ln:delAllRules}}| \in \bigO(maxRules\cdot (\log N_C + \log N_S))$, $|msg_{\ref{ln:msgL}}| \in \bigO((N_C + N_S -1)n_{prt}(\log N_C + \log N_S + \log n_{prt} + \log c_{tag})))$. Similarly, during a legal execution the following hold: $|msg_{\ref{ln:msgM}}| \in \bigO(\log N_C + \log N_S))$,  $|msg_{\ref{ln:delAllRules}}| = 0$  $|msg_{\ref{ln:msgL}}| \in \bigO((N_C + N_S -1)n_{prt}(\log N_C + \log N_S + \log n_{prt} + \log c_{tag})))$. Summing up, during recovery $|msg| \in \bigO((maxRules+maxManagers)(\log N_C+\log N_S) + (N_C + N_S -1)n_{prt}(\log N_C + \log N_S + \log n_{prt} + \log c_{tag})))$ and during a legal execution $|msg| \in \bigO((\log N_C+\log N_S) + (N_C + N_S -1)n_{prt}(\log N_C + \log N_S + \log n_{prt} + \log c_{tag})))$.
	
	We now turn to calculate the message size for a query response. Since the query response of a switch has a larger size than the one of a controller (by definition), we present only the case of switches. During recovery, a switch query response has size in $\bigO(\log N_S + \Delta(\log N_S + \log N_C) +  maxManagers\log N_C  + maxRules (\log N_C + \log N_S + \log n_{prt} + \log c_{tag}))$ bits, while a legal execution the response size is in $\bigO(\log N_S + \Delta(\log N_S + \log N_C) +  N_C\log N_C + (N_C + N_S -1)n_{prt}(\log N_C + \log N_S + \log n_{prt} + \log c_{tag}))$ bits, where $\Delta$ is the maximum degree.
\end{proof}

The proof of Lemma~\ref{thm:message} reveals that the proposed solution is communication adaptive~\cite{DBLP:journals/tpds/DolevS03}, because after stabilization the messages size is reduced.

\Subsection{Bounding the Number of Illegitimate Deletions}
\label{sec:illegiDel}

We consider another kind of event that might delay recovery (Definition~\ref{def:illdel}) and prove that it can occur a bounded number of times. 
Recall that $\Delta_{comm}$ is the number of frames in which the end-to-end protocol stabilizes (Section~\ref{sec:transportModel}) and $\Delta_{synch}$ the number of frames in which the round synchronization mechanism stabilizes (Section~\ref{s:refiningModel}).

\begin{definition}[Illegitimate deletions]
	\label{def:illDel}
	A switch $p_j$ performs an illegitimate deletion when it removes a non-failing controller $p_\ell \in P_C$ from its manager set (or its rules), due to a command that it received from another controller $p_k\in P_C$.
	\label{def:illdel}
\end{definition}

\begin{theorem}[Bounded number of illegitimate deletions]
	\label{thm:boundedResetsIllegitimateDeletions}
	Let $a_{x_k} \in R$ be the $k$-th step in which controller 
	$p_i \in P_C$ executes lines~\ref{ln:ifAllNetDiscoveredStart}--\ref{ln:ifAllNetDiscoveredEnd} during execution $R$.
	Suppose that $R$ includes at least 
	$((\Delta_{comm}+\Delta_{synch})D+1)$ such 
	$a_{x_k}$ steps, where $D$ is the network diameter. 
	Let $R'$ be a prefix of $R=R'\circ R''$ that includes the 
	steps $a_1, \ldots, a_{x_{(\Delta_{comm}+\Delta_{synch})D+1}} 
	\in R'$ and $R''$ be the matching suffix. 
	Controller $p_i$ does not take steps $a_{s'_k} \in R''$ that send a message $m_k$ to $p_j \in P_S$, such that $p_j$ performs an illegitimate deletion (Definition~\ref{def:illDel}) upon receiving $m_k$.
\end{theorem}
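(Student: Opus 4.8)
The plan is to first pin down the exact syntactic trigger of a deletion command and then show that this trigger can persist only while $p_i$'s stored topology is still incomplete. Inspecting lines~\ref{ln:rm}--\ref{ln:delAllRules}, a command that removes a controller $p_\ell$ (either $\langle \text{`}delMngr\text{'}, \ell \rangle$ or $\langle \text{`}delAllRules\text{'}, \ell \rangle$) is prepared for a switch $p_j$ exactly when $p_\ell \notin M$, where $M$ is the set computed in line~\ref{ln:rm}. For a non-failing $p_\ell$ whose stored reply is consistent, this exclusion forces both $newRound_i = true$ and $p_i \not\rightarrow_{G(res(prevTag_i))} p_\ell$. Hence every illegitimate deletion that $p_i$ triggers is contained in a step that also executes lines~\ref{ln:ifAllNetDiscoveredStart}--\ref{ln:ifAllNetDiscoveredEnd}, i.e.\ one of the new-round steps $a_{x_k}$, and it is caused by the topology $G(res(prevTag_i))$ failing to reach a genuinely reachable node. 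It therefore suffices to prove that once $p_i$ has performed $(\Delta_{comm}+\Delta_{synch})D+1$ new-round steps, $p_i \rightarrow_{G(res(prevTag_i))} p_\ell$ holds for every $p_\ell$ that is reachable from $p_i$ in $G_c$.

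The engine of the argument is a distance-layered induction carried by Claim~\ref{thm:whenDone} and Lemma~\ref{thm:whenAfterDone}. First I would invoke Claim~\ref{thm:whenDone}: a new-round step occurs only after the if-condition of line~\ref{ln:ifAllNetDiscovered} holds, i.e.\ every node $p_i$ currently believes reachable has replied with tag $currTag_i$; so between two consecutive $a_{x_k}$ steps $p_i$ completes one full query/reply sweep of its believed topology, while line~\ref{ln:stale} flushes every reply that is stale or comes from a now-unreachable sender and reinstates the self-record carrying $N_c(i)$. For the induction on the distance $k$ from $p_i$, the base case $k=1$ uses that line~\ref{ln:stale} reinstates $N_c(i)$ every iteration, so $p_i$'s view of its direct neighbours is correct; the only delay is draining the at most $\Delta_{comm}$ spurious acknowledgments from the channels and letting the synchronizer settle within $\Delta_{synch}$ rounds, giving correctness of the distance-$1$ layer within $\Delta_{comm}+\Delta_{synch}+1$ new rounds. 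For the inductive step, once the distance-$\le k$ layer is correct and stable, the replies returning from the distance-$k$ nodes carry their true neighbourhoods, which expose exactly the distance-$(k+1)$ nodes; these are stored by line~\ref{ln:rcv}, queried in the next sweep (line~\ref{ln:sendUpdToSwitch}), and after again flushing stale channel content and re-stabilizing the synchronizer the distance-$(k+1)$ layer becomes correct within another $\Delta_{comm}+\Delta_{synch}+1$ rounds --- this is precisely the counting of Lemma~\ref{thm:whenAfterDone}.

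Summing the per-layer costs over the at most $D$ distance layers yields the bound of the theorem: after at most $(\Delta_{comm}+\Delta_{synch})D+1$ new-round steps, $G(res(prevTag_i))$ reaches every node of $G_c$ reachable from $p_i$, and the stored replies are fresh enough that a non-failing manager always appears together with its rules. From that point $p_\ell \notin M$ can no longer hold for a non-failing $p_\ell$, so $p_i$ prepares no further $delMngr$/$delAllRules$ command against such $p_\ell$, and no message $m_k \in R''$ sent by $p_i$ can induce an illegitimate deletion. I expect the main obstacle to be the inductive-step bookkeeping: showing rigorously that stale, corrupt, or phantom entries injected by the initial arbitrary state cannot keep the distance-$(k+1)$ layer wrong for more than $\Delta_{comm}+\Delta_{synch}+1$ rounds. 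The delicate point is the coupling between two independent stabilizing sub-protocols --- the end-to-end channel, which may deliver up to $\Delta_{comm}$ false acknowledgments before a message truly completes a round trip, and the tag synchronizer, which may need $\Delta_{synch}$ rounds before $nextTag()$ returns genuinely unique tags --- so one must verify that the tag filtering in lines~\ref{ln:rcv} and~\ref{ln:stale} does not discard legitimate distance-$(k+1)$ replies once both sub-protocols have locally settled, and that a completed round forces at least one fresh, non-spurious reply from each newly reachable node.
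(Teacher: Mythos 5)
Your overall strategy---counting new-round steps, characterizing the condition of line~\ref{ln:ifAllNetDiscovered} as a completed query/reply sweep of the believed-reachable set, and running a distance-layered induction in which each layer pays the stabilization costs $\Delta_{comm}$ (end-to-end channel) and $\Delta_{synch}$ (tag synchronizer)---is exactly the paper's proof (Claim~\ref{thm:whenDone} plus Lemma~\ref{thm:whenAfterDone}). However, two of your steps do not hold up as written. First, your reduction rests on a false claim about the trigger: it is not true that every deletion command is prepared only in a step that executes lines~\ref{ln:ifAllNetDiscoveredStart}--\ref{ln:ifAllNetDiscoveredEnd}. In line~\ref{ln:rm}, a stored manager $p_\ell$ is excluded from $M$ whenever the stored reply contains no rule with $r.cID=\ell$, \emph{independently} of $newRound$; so a corrupted or stale entry of $replyDB_i$ (which the arbitrary initial state permits) makes $p_i$ prepare $\langle \text{`}delMngr\text{'},\ell \rangle$ (or $\langle \text{`}delAllRules\text{'},\ell \rangle$, when rules are present but the manager entry is absent) in every iteration of the do-forever loop, not only in new rounds. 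Consequently, proving $p_i \rightarrow_{G(res(prevTag_i))} p_\ell$ for all genuinely reachable $p_\ell$ does not suffice, contrary to your ``it therefore suffices'' step; what must be proven is what your final paragraph gestures at and what the paper proves directly: that after the stated prefix every reply in $replyDB_i$ is the acknowledgment of a genuine completed round-trip (legality of both sub-protocols), so stored replies coincide with the nodes' actual configurations and no deletion trigger can fire against a non-failing controller in \emph{any} subsequent step. Your induction does establish this stronger property, so the flaw sits in the reduction, not in the engine of the argument.

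Second, your bookkeeping overshoots the theorem's bound. You charge $\Delta_{comm}+\Delta_{synch}+1$ new rounds per distance layer; summed over $D$ layers this gives $D(\Delta_{comm}+\Delta_{synch})+D$, which is strictly larger than the claimed $(\Delta_{comm}+\Delta_{synch})D+1$ whenever $D>1$, so the theorem as stated does not follow from your accounting. In the paper the ``$+1$'' is a one-time charge: it covers the possibility that the very first step $a_{x_1}$ fires spuriously because the arbitrary initial state already satisfies the condition of line~\ref{ln:ifAllNetDiscovered}, while each distance layer costs exactly $\Delta_{comm}+\Delta_{synch}$ additional new-round steps. Repairing your count to charge the extra round once, globally, recovers the stated prefix length.
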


\begin{proof}
	This proof uses Claim~\ref{thm:whenDone} and 
	Lemma~\ref{thm:whenAfterDone}. 
	Theorem~\ref{thm:boundedResetsIllegitimateDeletions} follows by 
	the case of $k\geq D$ for Lemma~\ref{thm:whenAfterDone} 
	and then applying Part (ii) of Claim~\ref{thm:whenDone}.
	
	\begin{claim}
		\label{thm:whenDone}
		(i) The condition in the if-statement of line~\ref{ln:ifAllNetDiscovered} 
		holds if, and only if, $V_{reported} = V_{reporting}$, where 
		$V_{reported} =\{p_k : \exists_{\langle j, N_c(j), \bullet, rls 
			\rangle \in replyDB_i}\, ((k=j \lor p_k \in N_c(j)) \land 
		\exists \langle i, j_k, \bullet, currTag_i \rangle \in rls) \}\cup 
		\{\langle i, N_c(i), \emptyset, \emptyset \rangle\}$ and
		$V_{reporting} =\{p_j : \langle j, \bullet, rls \rangle \in 
		replyDB_i \land (\exists \langle i, j_k, \bullet, 
		currTag_i \rangle \in rls)\}$. 
		(ii)~Suppose that every 
		node $p_j$ in $G_c$ has sent a response $\langle j, \bullet \rangle$ 
		to $p_i$. Suppose that $p_i$ stores these replies in 
		$replyDB_i$ together with $p_i$'s report about its directly 
		connected neighborhood, $\langle i, N_c(i), \emptyset, 
		\emptyset \rangle$, cf. lines~\ref{ln:resX} and~\ref{ln:stale}.
		In this case, the condition in the if-statement of 
		line~\ref{ln:ifAllNetDiscovered} holds.  
	\end{claim}

	\begin{proof}[Proof of Claim~\ref{thm:whenDone}]
		
		\noindent \textbf{The proof of Part (i).~~}
		The condition in the if-statement of line~\ref{ln:ifAllNetDiscovered} is
		$(\forall p_{\ell}$$: p_i \rightarrow_{G(res_i(currTag_i))} p_{\ell}$$\implies \langle \ell, \bullet \rangle \in res_i(currTag_i)$. 
		When $V_{reported} = V_{reporting}$ holds, the following two claims also hold by the definition of these sets (and vice versa):
		(a) $p_i$'s response is in $replyDB_i$, and 
		(b) for every node $p_j$ that was queried with tag $currTag_i$, such that before the query either $p_j$ had a response in $replyDB_i$ or a direct neighbor of $p_j$ had a response in $replyDB_i$, there exists a response from $p_j$ in $replyDB_i$ with rules that have the tag $currTag_i$.
		Hence, the condition in the if-statement of line~\ref{ln:ifAllNetDiscovered} is true.
		
		\noindent \textbf{The proof of Part (ii).~~}
		This is just a particular case in which $P=V_{reported} = V_{reporting}$.
	\end{proof}
	
	\begin{lemma}
		\label{thm:whenAfterDone}
		Let $p_{j_k} \in P$ be a node that is at distance $k$ from 
		$p_i$ in $G_c$, such that $p_{j_0}, p_{j_1}, \ldots, p_{j_k}$
		is any shortest path from $p_i$ to $p_{j_k}$ and $p_{j_0}=p_i$. 
		Let $c_{x_y} \in R$ be the system state that immediately 
		follows step $a_{x_y} \in \{a_{x_1}, \ldots$, $a_{x_{k\cdot(\Delta_{comm}+\Delta_{synch})+1}}\} \subset R'$. 
		\begin{enumerate}
			\item Let $\ell > k\cdot\Delta_{comm}+1$.  The system state 
			$c_{x_\ell}$ is legal with respect to the end-to-end protocol of the 
			channel between $p_i$ and $p_{j_k}$, and it holds that 
			$m=\langle j_k, \bullet \rangle$ is a message arriving from 
			$p_{j_k}$ through the channel to $p_i$, which is an acknowledgment 
			for $p_i$'s message to $p_{j_k}$. 
			
			\item Let $\ell > k\cdot(\Delta_{comm}+\Delta_{synch})+1$. 
			The system state $c_{x_\ell}$ is legal with respect to the round 
			synchronization protocol between $p_i$ and $p_{j_k}$. That is, 
			for any message $m=\langle j_k, \bullet, rls \rangle$ that arrives 
			from the channel from $p_{j_k}$ to $p_i$, it holds that $m \in 
			replyDB_i \land \exists_{r\in rls}\, r=\langle i, j_k, \bullet, 
			currTag_i \rangle $. Moreover, message $m$ is an 
			acknowledgement of a message $m'$ that $p_i$ has sent to 
			$p_{j_k}$ and together $m'$ and $m$ form a completed round-trip. 
		\end{enumerate}
	\end{lemma}
	\begin{proof}[Proof of Lemma~\ref{thm:whenAfterDone}]
		We note that the first step, $a_{x_1}$ could occur due to the 
		fact that the system starts in an arbitrary state in which the condition 
		of the if-statement of line~\ref{ln:ifAllNetDiscovered} holds, hence the addition of 1 in $k\cdot(\Delta_{comm}+
		\Delta_{synch})$. 
		The proof is by induction on $k > 0$. That is, we consider the steps in $a_{x_y} \in \{a_{x_1}, \ldots, a_{x_{k\cdot(\Delta_{comm}+	\Delta_{synch})+1}}\}$.
		
		\noindent \textbf{The base case of $k=1$.~~}
		Claim~\ref{thm:whenDone} says that the condition in the if-statement of line~\ref{ln:ifAllNetDiscovered} holds if, and only if, $V_{reported} = V_{reporting}$, where $\{\langle i, N_c(i)$, $\emptyset, \emptyset \rangle\} \subseteq V_{reported}$ (line~\ref{ln:resX}).  
		Therefore, for any $\ell>1$, we have that $a_{x_\ell} \in \{a_{x_2}, \ldots, a_{x_{k\cdot(\Delta_{comm}+\Delta_{synch}+1)+1}}\}$ implies that $\{\langle i, N_c(i), \emptyset, \emptyset \rangle\} \subseteq V_{reporting}$ holds immediately before $a_{x_\ell}$. 

		\begin{claim}
			\label{thm:whenAfterClear}
			Between $a_{x_{k-1}}$ and $a_{x_{k}}$, a message $\langle j_k, \bullet, rls \rangle: \exists_{r \in rls}\, r = \langle i, j_k, \bullet$, $currTag_i \rangle$ arrives from the channel from $p_{j_k} \in N_c(i)$ to $p_i$, which $p_i$ stores in $replyDB_i$, where $k\geq 1$.
		\end{claim}
		\begin{proof}[Proof of Claim~\ref{thm:whenAfterClear}]
			During the step $a_{x_{k-1}}$, controller $p_i$ removes any response $\langle j_k, \bullet, rls \rangle: \exists_{r \in rls}\, r = \langle i, j_k, \bullet, currTag_i \rangle$ (line~\ref{ln:ifAllNetDiscoveredEnd}) and the only way in which $\langle j_k, \bullet, rls \rangle: \exists_{r \in rls}\, r = \langle i, j_k, \bullet, currTag_i \rangle$ holds immediately before $a_{x_{k}}$ is the following. 
			Between $a_{x_{k-1}}$ and $a_{x_{k}}$, a message arrives through the channel from $p_{j_k} \in N_c(j_{k-1}):j_0=i$ to $p_i$, which $p_i$ stores in $replyDB_i$ (line~\ref{ln:rcv}). 
			This is true because no other line in the code that accesses $replyDB_i$ adds that message to $replyDB_i$ (cf. lines~\ref{ln:stale},~\ref{ln:ifAllNetDiscoveredEnd}, and~\ref{ln:rcv}).
		\end{proof}
		
		\noindent \textit{The proof of Part (1).~~}
		It can be the case the $p_i$ sends a message for which it receives a (false) acknowledgement from $p_{j_1}$, i.e., without having that message go through a complete round-trip. However, by $\Delta_{comm}$'s definition (Section~\ref{sec:transportModel}), that can occur at most $\Delta_{comm}$ times. 
		
		\noindent \textit{The proof of Part (2).~~}
		It can be the case that $p_i$ receives message $m$ from $p_{j_1}$ for which the following condition does not hold in $c_{j_1}$: $m =\langle \bullet, rls \rangle \in replyDB_i \land \exists_{r \in rls}\, r=\langle i, j_k, \bullet, currTag_i \rangle$. However, by $\Delta_{synch}$'s definition (Section~\ref{sec:kappaFlowsAboveZeroConstructing}), that can occur at most $\Delta_{synch}$ times. The rest of the proof is implied by the properties of the round synchronization algorithm (Section~\ref{sec:kappaFlowsAboveZeroConstructing}).
		
		\noindent \textbf{The induction step.~~}
		Suppose that, within more than $(\Delta_{comm} k+1)$ and $((\Delta_{comm}+\Delta_{synch}) k+1)$ synchronization rounds from $R$'s starting state, the system reaches a state in which conditions (1), and respectively, (2) hold with respect to some $k\geq 1$. We show that in $c_{x_{\Delta_{comm}(k+1)+1}}$ and  $c_{x_{(\Delta_{comm}+\Delta_{synch})(k+1)+1}}$, conditions (1), and respectively, (2) hold with respect to $k+1$. 
		
		\noindent \textit{The proof of Part (1).~~}
		Claim~\ref{thm:whenDone} says that the condition in the if-statement of line~\ref{ln:ifAllNetDiscovered} holds if, and only if, $V_{reported} = V_{reporting}$. 
		By the induction hypothesis, condition (2) holds with respect to $k$ in $c_{x_{(\Delta_{comm}+\Delta_{synch})k+1}}$ and therefore $A(k+1) \cup \{\langle i, N_c(i), \emptyset, \emptyset \rangle\} \subseteq V_{reported}$, where $A(k)=\{ \langle j_{k'}, N_c(j_{k'}), \bullet, rls \rangle : 1 < k' \leq k \land \exists_{r \in rls}\, r = \langle i, j_{k'}, \bullet, currTag_i \rangle\}$. 
		Therefore, that fact that the step $a_{x_{(\Delta_{comm}+\Delta_{synch})(k+1)+2}} \in a_{x_2} \ldots a_{x_{k\cdot(\Delta_{comm}+\Delta_{synch}+1)+1}}$ implies that $A(k+1) \cup \{\langle i, N_c(i), \emptyset, \emptyset \rangle\} \subseteq V_{reporting}$ holds in the system state that appears in $R$ immediately before the step $a_{x_{(\Delta_{comm}+\Delta_{synch})(k+1)+2}}$. 
		Claim~\ref{thm:whenAfterClear} implies the rest of the proof.   
		
		\noindent \textit{The proof of Part (2).~~}
		The proof here follows by similar arguments to the ones that appear in the proof of item (2) of the base case.
\end{proof}\end{proof}

Part (iii) of Lemma~\ref{thm:boundedControllerMemory} and Theorem~\ref{thm:boundedResetsIllegitimateDeletions} imply Corollary~\ref{thm:boundDelReset}. 

\begin{corollary}
	\label{thm:boundDelReset}
	Any execution $R$ of Algorithm~\ref{alg:selfStabCode} includes no more than $N_C$ C-resets (Lemma~\ref{thm:boundedControllerMemory}) and $((\Delta_{comm}+\Delta_{synch}) D+1)\cdot N_S$ illegitimate deletions (Theorem~\ref{thm:boundedResetsIllegitimateDeletions}).
\end{corollary}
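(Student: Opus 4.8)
The statement is labelled a corollary, and accordingly the plan is to assemble it from the two results it cites rather than to rerun any induction: Part~(iii) of Lemma~\ref{thm:boundedControllerMemory} supplies the C-reset bound at the granularity of a single controller, and Theorem~\ref{thm:boundedResetsIllegitimateDeletions} supplies the illegitimate-deletion bound at the granularity of a single controller's activity window. What remains is to lift both per-controller facts to global counts over the whole execution $R$ by a union/summation argument, and to treat the degenerate case in which $R$ is too short to contain the prefixes the theorem names.

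For the first bound I would argue as follows. Part~(iii) of Lemma~\ref{thm:boundedControllerMemory} states that each controller $p_i$ takes at most one step in all of $R$ that executes the C-reset branch (line~\ref{ln:checkResponsesSize} with a satisfied guard), irrespective of the arbitrary initial state. Because the interleaving model serialises steps, C-resets performed by distinct controllers are distinct steps, so the total number of C-resets in $R$ is the sum over controllers of the per-controller counts, each of which is at most one. Since there are at most $N_C$ controllers, this sum is at most $N_C$, which is exactly the claimed bound; no further work is needed beyond noting that the lemma's conclusion holds uniformly and independently for every controller.

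For the second bound I would invoke Theorem~\ref{thm:boundedResetsIllegitimateDeletions}, applied to each controller in turn, and then count the resulting deletion events. The theorem localises every deletion-causing message of a controller $p_i$ to the prefix $R'$ that closes with $p_i$'s $((\Delta_{comm}+\Delta_{synch})D+1)$-th execution of the new-round lines~\ref{ln:ifAllNetDiscoveredStart}--\ref{ln:ifAllNetDiscoveredEnd}: in the matching suffix $R''$ no message of $p_i$ provokes an illegitimate deletion. By Definition~\ref{def:illDel}, an illegitimate deletion is produced only by the manager- and rule-removal commands assembled in the per-switch loop of lines~\ref{ln:forEachSwitch}--\ref{ln:delAllRules}, and line~\ref{ln:sendUpdToSwitch} emits at most one aggregated command batch per switch $p_j \in P_S$ per iteration of the do-forever loop. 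I would therefore charge illegitimate deletions to (switch, new-round-step) pairs inside the named window: at most one per switch per new-round step, with at most $((\Delta_{comm}+\Delta_{synch})D+1)$ such steps in the window and at most $N_S$ switches, which is where the product $((\Delta_{comm}+\Delta_{synch})D+1)\cdot N_S$ originates.

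The main obstacle is exactly this aggregation: reconciling the per-controller, prefix-confined statement of the theorem with the global switch-indexed product $((\Delta_{comm}+\Delta_{synch})D+1)\cdot N_S$. Two points demand care. First, the theorem presupposes that $R$ contains at least $((\Delta_{comm}+\Delta_{synch})D+1)$ new-round steps of the controller in question; if a controller executes fewer, its entire deletion activity already lies inside the allotted window, so the bound holds \emph{a fortiori} and the short-execution case is harmless. Second---and this is the delicate accounting step---one must decide whether the count is taken per controller or per switch and ensure the $N_S$ factor (rather than an $N_S\cdot N_C$ factor) is the right one; the clean route is to bound the number of illegitimate deletions suffered by each individual switch using the convergence window of the theorem and then sum over the at most $N_S$ switches, so that each switch-level deletion event of Definition~\ref{def:illDel} is counted once. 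Verifying that this per-switch count is indeed $((\Delta_{comm}+\Delta_{synch})D+1)$---and is not inflated by the number of distinct controllers that may target the same switch---is the crux I would spend the most effort on.
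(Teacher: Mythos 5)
Your C\nobreakdash-reset half is correct and is precisely the paper's argument: Part~(3) of Lemma~\ref{thm:boundedControllerMemory} caps each controller at one C-reset over all of $R$, and since steps of distinct controllers are distinct in the interleaving model, summing over the at most $N_C$ controllers gives the bound $N_C$. The deletion half, however, has a genuine gap, and it is exactly the verification you defer in your last sentence: the count that Theorem~\ref{thm:boundedResetsIllegitimateDeletions} actually supports carries an extra factor of $N_C$. The theorem is a per-controller statement: for each $p_i$, the deletion-provoking messages of $p_i$ lie in a prefix ending at $p_i$'s $((\Delta_{comm}+\Delta_{synch})D+1)$-th execution of lines~\ref{ln:ifAllNetDiscoveredStart}--\ref{ln:ifAllNetDiscoveredEnd}, and each such step of $p_i$ emits at most one command batch per switch (line~\ref{ln:sendUpdToSwitch}). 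Charging deletions to pairs of (deletion-instructing step of $p_i$, switch) therefore yields at most $((\Delta_{comm}+\Delta_{synch})D+1)\cdot N_S$ events \emph{per controller}, hence $N_C\cdot((\Delta_{comm}+\Delta_{synch})D+1)\cdot N_S$ in total. Your proposed per-switch repair does not avoid this: by Definition~\ref{def:illDel} an illegitimate deletion at switch $p_j$ is attributed to a command of some controller $p_k$, and distinct controllers can each, within their own (non-coinciding) prefixes, independently hold stale topologies in which the same non-failing controller $p_\ell$ appears unreachable and each command its removal at $p_j$. Nothing in the theorem synchronizes the different controllers' windows, so the per-switch count is indeed inflated by the number of controllers targeting that switch---the crux you flagged cannot be verified as stated.

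You should also know that the paper itself does not close this step: its entire proof is the single sentence that Part~(iii) of Lemma~\ref{thm:boundedControllerMemory} and Theorem~\ref{thm:boundedResetsIllegitimateDeletions} imply the corollary, i.e., it promotes the per-controller product $((\Delta_{comm}+\Delta_{synch})D+1)\cdot N_S$ to a global count without supplying the missing aggregation argument. The bound the cited results rigorously justify is the one with the additional $N_C$ factor; using it instead would only enlarge the constant in the pigeonhole argument of Theorem~\ref{lem:staleResetIllegalDelete} (to $N_C\cdot((\Delta_{comm}+\Delta_{synch})D+1)\cdot N_S + N_C$) and would leave the qualitative self-stabilization result intact. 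So your instinct about where the difficulty lies is sound, but a complete proof must either accept the weaker (larger) constant or supply an argument---absent from both your proposal and the paper---showing that the controllers' deletion windows can be counted jointly rather than summed.
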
 

\Subsection{Recovery from transient faults}
\label{sec:recover}
In this section we prove that Algorithm~\ref{alg:selfStabCode} is self-stabilizing.
Lemma~\ref{thm:propagation} shows that (under some conditions, such as reset freedom) controller $p_i$ eventually discovers the local topology of a switch $p_{j_k}$ that is at distance $k$ from $p_i$ in the graph $G_c$. 
This means that $p_i$ has all the information that its needs for constructing (at least) a $0$-fault-resilient flow to $p_{j_k}$ and discover any switch $p_{j_{k+1}}\in N_c(p_{j_k})$ that is at distance $k+1$ from $p_i$.    
Then, Lemma~\ref{thm:noDelMnger} shows that, within a bounded number of frames, no stale information exists in the system.
Theorem~\ref{lem:staleResetIllegalDelete} combines Corollary~\ref{thm:boundDelReset} and Lemma~\ref{thm:noDelMnger} to show that, within a bounded number of frames, the system reaches a legitimate state from which only a legal execution may continue.

We start by giving some necessary definitions.
Let $G_i$ be the value of $G(referTag_i)$ (line~\ref{ln:GDisGresprevTag}) that controller $p_i \in P_C$ computes in a step $a_x\in R$. 
We say that there is a path between $p_i \in P$ and $p_j \in P$, 
when there exist $p_{j_0}, p_{j_1}, \ldots, p_{j_k} \in P$, 
such that (1) $p_{j_0}=p_i$, (2) $p_{j_k}=p_j$, (3) $p_{j_1}, \ldots, p_{j_{k-1}} \in P_S$, 
and (4) the rules installed by a controller $p_\ell\in P_C$ at the switches in $p_{j_1}, \ldots, p_{j_{k-1}}$ 
(and also $p_i$ or $p_j$ if they are also switches) forward packets from $p_i$ to $p_j$ as well as from $p_j$ to $p_i$ (when the respective links are operational).
We say that two nodes $p_i \in P$ and $p_j \in P$ \textit{can exchange packets}, when there is a path between $p_i$ and $p_j$.
Moreover, we say that the rules installed in the switches $p_s \in P_S$ \textit{facilitate $\kappa$-fault-resilient flows between $p_i$ and $p_j $}, if at the event of at most $\kappa$ link failures there exists a path between $p_i$ and $p_j$.
Let $p_x$ and $p_y$ be two nodes in $P$ and recall that we assume that every node $p_z\in P$ has a fixed ordering of its neighbors, i.e., $N_c(z) = \{p_{i_1}, \ldots, p_{i_{|N_c(z)|}}\}$. 
We define the \textit{first shortest path} between $p_x$ and $p_y$ to be the shortest path between $p_x$ and $p_y$ that includes the nodes with minimum indices according to the neighborhood orderings (among all the shortest paths between these two nodes).

\begin{lemma}
	\label{thm:propagation}
	Let $p_i \in P_C$ be a controller and $p_{j_k} \in P$ be a node in $P$ that is at distance $k$ from $p_i$ in $G_c$, such that $p_{j_0}, p_{j_1}, \ldots, p_{j_k}$ is the first shortest path from $p_i$ to $p_{j_k}$ and $p_{j_0}=p_i$ in $G_c$.
	Suppose that C-resets (Lemma~\ref{thm:boundedControllerMemory}) and illegitimate deletions (Theorem~\ref{thm:boundedResetsIllegitimateDeletions}) do not occur in $R$.
	For every $k \geq 0$, and any system state that follows the first $((\Delta_{comm}+\Delta_{synch})+2)k$ frames from the beginning of $R$, the following hold.
	\begin{enumerate}
		\item $\langle j_k, N_c(j_k), manager_i(j_k)$, $rules_i(j_k) \rangle \in res_i(prevTag_i)$, where $N_c(j_k)$, $manager_i(j_k)$, and $rules_i(j_k)$ are $p_{j_k}$'s neighborhood, managers, and respectively, rules that $p_i$ has received from $p_{j_k}$. Moreover, for the case of controller $p_{j_k} \in P_C$, it holds that $manager(j_k) = \emptyset \land rules(j_k) = \emptyset$.
		\item $p_i \in manager_{j_k}(j_k)$. 
		\item the rules in $rules_{j_0}({j_0}), rules_{j_1}({j_1}), \ldots, rules_{j_k}({j_k})$ facilitate packet exchange between $p_i$ and $p_{j_k}$ along $p_{j_0}, p_{j_1}, \ldots, p_{j_k}$ (when the respective links are operational). 
		\item The end-to-end protocol as well as the round synchronization protocol between $p_i$ and $p_{j_k}$ are in a legitimate state.
	\end{enumerate}
\end{lemma}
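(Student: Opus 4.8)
The plan is to prove all four conditions simultaneously by induction on the distance $k$, exploiting the fact that the ``+2'' in the per-hop cost accounts for one frame to push the flow forward by one hop and one frame to fold the returning reply into $res_i(prevTag_i)$, while the $\Delta_{comm}+\Delta_{synch}$ term absorbs the stabilization of the newly used channel and of the round-synchronization mechanism. The base case $k=0$ is essentially free: here $p_{j_0}=p_i$, and by the definition of the macro $res$ (line~\ref{ln:resX}) together with the self-refresh in line~\ref{ln:stale}, the record $\langle i, N_c(i), \emptyset, \emptyset\rangle$ belongs to $res_i(prevTag_i)$ in \emph{every} state, which is exactly item~1 for a controller (empty manager and rule sets as required). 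Items~2 and~3 are vacuous for the zero-length path and item~4 is trivial for the channel of $p_i$ with itself, so nothing needs $0$ frames of work.

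For the inductive step I would assume the four conditions hold for $p_{j_k}$ after $F_k := ((\Delta_{comm}+\Delta_{synch})+2)k$ frames and argue they hold for $p_{j_{k+1}}$ after $F_{k+1}=F_k+(\Delta_{comm}+\Delta_{synch})+2$ frames. First I would use item~1 of the hypothesis: since $p_i$ holds $p_{j_k}$'s correct neighborhood $N_c(j_k)$ in $res_i(prevTag_i)$ and $p_{j_{k+1}}\in N_c(j_k)$ lies on the first shortest path, the node $p_{j_{k+1}}$ and the edge $(j_k,j_{k+1})$ appear in $p_i$'s accumulated topology $G(\cdot)$ (line~\ref{ln:G}), so $p_i \rightarrow_G p_{j_{k+1}}$. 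Combining this with items~2 and~3 of the hypothesis (that $p_i$ already manages $p_{j_k}$ and has an operational path to it), the next complete iteration of $p_i$ computes $myRules(G(res(referTag)),j_k,currTag)$ (line~\ref{ln:msgL}) encoding the flow extended by one hop, sends it along the established path, and—if $p_{j_{k+1}}$ is a switch—also emits the $addMngr$ command (line~\ref{ln:msgM}) and a query (line~\ref{ln:sendUpdToSwitch}). Because no illegitimate deletion removes these rules (hypothesis), this single frame secures item~3 for $k+1$ and, once the command is delivered, item~2 (for a controller $p_{j_{k+1}}$ items~2 and~3 are trivial, matching Definition~\ref{def:legitimateState}).

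Next I would let the newly-used end-to-end channel between $p_i$ and $p_{j_{k+1}}$ stabilize: within $\Delta_{comm}$ frames it reaches a legal state in which messages complete genuine round-trips (Section~\ref{sec:transportModel}), and within a further $\Delta_{synch}$ frames the round-synchronization protocol becomes legal so that $p_i$ accepts only replies carrying $currTag_i$ (Section~\ref{s:refiningModel}); this is precisely the per-hop behavior already extracted in Lemma~\ref{thm:whenAfterDone}, and it yields item~4 for $k+1$. After one more frame, $p_{j_{k+1}}$'s reply (carrying the matching tag, and with empty manager/rule fields if $p_{j_{k+1}}$ is a controller by line~\ref{ln:qryCon}) is stored in $replyDB_i$ via line~\ref{ln:rcv}, is not discarded since no C-reset occurs (hypothesis, Lemma~\ref{thm:boundedControllerMemory}), and upon the next round closure (lines~\ref{ln:ifAllNetDiscoveredStart}--\ref{ln:ifAllNetDiscoveredEnd}) migrates from $currTag$ into $prevTag$, establishing item~1 for $k+1$. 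Summing the frames gives $F_{k+1}$ as claimed.

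The hard part will be the frame bookkeeping of the last stage rather than the flow construction. One must argue that a round actually \emph{closes} within the claimed budget even though the discovery frontier expands by a hop each time: the condition of line~\ref{ln:ifAllNetDiscovered} must eventually become true so that the fresh reply about $p_{j_{k+1}}$ lands under $prevTag_i$ exactly as item~1 demands. I would handle this by invoking the reachability characterization $V_{reported}=V_{reporting}$ of Claim~\ref{thm:whenDone} and the absence of resets/illegitimate deletions to show that the set of discovered-and-reporting nodes grows monotonically and stabilizes once the whole $k$-neighborhood is known, so the round-completion test cannot be postponed indefinitely. A secondary delicate point is consistency of the $myRules$ output across the path, which I would settle by appealing to the unambiguity and $\kappa$-fault-resilience guarantees of Section~\ref{sec:kappaFlowsAboveZeroConstructing} so that item~3 indeed yields a single well-defined forwarding path rather than conflicting rules.
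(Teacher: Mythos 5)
Your overall strategy is the paper's: induction on the distance $k$, with the per-hop budget split as one frame for discovery, one frame for installing rules and manager status, and $\Delta_{comm}+\Delta_{synch}$ frames for the two protocols to stabilize (your trivial base case $k=0$ versus the paper's three-claim base case $k=1$ is an immaterial difference). The first genuine gap is in the ordering inside your inductive step. You have $p_i$ emit the \emph{addMngr} command and rule update for $p_{j_{k+1}}$ in the first frame, \emph{before} any reply from $p_{j_{k+1}}$ has been received, and you establish item~1 last. Algorithm~\ref{alg:selfStabCode} cannot do this: the loop of line~\ref{ln:forEachSwitch} prepares configuration commands only for switches $p_j$ with $\exists_{m \in res(referTag)}\, m.ID = j$, so a switch from which no reply is stored never gets \emph{addMngr}/\emph{updateRules} commands. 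The order forced by the algorithm is the one the paper proves: first the query-by-neighbor functionality (Section~\ref{sec:scq}) is used to obtain and store $p_{j_{k+1}}$'s reply (condition~(1)); only in the next frame does $p_{j_{k+1}}$ enter the loop of line~\ref{ln:forEachSwitch}, so that $p_i$ can add itself as manager and install rules there via modify-by-neighbor (conditions~(2)--(3)); then the protocols stabilize (condition~(4)). Your frame arithmetic yields the same total, but the causal chain ``items 2--3 secured before the first reply'' is not executable by the algorithm, so the step as written is wrong.

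The second gap is your plan for what you call the hard part. You want the fresh reply to migrate into $res_i(prevTag_i)$ by arguing, via Claim~\ref{thm:whenDone}, that the round closes once the whole $k$-neighborhood is known. It does not: the test of line~\ref{ln:ifAllNetDiscovered} quantifies over all nodes reachable in $G(res(currTag))$, and the reply from a distance-$(k+1)$ node inserts its distance-$(k+2)$ neighbors into that graph, so the round cannot close until the \emph{entire} reachable network has replied with the current tag. That takes on the order of $D$ (not $k$) hops' worth of frames and destroys the per-hop accounting of your item~1. The paper never attempts this migration: its induction step establishes membership in $res_i(currTag_i)$, and the installation step works with $referTag = currTag$ (line~\ref{ln:GDisGresprevTag}), because during discovery $G(\dis) \neq G(res(prevTag))$; the $prevTag$ phrasing of the lemma statement is read loosely there. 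So the sub-proof you deferred is not merely delicate bookkeeping --- as planned it would fail, and the repair is to drop the per-hop $prevTag$ requirement (working under $currTag$/$referTag$ instead), not to prove that rounds close every hop.
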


\begin{proof}
	The proof is by induction on $k$.
	
	\noindent \textbf{The base case.~~}
	Claims~\ref{thm:base1holds}, \ref{thm:base2n3holds}, and~\ref{thm:base4holds} imply that the lemma statement holds for $k=1$. 
	
	\begin{claim}
		\label{thm:base1holds}
		Within one frame from $R$'s beginning, the system reaches a state in which condition (1) is fulfilled with respect to $p_i$ and any node that is in $p_i$'s distance-$1$ neighbors in $G_c$.
	\end{claim}
	\begin{proof}[Proof of Claim~\ref{thm:base1holds}]
		During the first frame (with round-trips) of $R$, controller $p_i$ starts and completes at least one iteration in which it sends a query (line~\ref{ln:sendUpdToSwitch}) to every node $p_{j_1} \in P$ that is in $p_i$'s distance-$1$ neighborhood in $G_c$ (this includes both switches, as we explain in Section~\ref{sec:scq}, as well as other controllers, which respond according to line~\ref{ln:qryCon}). 
		Moreover, during that first frame, $p_{j_1}$ receives that query and replies to $p_i$ (lines~\ref{ln:queryReturn}-\ref{ln:rcv}) within one step (Section~\ref{sec:interModel}).
		Thus, the first part of condition (1) is fulfilled, because controller $p_i$ then adds (or updates) the latest (query) replies that it received from these neighbors to $replyDB_i$. The second part of condition (1) is implied by the first part of condition (1) and by line~\ref{ln:qryCon}.
	\end{proof}
	
	\begin{claim}
		\label{thm:base2n3holds}
		Within two frames from the beginning of $R$, the system reaches a state in which conditions (2) and (3) are fulfilled with respect to $p_i$ and any node that is in $p_i$'s distance-$1$ neighbors in $G_c$.
	\end{claim}
	\begin{proof}[Proof of Claim~\ref{thm:base2n3holds}]
		This proof uses Claim~\ref{thm:base1holds} to prove this claim by first showing that within one frame from the beginning an execution in which condition (1) holds, the system reaches a state in which conditions (2) and (3) are fulfilled with respect to $p_i$ and any node $p_j \in N_c(i)$. 
		This indeed implies that conditions (2) and (3) are fulfilled within two frames of $R$ for $p_i$'s direct neighbors. 
		
		Let $R^\ast$ be a suffix of $R$ such that in $R^\ast$'s stating system state, it holds that condition (1) is fulfilled with respect to $p_i$ and any node that is in $p_i$'s distance-$1$ neighbors in $G_c$. 
		During the first frame (with round-trips) of $R^\ast$, 
		controller $p_i$ starts and completes at least one iteration (with round-trips) in which it is able to include $p_i$ in $p_j$'s manager set, $manager_j(j)$ (line~\ref{ln:rm} to~\ref{ln:delAllRules}) and to install rules at $p_j \in N_c(i)$ (line~\ref{ln:msgL}). We know that this installation is possible, because  $p_i$ is a direct neighbor of $p_j \in N_c(i)$ (Section~\ref{sec:scq}). Once these rules are installed, the packet exchange between $p_i$ and $p_j \in N_c(i)$ is feasible. 
		This implies that conditions (2) and (3) are fulfilled within one frame of $R^\ast$ (and two frames of $R$) for $p_i$'s direct neighbors. 
	\end{proof}
	
	\begin{claim}
		\label{thm:base4holds}
		Within $((\Delta_{comm}+\Delta_{synch})+2)$ frames from the beginning of $R$, the system reaches a state in which condition (4) is fulfilled with respect to $p_i$ and any node that is in $p_i$'s distance-$1$ neighbors in $G_c$.
	\end{claim}
	
	\begin{proof}[Proof of Claim~\ref{thm:base4holds}]
		Since conditions (2) and (3) hold within two frames with respect to $k=1$, controller $p_i$ and $p_{j_{1}}$ can maintain an end-to-end communication channel between them because the network part between $p_i$ and $p_{j_{1}}$ includes all the needed flows. By $\Delta_{comm}$'s definition (Section~\ref{sec:transportModel}), within $\Delta_{comm}$ frames, the system reaches a legitimate state with respect to the end-to-end protocol between $p_i$ and $p_{j_{1}}$. Similarly, by $\Delta_{synch}$'s definition (Section~\ref{sec:kappaFlowsAboveZeroConstructing}), within $\Delta_{synch}$ frames, the system reaches a legitimate state with respect to the round synchronization protocol between $p_i$ and $p_{j_{1}}$. Thus, condition (4) holds within $((\Delta_{comm}+\Delta_{synch})+2)$ frames from $R$'s beginning. 
	\end{proof}
	
	\noindent \textbf{The induction step.~~}
	Suppose that, within $((\Delta_{comm}+\Delta_{synch})+2)k$ frames from $R$'s starting state, the system reaches a state $c_x \in R$ in which conditions (1), (2), (3) and (4) hold with respect to $k$. We show that within $(\Delta_{comm}+\Delta_{synch})+2$ frames from $c_x$, the system reaches a state in which the lemma's statements hold with respect to $k+1$ as well. 
	
	\noindent \textbf{Showing that, within one frame from $c_x$, processor $p_i$ knows all of its distance-$(k+1)$ neighbors.~~}
	This part of the proof starts by showing that within one frame from $c_x$, execution $R$ reaches a state, such that $p_i \rightarrow_{G_i} p_j$ holds for every distance-$(k+1)$ neighbor of $p_i$ in $G_c$. 
	The system state $c_x$ encodes (packet forwarding) rules that allow $p_i$ to exchange packets with its distance-$k$ neighbors in $G_c$ (since by the induction hypothesis, conditions (3) and (4) hold with respect to $k$ in $c_x$).
	Moreover, $p_i$ stores in $res(prevTag_i)$ replies from $p_i$'s distance-$k$ neighbors in $G_c$ (since by the induction hypothesis, condition (1) holds for $k$ in $c_x$). 
	The latter implies that $p_i$ knows, as part of $G_i$ in $c_x$, all of its distance-$(k+1)$ neighbors, $\{p_k : \exists \langle j, N_c(j), \bullet \rangle \in res_i(prevTag_i) \land (k=j \lor k \in N_c(j,prevTag_i))\}$, since every reply of a distance-$k$ neighbor, $p_{j^\ast}$, in $G_c$ (which $res_i(prevTag_i)$ stores in $c_x$) includes $p_{j^\ast}$'s neighborhood.
	
	\noindent \textbf{Condition (1) holds with respect to $k+1$ within $((\Delta_{comm}+\Delta_{synch})+2)k+1$ frames.~~}
	Using the above we show that, within one frame from $c_x$, controller $p_i \in P_C$ queries all of its distance-$(k+1)$ neighbors (line~\ref{ln:sendUpdToSwitch}), receives their replies, and stores them in $replyDB_i$ (lines~\ref{ln:queryReturn}--\ref{ln:rcv}), i.e., $\langle j_{k+1}$, $N_c(j_{k+1}), manager_i(j_{k+1})$, $rules_i(j_{k+1}) \rangle \in res_i(currTag_i)$ for every distance-$(k+1)$ neighbor $p_{j_{k+1}}$ of $p_i$ in $G_i$. 
	Recall that $c_x$ encodes rules that let $p_i$ to forward packets with its distance-$k$ neighbors in $G_c$ (condition (3) holds for $k$ in $c_x$). 
	By the query-by-neighbor functionality (Section~\ref{sec:scq}), every such distance-$k$ neighbor reports on its direct neighbors (that include $p_i$'s distance-$(k+1)$ neighbors), which implies that it forwards the query message to $p_i$'s distance-$(k+1)$ neighbor as well as the reply back to $p_i$.
	Therefore, within ${((\Delta_{comm}+\Delta_{synch})+2)k+1}$ frames, the system reaches a state, $c_{x'}$, in which condition (1) holds with respect to $k+1$. 
	
	\noindent \textbf{Conditions (2) to (3) hold with respect to $k+1$ within $((\Delta_{comm}+\Delta_{synch})+2)k+2$ frames.~~}
	The next step of the proof is to show that within one frame from $c_{x'}$, the system reaches the state $c_{x''}$ in which conditions (2) and (3) hold with respect to $k+1$ (in addition to the fact that condition (1) holds). 
	By the functionality for querying (and modifying)-by-neighbor (Section~\ref{sec:scq}) and for every switch $p_j$ that is a distance-$(k+1)$ neighbor of $p_i$ in $G_c$, it holds that between $c_{x'}$ and $c_{x''}$: 
	(a) $p_i$ adds itself to the manager set $manager(j)$ of $p_j$ (line~\ref{ln:rm} to~\ref{ln:delAllRules}), and 
	(b) $p_i$ installs its rules in $p_j$'s configuration (line~\ref{ln:msgL}). (We note that for the case $p_j$ is another controller, there is no need to show that conditions (2) and (3) hold.)
	
	\noindent \textbf{Condition (4) holds for $k+1$ within $((\Delta_{comm}+\Delta_{synch})+2)(k+1)$ frames.~~} 
	The proof is by similar arguments to the ones that appear in the proof of Claim~\ref{thm:base4holds}.
	
	Thus, conditions (1), (2), (3), and (4) hold for $k+1$ within $((\Delta_{comm}+\Delta_{synch})+2)(k+1)$ frames in $R$ and the proof is complete.
\end{proof} 

Lemma~\ref{thm:noDelMnger} bounds the number of frames before the system reaches a legitimate system state.

\begin{lemma}
	\label{thm:noDelMnger}
	Let $R=R'\circ R''$ be an execution of Algorithm~\ref{alg:selfStabCode} that includes a prefix, $R'$, of $(\Delta_{comm}+\Delta_{synch})+2)D+1$ frames that has no occurrence of C-resets or illegitimate deletions. 
	(1) Any system state in $R''$ is legitimate (Definition~\ref{def:legitimateState}). 
	(2) Let $a_x \in R''$ be a step that includes the execution of the do-forever loop that starts in line~\ref{ln:stale} and ends in line~\ref{ln:sendUpdToSwitch}. 
	During that step $a_x$, the value of $msg_i$, which $p_i$ sends to $p_j \in P$ in line~\ref{ln:sendUpdToSwitch}, does not include the record $\langle \text{`}delMngr\text{'}, \bullet \rangle$ nor the record $\langle \text{`}delAllRules\text{'}, \bullet \rangle$, i.e., no deletions, whether they are illegitimate or not, of managers or rules. 
	(3) No controller $p_i$ takes a step in $R''$ during which the condition of line~\ref{ln:checkResponsesSize} holds, which implies that $p_i$ performs no C-reset during $R''$. 
\end{lemma}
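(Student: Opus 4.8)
The plan is to break the apparent circularity in the statement: Part~(1) asserts that every state of $R''$ is legitimate, yet the natural way to preserve legitimacy across an iteration is precisely Parts~(2) and~(3) (no deletions and no C-resets), which in turn are easiest to prove once a state is already legitimate. I would resolve this by first establishing that the \emph{first} state of $R''$ is legitimate, and then proving that legitimacy is an invariant of $R''$ by induction over its states, using Parts~(2) and~(3) as the engine of the inductive step.

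For the base of the induction I would invoke Lemma~\ref{thm:propagation}. Since $R'$ contains no C-resets and no illegitimate deletions, its hypotheses are met along $R'$, so for each $k\le D$ and every state following the first $((\Delta_{comm}+\Delta_{synch})+2)k$ frames, conditions (1)--(4) of Lemma~\ref{thm:propagation} hold for all distance-$k$ nodes. Because $D$ is the diameter of $G_c$, every node lies at distance at most $D$ from $p_i$, and because $R'$ has $((\Delta_{comm}+\Delta_{synch})+2)D+1$ frames---one more than the $k=D$ threshold requires---all four conditions hold for \emph{every} node at the first state of $R''$. These conditions deliver the positive directions of Definition~\ref{def:legitimateState}: each reachable node has a correct reply in $replyDB_i$ (condition~1), $p_i$ manages every switch (condition~2), the installed rules realise $\kappa$-fault-resilient flows (condition~3), and the end-to-end and round-synchronization protocols are stabilized (condition~4). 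The ``only if''/uniqueness directions (no spurious replies in $replyDB_i$, and no stale manager or rule of a failed controller surviving at a switch) I would obtain from the cleanup performed during $R'$: line~\ref{ln:stale} removes, every iteration, any reply from a sender unreachable in the \emph{correctly} discovered topology or carrying a wrong tag, while the new-round deletions (lines~\ref{ln:rm}--\ref{ln:delAllRules}) remove exactly the managers and rules of controllers unreachable in $G(res(prevTag))$. Once the topology is correctly known these removals target only genuinely absent controllers, and the ``no illegitimate deletions'' assumption guarantees no valid controller's data is touched. Together these give both directions of Definition~\ref{def:legitimateState}, so the first state of $R''$ is legitimate.

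For the inductive step I would fix a legitimate state of $R''$ and analyse one iteration of the do-forever loop; this simultaneously yields Parts~(2) and~(3) and shows legitimacy is preserved. For Part~(2): by condition~2 of Definition~\ref{def:legitimateState} the reply of each switch $p_j$ reports $Mng=P_C$, and by condition~3 every such controller holds rules on $p_j$; since the discovered topology equals $G_c$, the reachability test succeeds for every $p_k\in Mng$, so the set $M$ computed in line~\ref{ln:rm} satisfies $M=Mng$. Hence $Mng\setminus M=\emptyset$ (no $\langle\text{`}delMngr\text{'},\bullet\rangle$ is appended in line~\ref{ln:msgM}) and every controller holding rules lies in $M$ (no $\langle\text{`}delAllRules\text{'},\bullet\rangle$ is appended in line~\ref{ln:delAllRules}), which is exactly Part~(2). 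For Part~(3): by Lemma~\ref{thm:boundedControllerMemory}(2) a legitimate state has $|replyDB_i|\le 2(N_C+N_S)\le maxReplies$, and since the topology is fixed each incoming reply of the current round merely completes an already-accounted batch, so that $|replyDB_i\cup\{m\}|$ never exceeds $maxReplies$ when line~\ref{ln:checkResponsesSize} is evaluated; thus its condition is never true and no C-reset occurs, giving Part~(3). Finally, with no deletions and no C-resets, the iteration only refreshes the (already correct) rules and managers via line~\ref{ln:msgL} and re-queries all reachable nodes, so the successor state again satisfies Definition~\ref{def:legitimateState}; induction over the states of $R''$ then yields Part~(1).

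The main obstacle I anticipate is the careful handling of the ``only if'' directions of Definition~\ref{def:legitimateState} at the start of $R''$---ruling out residual stale replies, managers, and rules---since Lemma~\ref{thm:propagation} on its own supplies only the positive conditions. This requires arguing that, once the topology is correctly discovered during $R'$, the per-iteration stale removal (line~\ref{ln:stale}) and the new-round deletions (lines~\ref{ln:rm}--\ref{ln:delAllRules}) have, over the $((\Delta_{comm}+\Delta_{synch})+2)D+1$ clean frames, purged all information about absent controllers while, by the ``no illegitimate deletions'' hypothesis, preserving the data of every live controller. A secondary subtlety is that Lemma~\ref{thm:propagation} is stated for executions that are clean \emph{throughout}, so I must apply it to the clean prefix $R'$ and confine the appeal to frame indices not exceeding $((\Delta_{comm}+\Delta_{synch})+2)D$, all of which lie inside $R'$.
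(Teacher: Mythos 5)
Your proposal is correct and follows essentially the same route as the paper: both apply Lemma~\ref{thm:propagation} to the clean prefix to reach an almost-legitimate state, use the one extra frame (with the correctly discovered topology $G(res(currTag_i))=G_c$) to perform the new-round deletion of stale managers and rules of absent controllers, and then obtain Parts~(2) and~(3) as consequences of legitimacy (Part~(3) via Lemma~\ref{thm:boundedControllerMemory}). The only difference is organizational: the paper concludes that $R''$ is a legal execution directly from reaching a legitimate state, leaving the closure property implicit, whereas you make closure explicit by inducting over the states of $R''$ with Parts~(2) and~(3) as the inductive engine---a sound and arguably more careful presentation of the same argument.
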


\begin{proof}
	When comparing the conditions of Definition~\ref{def:legitimateState} and the conditions of Lemma~\ref{thm:propagation}, we see that Lemma~\ref{thm:propagation} guarantees that within $(\Delta_{comm}+\Delta_{synch})+2)D$ frames the system reaches a state $c_{almostSafe} \in R'$ in which all the conditions of Definition~\ref{def:legitimateState} hold except condition~\ref{def:legitimateState:manage} with respect to controllers $p_j \notin P_C$ that do not exist in the system (and their rules that are stored by the switches). 
	From condition~\ref{def:legitimateState:dis} of Definition~\ref{def:legitimateState}, we have that at each controller $p_i \in P_C$, it holds that $G(res(currTag_i))=G(\dis_i)=G_c$. 
	This implies that $p_i$ can identify correctly any stale information related to $p_j$ and remove it from configuration of every switch (see line~\ref{ln:forEachSwitch} to~\ref{ln:msgL}) that is in the system during the round that follows $c_{almostSafe}$, which takes one frame because condition~\ref{def:legitimateState:dis} of Definition~\ref{def:legitimateState} holds. 
	This means that within $(\Delta_{comm}+\Delta_{synch})+2)D+1$ frames the system reaches a legitimate state in which all the conditions of Definition~\ref{def:legitimateState} hold and thus $R''$ is a legal execution, i.e., the first part of the lemma holds. 
	Part (2) of this lemma is implied by the fact that there is no controller $p_j \notin P_C$ that the controller $p_i \in P_C$ needs to remove from the configuration of any switch during the legal execution $R''$. 
	Part (3) is implied by Part (3) of Lemma~\ref{thm:boundedControllerMemory} and the fact that $R''$ is a legal execution.
\end{proof}

\begin{theorem}[Self-Stabilization]
	\label{lem:staleResetIllegalDelete}
	Within $((\Delta_{comm}+\Delta_{synch})+2)D+1)[((\Delta_{comm}$ $+\Delta_{synch})D+1)\cdot N_S + N_C + 1]$ frames in $R$, the system reaches a state $c_{safe} \in R$ that is legitimate (Definition~\ref{def:legitimateState}). 
	Moreover, no execution that starts from $c_{safe} \in R$ includes a C-reset nor illegitimate deletion of managers or rules.
\end{theorem}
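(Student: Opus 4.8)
The plan is to combine the global counting bound of Corollary~\ref{thm:boundDelReset} with the local recovery guarantee of Lemma~\ref{thm:noDelMnger} through a pigeonhole argument. Write $L := ((\Delta_{comm}+\Delta_{synch}+2)D+1)$ for the window length that Lemma~\ref{thm:noDelMnger} requires, and $B := N_C + ((\Delta_{comm}+\Delta_{synch})D+1)N_S$ for the total number of disruptive events. Corollary~\ref{thm:boundDelReset} states that any execution of Algorithm~\ref{alg:selfStabCode} contains at most $N_C$ C-resets and at most $((\Delta_{comm}+\Delta_{synch})D+1)N_S$ illegitimate deletions, hence at most $B$ disruptive events in total throughout $R$.

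First I would partition the first $L(B+1)$ frames of $R$ into $B+1$ consecutive, disjoint windows, each of exactly $L$ frames. Since each C-reset and each illegitimate deletion occurs at a definite step and therefore falls inside exactly one of these windows, and since there are at most $B$ such events but $B+1$ windows, the pigeonhole principle guarantees that at least one window $W$ contains neither a C-reset nor an illegitimate deletion. Note that $W$ ends no later than frame $L(B+1) = ((\Delta_{comm}+\Delta_{synch}+2)D+1)(((\Delta_{comm}+\Delta_{synch})D+1)N_S + N_C + 1)$, which is precisely the claimed bound.

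Next I would apply Lemma~\ref{thm:noDelMnger} to the suffix $\hat R$ of $R$ that begins at the first frame of $W$. Writing $\hat R = \hat R' \circ \hat R''$ with $\hat R'$ equal to the $L$ frames comprising $W$, the hypothesis of Lemma~\ref{thm:noDelMnger} is met, since $W$ is a prefix of $L$ frames free of C-resets and illegitimate deletions. The lemma then yields that every state of $\hat R''$ is legitimate (Definition~\ref{def:legitimateState}); in particular the state $c_{safe}$ at the boundary between $\hat R'$ and $\hat R''$ (the end of $W$) is legitimate, and by the frame estimate above it is reached within the stated number of frames. For the \emph{moreover} clause I would invoke the definition of a legitimate state: every execution starting from $c_{safe}$ is a legal execution, and Parts (2) and (3) of Lemma~\ref{thm:noDelMnger} guarantee that no such execution performs a manager or rule deletion (illegitimate or otherwise) nor a C-reset.

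I expect the only delicate point to be the bookkeeping of indices: Lemma~\ref{thm:noDelMnger} requires the clean stretch to sit at the very start of the execution it analyzes, so I must be careful to apply it to the suffix $\hat R$ rather than to $R$ itself, and to confirm that the chosen window $W$ is genuinely the leading prefix $\hat R'$ of that suffix. The remaining verification, namely that the product $L(B+1)$ reproduces exactly the expression in the theorem statement, is a routine algebraic check with $B+1 = ((\Delta_{comm}+\Delta_{synch})D+1)N_S + N_C + 1$.
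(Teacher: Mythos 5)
Your proposal is correct and follows essentially the same route as the paper's proof: both combine the event bound of Corollary~\ref{thm:boundDelReset} with Lemma~\ref{thm:noDelMnger} via a pigeonhole argument over $((\Delta_{comm}+\Delta_{synch}+2)D+1)$-frame windows (the paper phrases it as the longest prefix containing no ``admissible'' clean sub-execution, while you partition into $B+1$ disjoint windows, but the counting is identical). Your explicit handling of the suffix $\hat R$ and the frame-boundary alignment is, if anything, a slightly cleaner write-up of the same argument.
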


\begin{proof}
	In this proof, we say that an execution $R_{adm}$ is admissible when it includes at least $((\Delta_{comm}+\Delta_{synch})+2)D+1$ frames and no C-reset nor an illegitimate deletion. 
	Let $R$ be an execution of Algorithm~\ref{alg:selfStabCode}.
	Let us consider $R$'s longest possible prefix $R'$, such that $R'$ does not include any sub-execution that is admissible, i.e., $R=R'\circ R''$.
	Recall that by Corollary~\ref{thm:boundDelReset} the prefix $R'$ has no more than $((\Delta_{comm}+\Delta_{synch})D+1)\cdot N_S + N_C$ C-resets or illegitimate deletions.
	By the pigeonhole principle, the prefix $R'$ has no more than
	$((\Delta_{comm}+\Delta_{synch})+2)D+1)[((\Delta_{comm}+\Delta_{synch})D+1)\cdot N_S + N_C+1]$ frames. 
	By Lemma~\ref{thm:noDelMnger}, $R''$ does not include C-resets nor deletions of managers or rules, and the system has reached a safe state, which is $c_{safe}$. 
\end{proof}

\Subsection{Returning to a legitimate state after topology changes}
\label{sec:post}
This part of the proof considers executions in which the system starts in a state $c'$, that is obtained by taking a system state $c_{safe}$ that satisfies the requirements for a legitimate system state (Definition~\ref{def:legitimateState}), and then applying a bounded number of failures and recoveries. 
We discuss the conditions under which no packet loss occurs when starting from $c'$, which is obtained from $c_{safe}$ and  (i) the events of up to $r$ link failures and up to $\ell$ link additions (Lemma~\ref{thm:kappaLink}), as well as, (ii) the events of up to $r$ controller failures and up to $\ell$ controller additions (Lemma~\ref{lem:nodeAdditionDeletion}).

\begin{lemma}
	\label{thm:kappaLink}
	Suppose that $c'$ is obtained from a legitimate system state $c_{safe}$ by the removal of at most $r$ links and the addition of at most $\ell$ links (and no further failures), and $R$ is an execution of Algorithm~\ref{alg:selfStabCode} that starts in $c'$. 
	It holds that no packet loss occurs in $R$ as long as $r \leq \kappa$ and $\ell \geq 0$. 
	For the case of $r\leq \kappa \land \ell \geq 0$ recovery occurs within $\bigO(D)$ frames, while for the case of $r>\kappa$ bounded communication delays can no longer be guaranteed.
\end{lemma}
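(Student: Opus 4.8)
The plan is to split the statement into three parts and to treat the topology change $G_c \to G_c'$ (remove $\le r$ edges, add $\le \ell$ edges) as a single benign fault applied to the legitimate state $c_{safe}$. For \textbf{no packet loss} I would argue directly from the definition of $\kappa$-fault-resilient flows (Section~\ref{sec:kappaFlowsAboveZeroConstructing}) together with condition~\ref{def:legitimateState:flow} of Definition~\ref{def:legitimateState}: in $c_{safe}$ every controller $p_i$ has, at the switches, rules encoding a $\kappa$-fault-resilient flow to each node $p_k$ in $G_c$, and by definition these rules deliver a packet along a surviving simple path for \emph{any} removal of $k \le \kappa$ edges of $G_c$. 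The $r \le \kappa$ permanently removed edges are, from the data plane's viewpoint, indistinguishable from $r$ failed links, so the highest-priority \emph{applicable} rule (the one whose forwarding port lies in $N_o$) automatically selects the right failover and packets keep flowing. Since $G_c \setminus \{\text{removed edges}\} \subseteq G_c'$ and, by the ``no further failures'' hypothesis, every edge of $G_c'$ is operational, the surviving path uses only operational links; the $\ell$ added edges merely enlarge $G_c'$ and invalidate no existing flow.

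The key to closing both the no-loss and the timing claim is to show that, starting from $c_{safe}$, the execution $R$ contains \emph{no C-resets and no illegitimate deletions}. No C-reset occurs because $c_{safe}$ is legitimate, so $|replyDB_i| \le maxReplies$ already holds and Part~(1) of Lemma~\ref{thm:boundedControllerMemory} keeps it bounded, hence the condition of line~\ref{ln:checkResponsesSize} never fires. No illegitimate deletion occurs because $G_c'$ stays connected and no controller fails, so every living controller remains reachable in the topology each $p_i$ reconstructs in $replyDB_i$; consequently the set $M$ of line~\ref{ln:rm} never drops a non-failing controller and lines~\ref{ln:msgM},~\ref{ln:delAllRules} emit no deletion of a live controller's managers or rules. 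Therefore each controller only ever \emph{refreshes} its own rules via \texttt{updateRules} (line~\ref{ln:msgL}), replacing the old $myRules$ for $G_c$ by the new $myRules$ for $G_c'$, both $\kappa$-fault-resilient; because the switch update is atomic (Section~\ref{sec:interModel}) and no foreign rule is ever removed, the installed rules always contain a complete $\kappa$-resilient flow for the relevant pair, which finishes the no-loss argument.

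With these disturbances excluded, the \textbf{$\bigO(D)$ recovery bound} follows by re-running the propagation argument of Lemma~\ref{thm:propagation}: each controller re-queries its neighborhood and information about the changed edges propagates outward one hop per constant number of frames, so within $((\Delta_{comm}+\Delta_{synch})+2)D+1 = \bigO(D)$ frames every controller has rediscovered $G_c'$ and installed the new $\kappa$-fault-resilient flows, reaching a state legitimate for $G_c'$ (Definition~\ref{def:legitimateState}). This is exactly the mechanism of Lemma~\ref{thm:noDelMnger}; because we have ruled out the C-resets and illegitimate deletions that force the $\bigO(D^2N)$ worst case of Theorem~\ref{lem:staleResetIllegalDelete}, only the single $\bigO(D)$ propagation phase remains. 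For \textbf{$r > \kappa$} I would exhibit the breakdown: the installed rules provide failover for at most $\kappa$ simultaneous removals (each flow uses the $n_{prt} \ge \kappa+1$ priority levels, one per tolerated failure count), so there is a source--destination pair and a choice of $r>\kappa$ removed edges for which no applicable rule survives and the matching packet is dropped; moreover $G_c'$ need no longer satisfy $\kappa < \lambda(G_c')$ (Section~\ref{sec:kappaFlowsAboveZeroConstructing}), so even after rediscovery a new $\kappa$-fault-resilient flow may fail to exist, and bounded communication delay can no longer be guaranteed.

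The main obstacle I anticipate is the no-illegitimate-deletion step: rigorously ruling out that a controller, while its $replyDB_i$ is transiently inconsistent with $G_c'$, computes a topology in which some non-failing controller appears unreachable and thereby deletes its rules. Establishing that the pre-existing $\kappa$-resilient flows keep every controller reachable throughout the recovery window --- so that the $\reaches$ relation used in lines~\ref{ln:stale} and~\ref{ln:rm} never excludes a live controller --- is the crux on which both the no-loss guarantee and the clean $\bigO(D)$ bound rest.
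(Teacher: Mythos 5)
Your proposal is correct and rests on the same three pillars as the paper's proof: the $\kappa$-fault-resilience of the flows installed in $c_{safe}$ (condition~\ref{def:legitimateState:flow} of Definition~\ref{def:legitimateState}) for the no-loss claim, Lemma~\ref{thm:propagation} for the $\bigO(D)$ recovery bound, and the possible loss of connectivity for the breakdown when $r>\kappa$. The organization, however, differs. The paper decomposes by the values of $(r,\ell)$ into four cases, and in the pure-removal case ($r\le\kappa$, $\ell=0$) it gives a finer argument you do not have: it tracks how the failure of a link $e$ lying on some path $\Pi_s$ causes the remaining priority-indexed paths $\Pi_k$ to re-index (each $\Pi_k$ with $k>s$ effectively becomes $\Pi_{k-1}$), so that the pre-installed failover rules absorb the change and recovery takes at most \emph{one} frame; only the cases involving link additions (which can invalidate first-shortest-path optimality) need the full $\bigO(D)$ propagation of Lemma~\ref{thm:propagation}. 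Your uniform treatment -- route every case through ``no C-resets, no illegitimate deletions'' and then invoke Lemma~\ref{thm:propagation} -- is coarser but still suffices for the lemma as stated, and it has the virtue of making explicit a step the paper merely asserts (``no stale information exist in the system, and no resets or illegitimate deletions occur''). On the crux you flag at the end: the worry is resolvable, and by an argument neither you nor the paper spells out. Any topology a controller reconstructs during recovery mixes, per node, either that node's old or its new adjacency list; both contain that node's adjacencies in $G_c\setminus\{\text{removed edges}\}$, which is connected because $G_c$ is $(\kappa+1)$-edge-connected and $r\le\kappa$. Hence every live controller stays reachable in every mixed view, line~\ref{ln:rm} never drops it, and no illegitimate deletion occurs -- so your anticipated obstacle is not an actual gap, though closing it explicitly would strengthen both your proof and the paper's.
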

\begin{proof}
	We consider the following cases.\\
	\noindent \textbf{The case of $r \leq \kappa$ and $\ell=0$.~~~}
	Suppose that a single link $e$ has failed, i.e., it has been permanently removed from $G_c$, in a state $c'$ that follows a legitimate system state $c_{safe}$.
	Say that $e$ is included either in a primary path $\Pi_0$ in $G_o(0)$ or in one of the alternative paths of $\Pi_0$, $\Pi_k$ in $G_o(k)$, where $k>0$, for a controller $p_i$ (cf. definitions of the function $myRules()$ and the graphs $G_o(k)$ in Section~\ref{sec:kappaFlowsAboveZeroConstructing}).
	For every such case, since $e$'s failure occurs after a legitimate state, communication is maintained when at most $\kappa-1$ links (other than $e$) are non-operational.
	Let $s$ be the index in $\{0,1,\ldots,\kappa\}$ for which $e \in \Pi_s$.
	Due to the construction of the paths $\Pi_k$, $k\in \{0,1,\ldots,\kappa\}$, in the computation of the function $myRules()$ in $p_i$, if $s=0$, then each alternative path $\Pi_k$ before $e$'s failure is now considered as path $\Pi_{k-1}$, for $k\in \{1,\ldots,\kappa\}$. 
	Otherwise, if $s\neq 0$, the paths $\Pi_k$ remain the same for $k\in \{0,\ldots, s-1\}$ and each path $\Pi_k$ is now considered as the alternative path $\Pi_{k-1}$ for $k\in \{s+1,\ldots, \kappa\}$.
	In both cases, a new path $\Pi_{\kappa}$ is computed and installed in the switches if that is possible due to the edge-connectivity of $G_c$, and if that is not the case, the rules installed in the network's switches facilitate $(\kappa-1)$-fault-resilient flows between every controller and every other node in the network.
	The recovery time is at most 1 frame (if $e$ belongs to some path $\Pi_k$), since the removal of link $e$ occurs after a legitimate state and all nodes in the network can be reached by every controller $p_i\in P_C$.
	
	Note that if $e$ is not part of any flow, then its failure has no effect in maintaining bounded communication delays.
	By extension of the argument above, bounded communication delays can be maintained when at most $\kappa$ link failures occur.
	That is, in the worst case when exactly $\kappa$ link failures occur, bounded communication delays are maintained due to the existence of the $\kappa^\text{th}$ alternative paths and the assumption that no further failures occur in the network.
	
	\noindent \textbf{The case of $r=0 \land \ell>0$.~~~}
	A link addition can violate the first shortest path optimality, thus in this case all paths should be constructed from scratch. 
	Since, the link addition occurs after a legitimate state, no stale information exist in the system,  and no resets or illegitimate deletions occur.
	Hence, by Lemma~\ref{thm:propagation} (for $k=D$) within $2D$ frames it is possible to (re-)build the $\kappa$-fault containing flows throughout all nodes in the network and reach a legitimate system state (since the edge-connectivity cannot decrease with link additions).
	
	\noindent \textbf{The case of $r\leq \kappa$ and $\ell> 0$.~~~} 
	Note that by the first case, bounded communication delays are maintained, since $r\leq \kappa$.
	Since $\ell$ links are added in $G_c$, the controllers require $O(D)$ frames to install new paths (by Lemma~\ref{thm:propagation}), even though the connectivity of $G_c$ might be less than $\kappa+1$ (but for sure at least 1).
	Hence, bounded communication delays are guaranteed in this case, given that no more failures occur.
	
	\noindent \textbf{The case of $r > \kappa$.~~~} 
	In this case, we do not guarantee bounded communication delays.
	This holds, due to the fact that the removal of more than $\kappa$ edges might break connectivity in $G_c$, which makes the existence of alternative paths for $r>\kappa$ link failures impossible.
\end{proof}

\begin{lemma}
	\label{lem:nodeAdditionDeletion}
	Suppose that $c'$ is obtained from a legitimate system state $c_{safe}$ by the removal of at most $r$ nodes and the addition of at most $\ell$ nodes (and no further failures), and $R$ is an execution of Algorithm~\ref{alg:selfStabCode} that starts in $c'$. 
	It holds that no packet loss occurs in $R$ if, and only if, $G_c$ remains connected (and $N_C\geq 1 \land N_S\geq 1$), and in this case the network recovers within $\bigO(D)$ frames. 
	For the case of $r>0 \land \ell = 0$ bounded communication delays can no longer be guaranteed. 
\end{lemma}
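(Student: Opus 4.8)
The plan is to mirror the case analysis of Lemma~\ref{thm:kappaLink}, but to replace its per-link ``failover budget'' argument by a global connectivity argument, exploiting the fact that removing (resp.\ adding) a node is equivalent to removing (resp.\ adding) \emph{all} of its incident edges at once, which in general exceeds the $\kappa$-resilience that the installed flows provide. First I would settle the equivalence ``no packet loss $\iff$ ($G_c$ connected $\land\; N_C\ge 1 \land N_S\ge 1$)''. For the ($\Rightarrow$) direction I argue by contraposition: if the node events disconnect $G_c$, then there is a controller $p_i\in P_C$ and a node $p_j$ with no path between them in $G_c$; by the definition of a flow (Section~\ref{sec:kappaFlowsAboveZeroConstructing}) no rule set can relay a packet from $p_i$ to $p_j$, so traffic between them is permanently undeliverable. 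The degenerate cases $N_C=0$ (no controller can maintain any rule, so switch configurations only decay) and $N_S=0$ are handled the same way. For the ($\Leftarrow$) direction I use that $c_{safe}$ is legitimate and that \emph{only} node events occur (no extra link failures): since $G_c$ stays connected, every controller--node pair still admits a simple path, so at least a $0$-fault-resilient flow is realizable, and combined with the reliable FIFO end-to-end layer (Section~\ref{sec:transportModel}) and communication fairness (Section~\ref{sec:communicationFairness}) every packet is retransmitted until a flow is (re-)installed; hence no packet is \emph{permanently} lost.

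Next, for the $\bigO(D)$ recovery bound in the connected case, I would verify that the hypotheses of Lemma~\ref{thm:propagation} hold along $R$, namely that no C-reset and no illegitimate deletion occurs. No illegitimate deletion occurs because any controller that a surviving controller deletes from a switch is an actually-removed (failed) controller, so the deletion is \emph{legitimate} by Definition~\ref{def:illDel}, while each newly added controller installs only its own information; thus no non-failing controller is ever removed. No C-reset occurs because $c_{safe}$ is legitimate (so $replyDB_i$ starts bounded by Lemma~\ref{thm:boundedControllerMemory}) and the post-event populations satisfy $|P_C|\le N_C$ and $|P_S|\le N_S$, so $maxReplies\ge 2(N_C+N_S)$ still suffices. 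With these two facts, Lemma~\ref{thm:propagation} applied for all distances up to $k=D$ shows that within $((\Delta_{comm}+\Delta_{synch})+2)D$ frames every controller re-establishes flows to every reachable node, and, exactly as in Lemma~\ref{thm:noDelMnger}, one further frame suffices to remove the stale managers and rules of the removed controllers. Hence a legitimate state (Definition~\ref{def:legitimateState}) is reached within $\bigO(D)$ frames.

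Finally, for the case $r>0 \land \ell=0$ I would argue that bounded communication delays can no longer be guaranteed, paralleling the $r>\kappa$ branch of Lemma~\ref{thm:kappaLink}. Bounded delays rely on the installed flows being $\kappa$-fault-resilient, which requires $\kappa<\lambda(G_c)$ (Section~\ref{sec:kappaFlowsAboveZeroConstructing}). Removing a node can only decrease the edge-connectivity $\lambda(G_c)$; so if the pre-failure topology was exactly $(\kappa+1)$-edge-connected, the post-failure $\lambda(G_c)$ may drop below $\kappa+1$, in which case some pair admits at best a $\kappa'$-fault-resilient flow with $\kappa'<\kappa$. The guarantee of tolerating $\kappa$ concurrent temporary link failures is then void, even though connectivity (and hence no packet loss) may still hold.

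The main obstacle I expect is the ($\Leftarrow$) no-loss argument, where I must separate two qualitatively different node events. A controller removal leaves every data-plane path of that controller \emph{intact} and only creates stale manager/rule entries, so it never drops a data packet; a switch removal, by contrast, genuinely breaks every flow routed through it, and the pre-installed $\kappa$-resilient failover need not cover the removal of all of the switch's incident edges at once. For the latter I must show that the resulting transient mis-forwarding (rules whose $fwd$ field points toward a vanished neighbor) does not amount to \emph{loss}: the self-stabilizing transport layer retransmits until the controllers, within the $\bigO(D)$ frames established above, reinstall valid flows along a surviving path. Making this ``eventual delivery along a surviving path'' precise, while keeping the statement's clean iff with plain connectivity rather than with $\kappa$, is the delicate point of the proof.
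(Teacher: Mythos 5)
Your proposal is correct and relies on the same core toolkit as the paper's proof: Lemma~\ref{thm:propagation} (whose preconditions --- no C-resets and no illegitimate deletions --- you verify explicitly) for the $\bigO(D)$ recovery bound, a stale-information cleanup step in the spirit of Lemma~\ref{thm:noDelMnger}, and an edge-connectivity degradation argument for the claim that $r>0 \land \ell=0$ voids the bounded-delay guarantee. The difference is mainly organizational. The paper proceeds by cases on $(r,\ell)$: for $r>0,\ell=0$ it argues that removing a switch deletes at least $\kappa+1$ incident links (since $(\kappa+1)$-edge-connectivity forces minimum degree at least $\kappa+1$), exceeding the failover budget, so only $\tilde{\kappa}$-fault-resilient flows with $0\le\tilde{\kappa}\le\kappa$ can be guaranteed, and it handles controller removal with a direct two-frame argument (one frame to evict the failed controller from $replyDB$, one to purge its rules and manager entries from every switch); for $r=0,\ell>0$ it invokes Lemma~\ref{thm:propagation} because additions can break first-shortest-path optimality; for $r>0,\ell>0$ it reduces to a $\tilde{\kappa}$-edge-connectivity version of Section~\ref{sec:recover}. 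Your claim-by-claim organization buys a more explicit treatment of the ``no packet loss iff connected'' equivalence and of the distinction between controller removal (data-plane paths intact, only stale state) and switch removal (flows genuinely broken, delivery salvaged by transport-layer retransmission) --- points the paper largely leaves implicit; the paper's case split buys sharper constants (two frames, rather than $\Theta(D)$, for pure controller removal) and makes the graceful degradation to $\tilde{\kappa}$-resilience explicit. Both arguments land within the lemma's stated $\bigO(D)$ bound, so neither difference affects correctness.
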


\begin{proof}
	We study the following cases.
	
	\noindent \textbf{The case of $r > 0$ and $\ell = 0$.~~~} 
	The removal of a switch $p_j$ is equivalent to the removal of all the links that are adjacent to $p_j$.
	Since the edge-connectivity is at least $\kappa+1$, the minimum degree of every node in $G_c$ is at least $\kappa+1$.
	Thus, a switch removal (equiv. removal of at least $\kappa+1$ links) would violate the assumption of at most $\kappa$ link failures, possibly violating connectivity or affecting all the alternative paths between two endpoints in the network.
	In this case, Algorithm~\ref{alg:selfStabCode} can only guarantee that the controllers will install $\tilde{\kappa}$-fault-resilient flows, where $0\leq \tilde{\kappa} \leq \kappa$.
	
	The case of removing a controller $p_i$ can be handled by Algorithm~\ref{alg:selfStabCode} if we assume that the communication graph $G_c$ stays (at least) $(\kappa+1)$-edge-connected after removing $p_i$.
	In that case, each controller $p_{i'}$ can discover the removal of $p_i$ and delete it from $replyDB_{i'}$ in 1 frame, and then, in the subsequent frame, $p_{i'}$ can delete $p_i$'s rules from $rules_j(j)$ and $p_i$ from $manager_j(j)$, for every switch $p_j$.
	Hence, within 2 frames the system recovers to a legitimate state, since the existing rules of the other controllers stay intact.
	
	\noindent \textbf{The case of $r=0$ and $\ell > 0$.~~~} 
	We assume that if controller or switch additions occur (including their adjacent links) after a legitimate state, the new node is initialized with empty memory.
	That is, $replyDB_i$ is empty if a new controller $p_i$ is added, and $manager_j(j) = rules_j(j) = \emptyset$ if a new switch $p_j$ is added.
	Note that the new node should not violate the assumption of $G_c$'s edge-connectivity being at least $\kappa+1$.
	In both cases, and similarly to link additions, the first shortest path optimality might be violated and hence (as in the case of link additions) a period of $2D$ frames is needed (Lemma~\ref{thm:propagation}) to (re-)build the $\kappa$-fault-resilient flows (since no stale information exist, and no resets or illegal deletions occur).
	
	\noindent \textbf{The case of $r>0$ and $\ell > 0$.~~~} 
	Let $G_c'$ be $G_c$ after the removal of at most $r$ nodes and the addition of at most $\ell$ nodes.
	If $G_c'$ is $\tilde{\kappa}$-edge-connected, where $1< \tilde{\kappa} \leq \kappa$, then bounded communication delays in the occurrence of at most $\tilde{\kappa}$ link failures can be guaranteed by following the arguments of Section~\ref{sec:recover} for $\kappa = \tilde{\kappa}$.
\end{proof}

\section{Evaluation}
\label{sec:eval}

\begin{figure*}[t]
	\centering
	\includegraphics[width=\figSizeJournal\textwidth]{./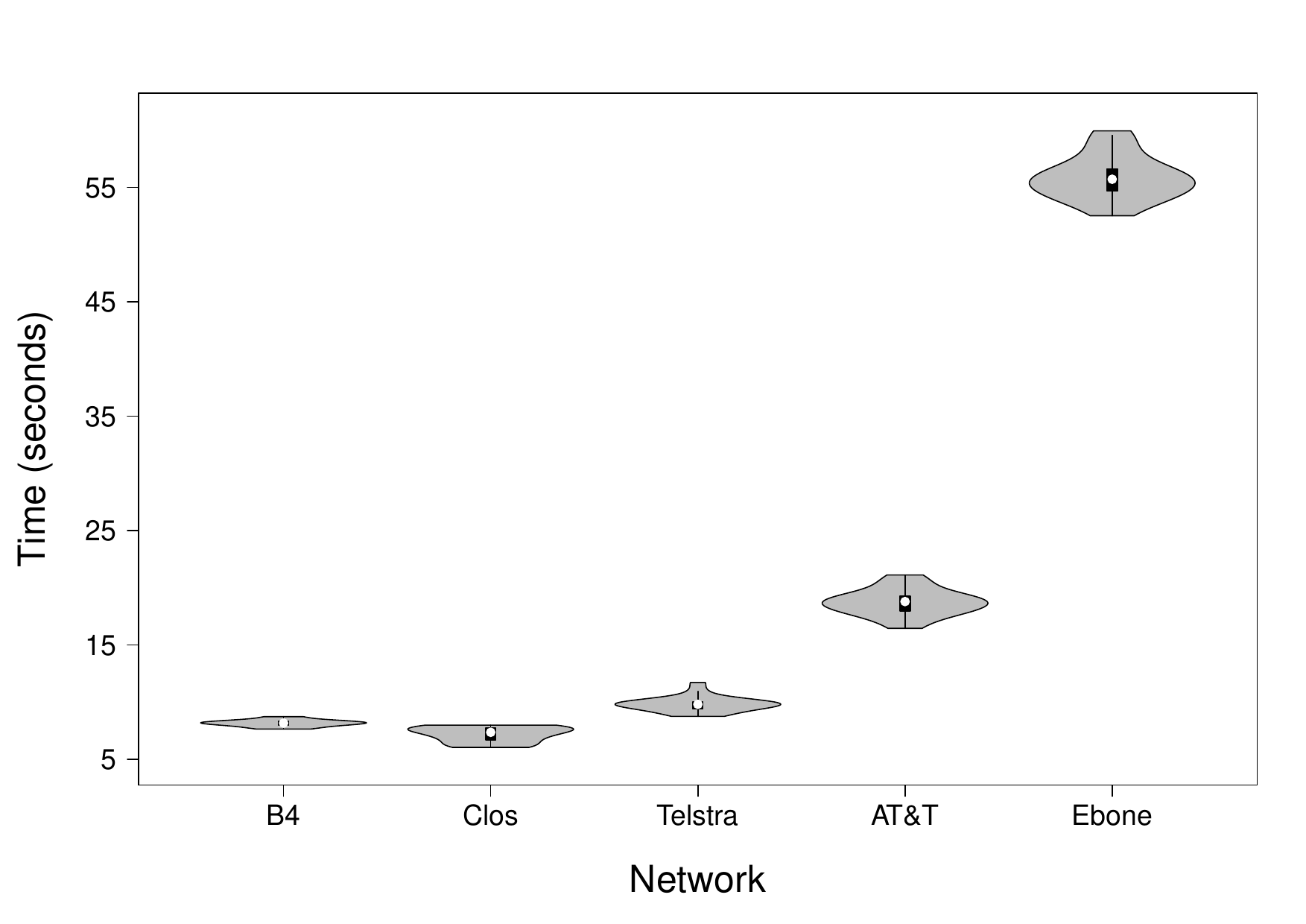}
	\caption{Bootstrap time for the networks using 3 controllers. The network  diameters are 4, 5, 8, 10 and 11 (left to right order).}
	\label{a}
\end{figure*}

In order to evaluate our approach,  and in particular, to complement our theoretical worst-case analysis as well as study the performance in different settings, we implemented a prototype using Open vSwitch (OVS) and Floodlight. To ensure reproducibility and to facilitate research on improved and alternative algorithms, the source code and evaluation data are accessible via~\cite{RenaissanceWeb}. In the following, we first explain our expectations with respect to the performance (Section~\ref{sec:limitations}) and discuss details related to the implementation of the proposed solution (Section~\ref{sec:implementation}) before presenting the setup of our experiments (Section~\ref{sec:setup}). In particular, we empirically evaluate the time to bootstrap an SDN (after the occurrence of different kinds of transient failures), the recovery time (after the occurrence of  different kinds of benign failures), as well as the throughput during a recovery period that follows a single link failure (Section~\ref{sec:results}). For the reproducibility sake, the source code and evaluation data can be access via~\cite{RenaissanceWeb}.

\subsection{Limitations and expectations}
\label{sec:limitations}
We study $\system$'s ability to recover from failures in a wide range of topologies and settings. We note that the scope of our work does not include an empirical demonstration of recovery after the occurrence of \emph{arbitrary} transient faults, because such a result would need to consider all possible starting system states. Nevertheless, we do consider recovery after changes in the topology, which Section~\ref{sec:faultModel} models as transient faults. However, in these cases, we mostly consider a single change to the topology, i.e., node or link failure (after the recovery from any other transient fault). 

The basis for our performance expectation is the analysis presented in Section~\ref{sec:proof}. Specifically, we use lemmas~\ref{thm:propagation},~\ref{thm:kappaLink} and~\ref{lem:nodeAdditionDeletion} to anticipate an $\bigO(D)$ bootstrap time and recovery period after the occurrence of benign failures. Recall that, for the sake of simple presentation, our theoretical analysis does not consider the number of messages sent and received (Section~\ref{sec:timeComplexity}), which depends on the number of nodes in the case of $\system$. Thus, we do not expect the asymptotic bounds of lemmas~\ref{thm:propagation},~\ref{thm:kappaLink} and~\ref{lem:nodeAdditionDeletion} to offer an exact prediction of the system performance since our aim in Section~\ref{sec:proof} is merely to demonstrate bounded recovery time. The measurements presented in this section show that $\system$'s performance is in the ballpark of the estimation presented in Section~\ref{sec:proof}.

\subsection{Implementation}
\label{sec:implementation}
In this evaluation section, we demonstrate $\system$'s ability to recover from failures without distinguishing between transient and permanent faults, as our model does (Figure~\ref{fig:self-stab-SDN}), because there is no definitive distinction between transient and permanent faults in real-world systems. Moreover, our implementation uses a variation on Algorithm~\ref{alg:selfStabCode}. The reason that we need this variation is that this evaluation section considers changes to the network topology during legal executions, whereas our model considers such changes as transient faults that can occur before the system starts running.

In detail, Algorithm~\ref{alg:selfStabCode} installs rules on the switches using two tags, which are $currTag$ and $prevTag$ (line~\ref{ln:tags}). That is, as the new rules for $currTag$ are being installed, the ones for $prevTag$ are being removed. Our variation uses a third tag, $\mathit{beforePrevTag}$, which tags the rules in the synchronization round that preceded the one that $prevTag$ refers to. When $\system$ installs new rules that are tagged with $currTag$, it does not remove the rules tagged with $prevTag$ but instead, it removes the rules that are tagged with $\mathit{beforePrevTag}$. This one extra round in which the switches hold on to the rules installed for $prevTag$'s synchronization round allows $\system$ to use the $\kappa$-fault-resilient flows that are associated with $prevTag$ for dealing with link failures (without having them removed, as Algorithm~\ref{alg:selfStabCode} does). The above variation allows us to observe the beneficial and complementary existence of the mechanisms for tolerating transient and permanent link failures, i.e., $\system$'s construction of $\kappa$-fault-resilient flows, and respectively, update of such flows according to changes reported by $\system$'s topology discovery.

\begin{figure*}[t]
	\centering
	\includegraphics[width=\figSizeJournal\textwidth]{./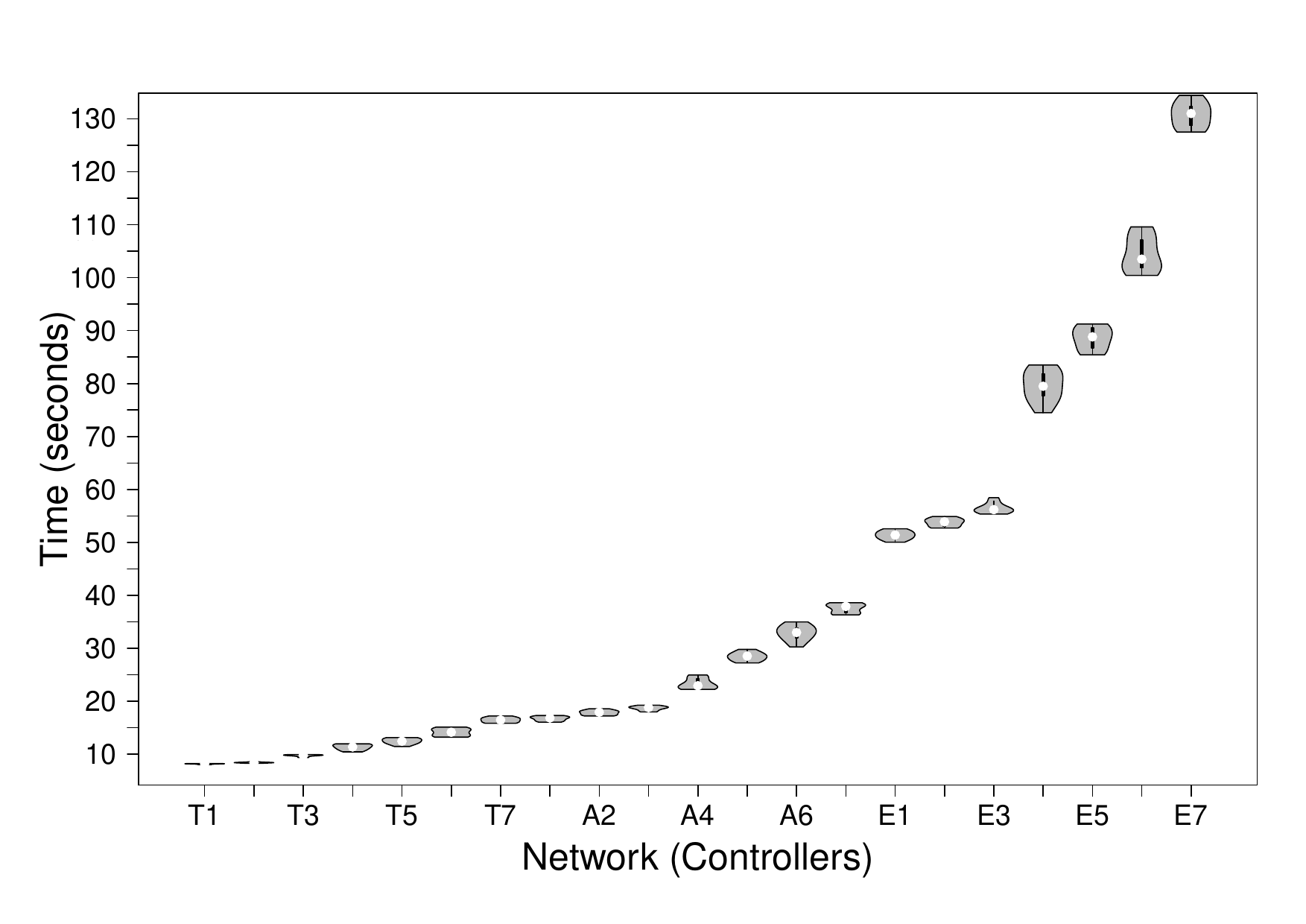}
	\caption{Bootstrap time for Telstra (T), AT\&T (A) and EBONE (E) for 1 to 7 controllers.}
	\label{b}
\end{figure*}

\begin{figure*}[t!]
	\centering
	\includegraphics[width=\figSizeJournal\textwidth]{./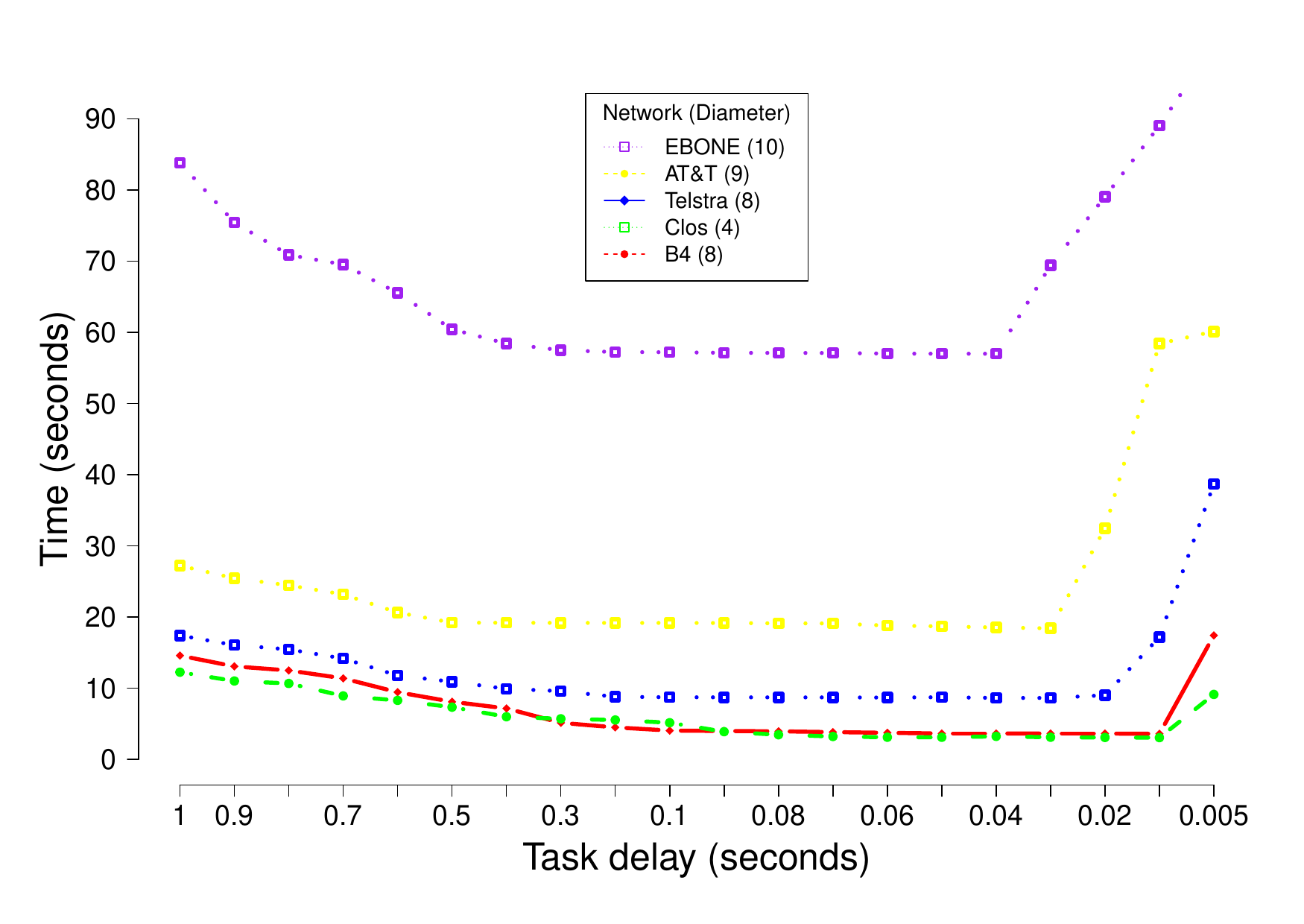}
	\caption{\label{c2}Bootstrap time for B4, Clos, Telstra, AT\&T and EBONE  using seven controllers, as a function of query intervals. Recall that the task delay in the added time between any repetition of the algorithm's do forever loop as well as each interval in which the abstract switch discovers its neighborhood.}
\end{figure*}

~\\
\subsection{Setup}
\label{sec:setup}
We consider a spectrum of different topologies (varying in size and diameter), including B4 (Google's inter-datacenter WAN based on SDN), Clos datacenter networks and Rocketfuel networks (namely Telstra, AT\&T and EBONE). The relevant statistics of these networks can be found in Table~\ref{nodeDiameterDegree}. The hosts for traffic and round-trip time (RTT) evaluation are placed such that the distance between them is as large as the network diameter. 
%
%
The evaluation was conducted on a PC running Ubuntu 16.04 LTS OS, with the Intel(R) Core(TM) i5-457OS CPU @ 2.9 GHz (4x CPU) processor and 32 GB RAM. The maximum transmission unit (MTU) for each link in the Mininet networks were set to 65536 bytes.

Paths are computed according to Breadth First Search (BFS) and we use OpenFlow fast-failover groups for backup paths. We introduce a delay before every repetition of the algorithm's do forever loop as well as between each interval in which the abstract switch discovers its neighborhood. In our experiments, the default delay value was $500$ ms. However, in an experiment related to the bootstrap time (Figure~\ref{c2}), we have varied the delay values.

\begin{wrapfigure}{r}{0.415\textwidth}
	\begin{center}
		\begin{\VCalgSize}
			\vspace*{-2.5em}
			\begin{tabular}{|l|l|l|l|}
				\hline
				\textbf{Network} & \textbf{Nodes} & \textbf{Diameter}  \\ \hline
				B4                    & 12                       & 5                                              \\ \hline
				Clos                  & 20                       & 4                                             \\ \hline
				Telstra               & 57                       & 8                                           \\ \hline
				AT\&T                  & 172                      & 10                                     \\ \hline
				EBONE                & 208                      & 11                                      \\ \hline
			\end{tabular}
			\vspace*{-1em}
		\end{\VCalgSize}
	\end{center}
	\caption{\label{nodeDiameterDegree}The number of nodes and diameter of the studied networks}
\end{wrapfigure}

The link status detector (for switches and controllers) has a parameter called $\Theta$, similar to the one used in~\cite[Section 6]{DBLP:conf/netys/BlanchardDBD14}. This threshold parameter refers to scenarios in which the abstract switch queries a non-failing neighboring node without receiving a query reply while receiving $\Theta$ replies from all other neighbors. The parameter $\Theta$ can  balance a trade-off between the certainty that node is indeed failing and the time it takes to detect a failure, which affects the recovery time. We have selected $\Theta$ to be $10$ for B4 and Clos, and $30$ for Telstra, AT\&T and Ebone. We observed that when using these settings the discovery of the entire network topology always occurred and yet had the ability to provide a rapid fault detection.

\begin{figure*}[t!]
	\centering
	\includegraphics[width=\figSizeJournal\textwidth]{./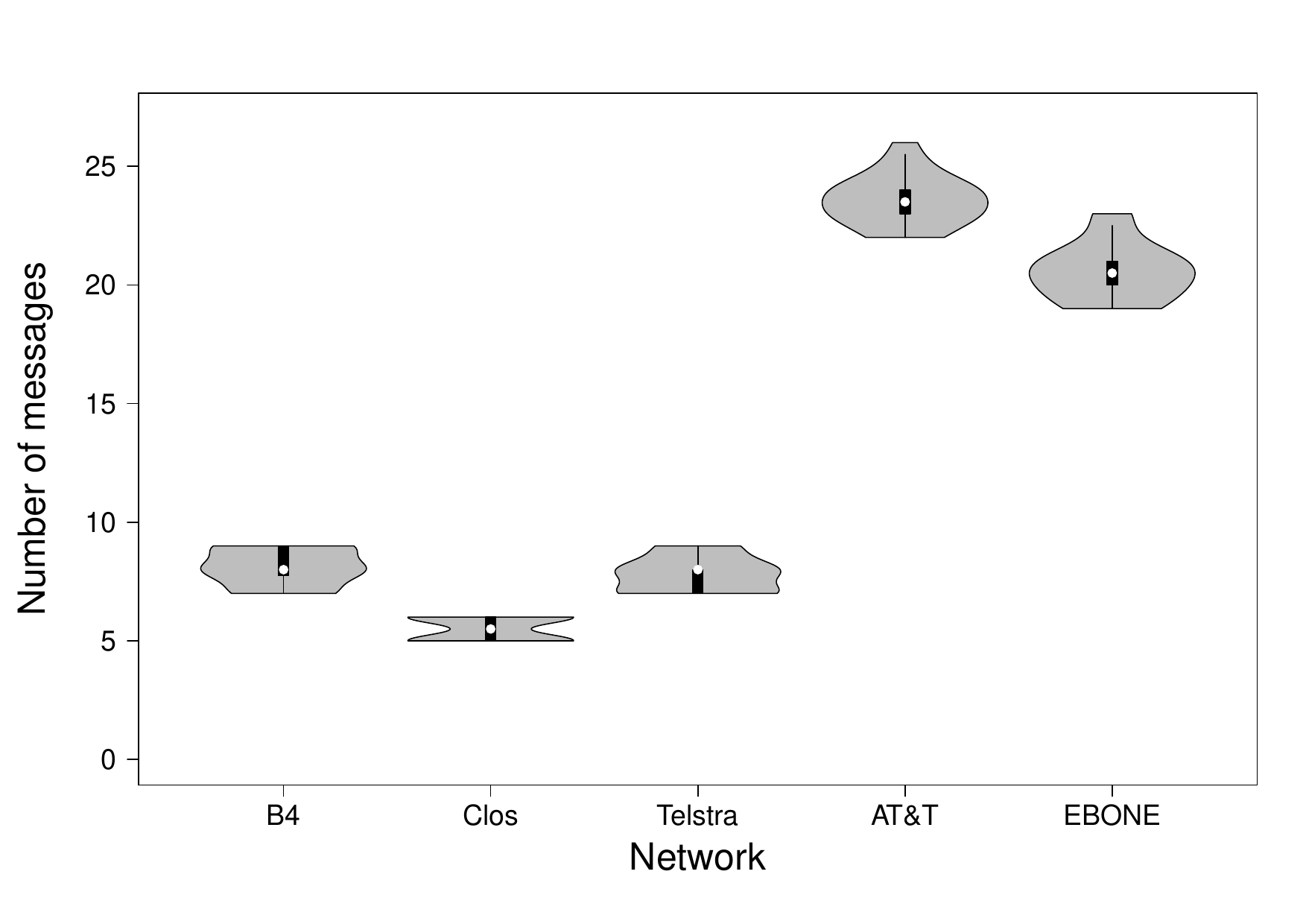}
	\caption{Communication cost per node needed from a maximum loaded global controller to reach a stable network. Note that we divide the number of messages by the number of iterations it takes to converge.}
	\label{d2}
\end{figure*}

\begin{figure*}[t]
	\centering
	\includegraphics[width=\figSizeJournal\textwidth]{./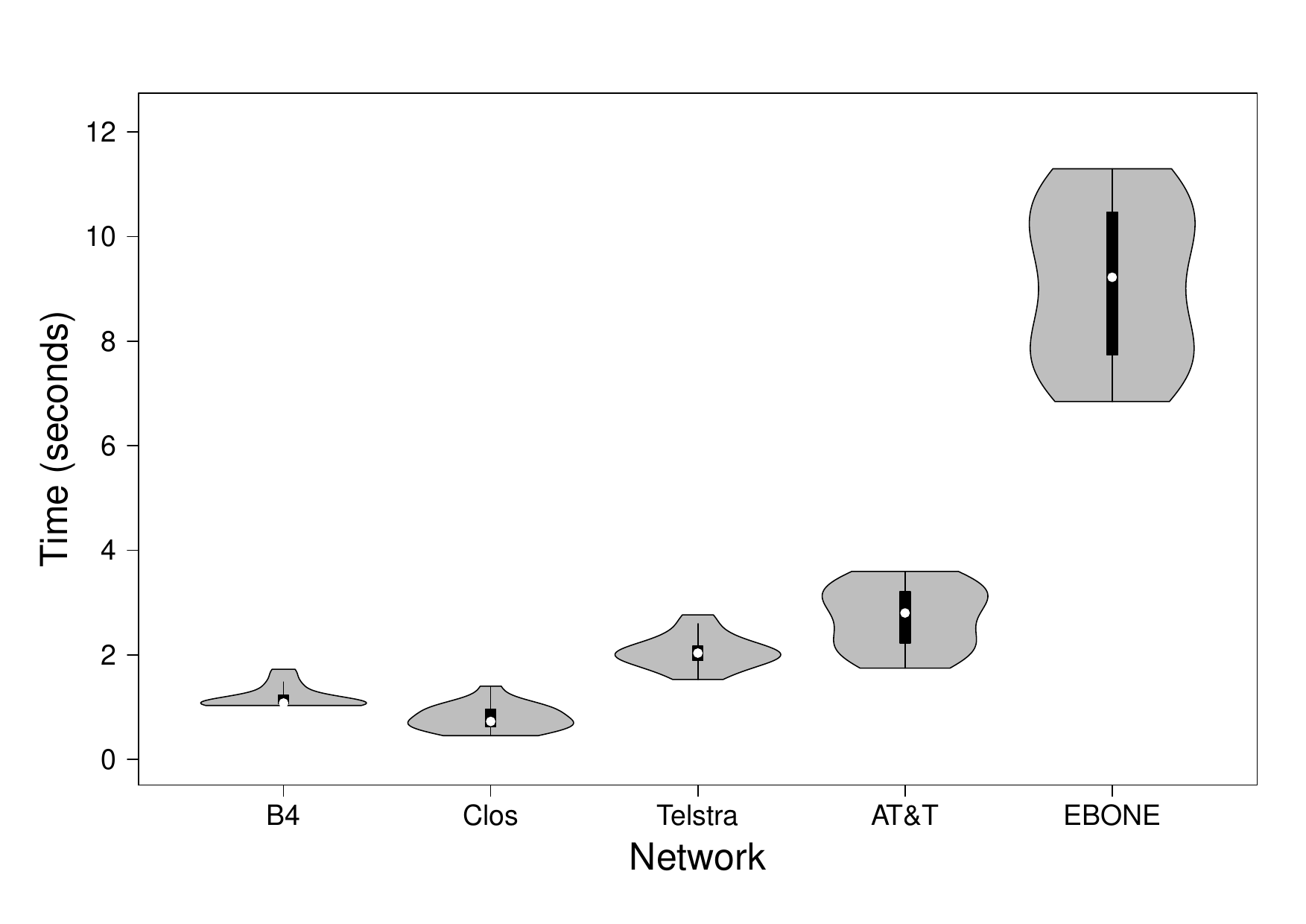}
	\caption{Recovery time after fail-stop failure for a controller.}
	\label{f}
\end{figure*}

\subsection{Results}
\label{sec:results}
We structure our evaluation of $\system$ around the main questions related to the SDN bootstrap, recovery times, and overhead, as well as regarding the throughput during failures. 

For illustrating our data in figures~\ref{a}--\ref{b} and \ref{d2}--\ref{j}, we use violin plots~\cite{hintze1998violin}. In these plots, we indicate the median with a white dot. The first and third quartiles are the endpoints of a thick black line (hence the white dot representing the median is a point on the black line). The thick black line is extended with thin black lines to denote the two extrema of all the data (as the whiskers of box plots). Finally, the vertical boundary of each surface denotes the kernel density estimation (same on both sides) and the horizontal boundary only closes the surface. We ran each experiment $20$ times. For the case of violin plots, we used all measurements except the two extrema. For the case of the other plots, we dismissed from the $20$ measurements the two extrema. Then, we calculated average values and used them in the plots.

\subsubsection{How efficiently $\system$ bootstraps an SDN?}
We first study how fast we can establish a stable network starting from empty switch configurations. Towards this end, we measure how long it takes until all controllers in the network reach a legitimate state in which each controller can communicate with any other node in the network (by installing packet-forwarding rules on the switches). For the smaller networks (B4~\cite{b4} and Clos~\cite{clos}), we use three controllers, and for the Rocketfuel networks~\cite{rocketfuel, rocketfuel2} (Telstra, AT\&T and EBONE), we use up to seven controllers. 


\begin{figure*}[t]
	\centering
	\includegraphics[width=\figSizeJournal\textwidth]{./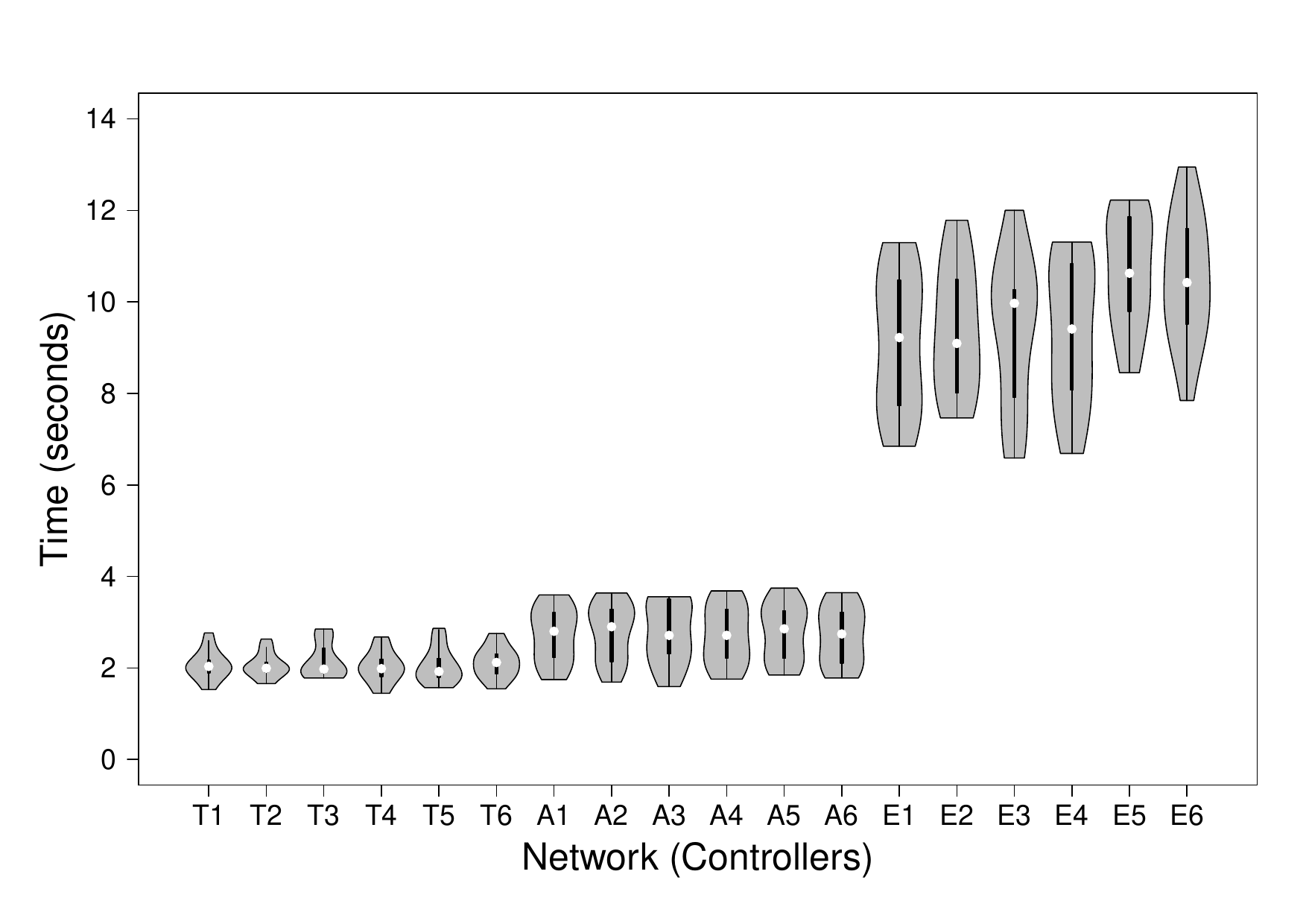}
	\caption{Recovery time after fail-stop failure of 1-6 controllers in Telstra (T), AT\&T (A) and EBONE (E).}
	\label{g}
\end{figure*}

\noindent \textbf{Bootstrapping time.~~}
We are indeed able to bootstrap in \emph{any} of the configurations studied in our experiments. Lemma~\ref{thm:propagation} predicts an $\bigO(D)$  bootstrap time when starting from an all empty switch configuration; that prediction does not consider the number of nodes, as explained above. Note that in such executions, no controller sends commands that perform (illegitimate) deletions before it discovers the entire network topology and thus no illegitimate deletion is ever performed by any controller. In terms of performance, we observe that the recovery time grows (Figure~\ref{a}) as the network dimensions increase (diameter and number of nodes). It also somewhat depends on the number of controllers when experimented with the larger networks (Figure~\ref{b}): more controllers result in slightly longer bootstrap times. We note that the recovery process over a growing number of controllers follows trades that appear when considering the maximum value over a growing number of random variables. Specifically, when an abstract switch updates its rules, the time it takes to update all of the rules that were sent by many controllers can appear as a brief bottleneck.

\begin{figure*}[t]
	\centering
	\includegraphics[width=\figSizeJournal\textwidth]{./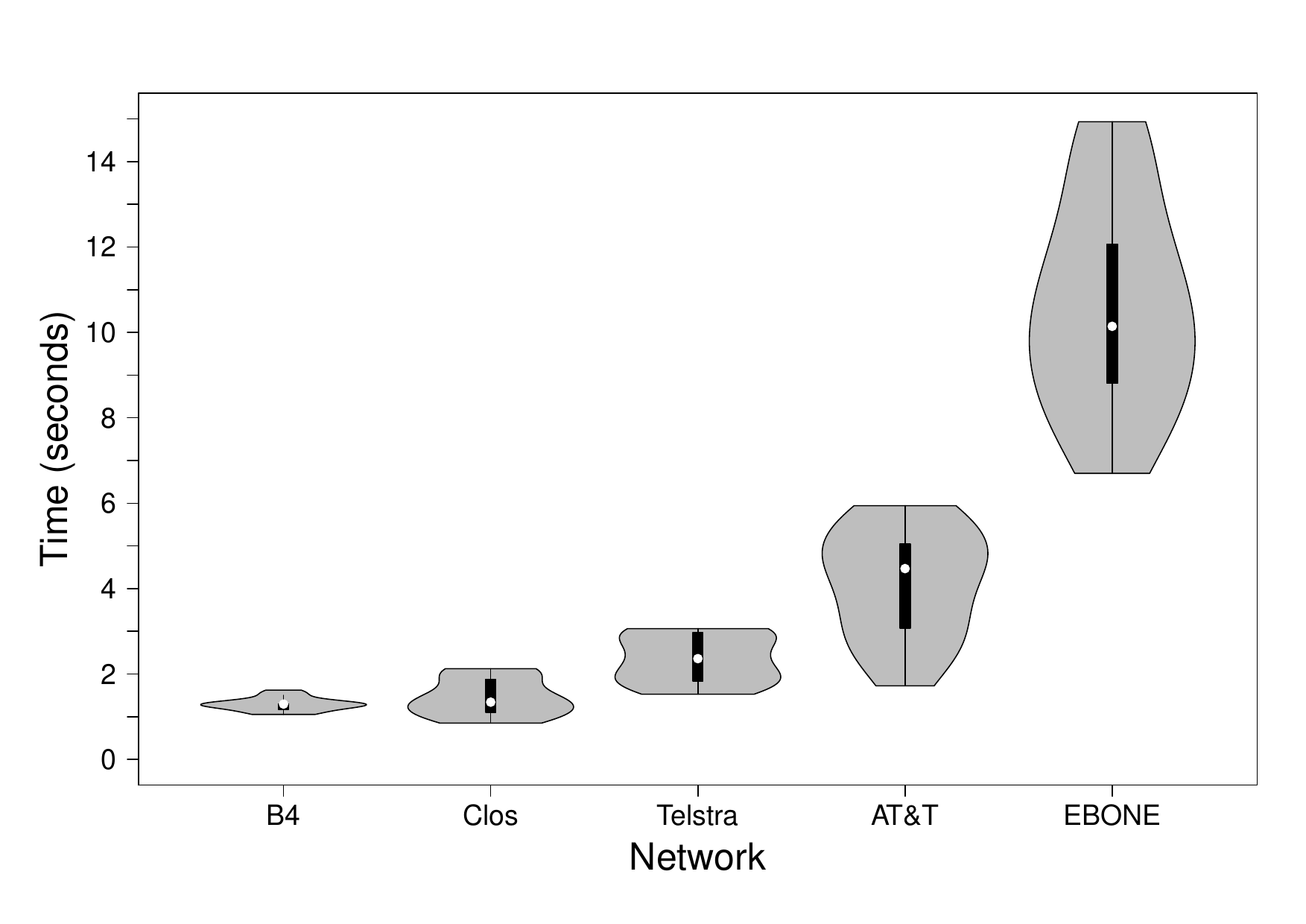}
	\caption{Recovery time after permanent switch-failure.}
	\label{h}
\end{figure*}

Note that the shown bootstrap times only provide qualitative insights: they are, up to a certain point,  proportional to the frequency at which controllers request configurations and install flows (Figure~\ref{c2}). Specifically, the rightmost peaks in the charts are due to the congestion caused by having task delays that overwhelm the networks. These peaks rise earlier for networks with an increasing number of switches. This is not a surprise because the proposed algorithm establishes more and longer flows in larger networks and thus use higher values of network traffic as the number of nodes becomes larger. 

\noindent \textbf{Communication overhead.~~}
The study of bootstrap time thus raises interesting questions regarding the \emph{communication overhead} during the network bootstrap period. Concretely, we measure the maximum number of controller messages, taking three controllers for the smaller networks B4 and Clos, and seven controllers for the Rocketfuel networks Telstra, AT\&T and EBONE in these experiments. While the communication overhead naturally depends on the network size, Figure~\ref{d2} suggests that when normalized, i.e., dividing by the number of iterations it takes to recover, the overhead is similar for different networks (and slightly higher for the case of the two largest networks).  

\subsubsection{How efficiently $\system$ recovers from link and node failures?~~}
In order to study the recovery from benign failures, we distinguish between their different types: (i) fail-stop failures of controllers, (ii) permanent switch-failures, and (iii) permanent link-failures. The experiments start from a legitimate system state, to which we inject the above failures.

\noindent \textbf{(i) Recovery after the occurrence of controller's fail-stop failure.~~}

We injected a \emph{fail-stop} failure by disconnecting a single controller chosen uniformly at random (Figure~\ref{f}). We have also conducted an experiment in which we have disconnected many-but-not-all controllers (Figure~\ref{g}). That is, we disconnected a single controller that is initially chosen at random and measured the recovery time. The procedure was repeated for the same controller while recording the measurements until only one controller was left. Lemma~\ref{lem:nodeAdditionDeletion}, which does not take into consideration the time it takes to send or receive messages, suggests that after the removal of at most $N_C -1$ controllers, the system reaches a legitimate system state within $\bigO(D)$. We observe in Figure~\ref{f} results that are in the ballpark of that prediction. Moreover, we also measure disconnecting one to six random controllers simultaneously for the Rocketfuel networks (Telstra, AT\&T, and EBONE), while running controller number 7. Note that we could not observe a relation between the number of failing controllers and the recovery time, see Figure~\ref{g}.

\begin{figure*}[t]
	\centering
	\includegraphics[width=\figSizeJournal\textwidth]{./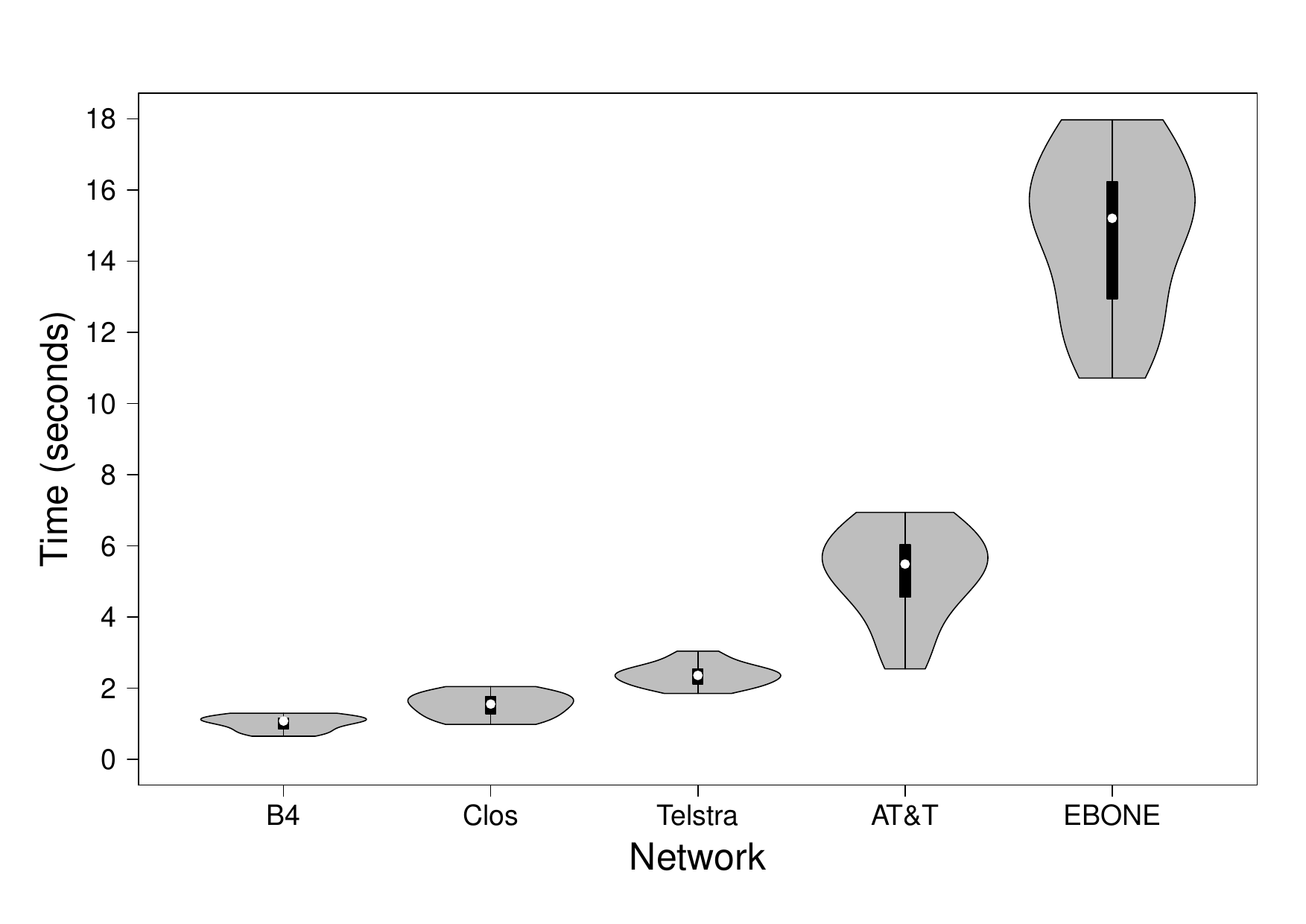}
	\caption{Recovery time after permanent link-failure.}
	\label{i}
\end{figure*}

\noindent \textbf{(ii) Recovery after the occurrence of switch's fail-stop failure.~~}
We have experimented with recovery after \emph{permanent switch-failures}. These experiments started by allowing the network to reach a legitimate (stale) state. Once in a legitimate (stale) state, a switch (selected uniformly at random) was disconnected from the network. We have then measured the time it takes the system to regain legitimacy (stability). We know that by Lemma~\ref{lem:nodeAdditionDeletion}, the recovery time here should be at most in the order of the network diameter. Figure~\ref{h} presents the measurements that are in the ballpark of that prediction. That is, the longest recovery time for each of the studied networks grows as the network diameter does. We also observe a rather large variance in the recovery time, especially for the larger networks. This is not a surprise since the selection of the disconnected switch is random.

\noindent \textbf{(iii) Recovery after the occurrence of permanent link-failures.~~}
During the experiments, we waited until the system reached a legitimate state, and then disconnected a link and waited for the system to recover. Lemma~\ref{lem:nodeAdditionDeletion} predicts recovery within $\bigO(D)$. Figure~\ref{i} presents results that are in the ballpark of that prediction. We also investigated the case of multiple and simultaneous permanent link failures that were selected randomly. Figure~\ref{j} suggests that the number of simultaneous failures does not play a significant role with respect to the recovery time.

\begin{figure*}[t]
	\centering
	\includegraphics[width=\figSizeJournal\textwidth]{./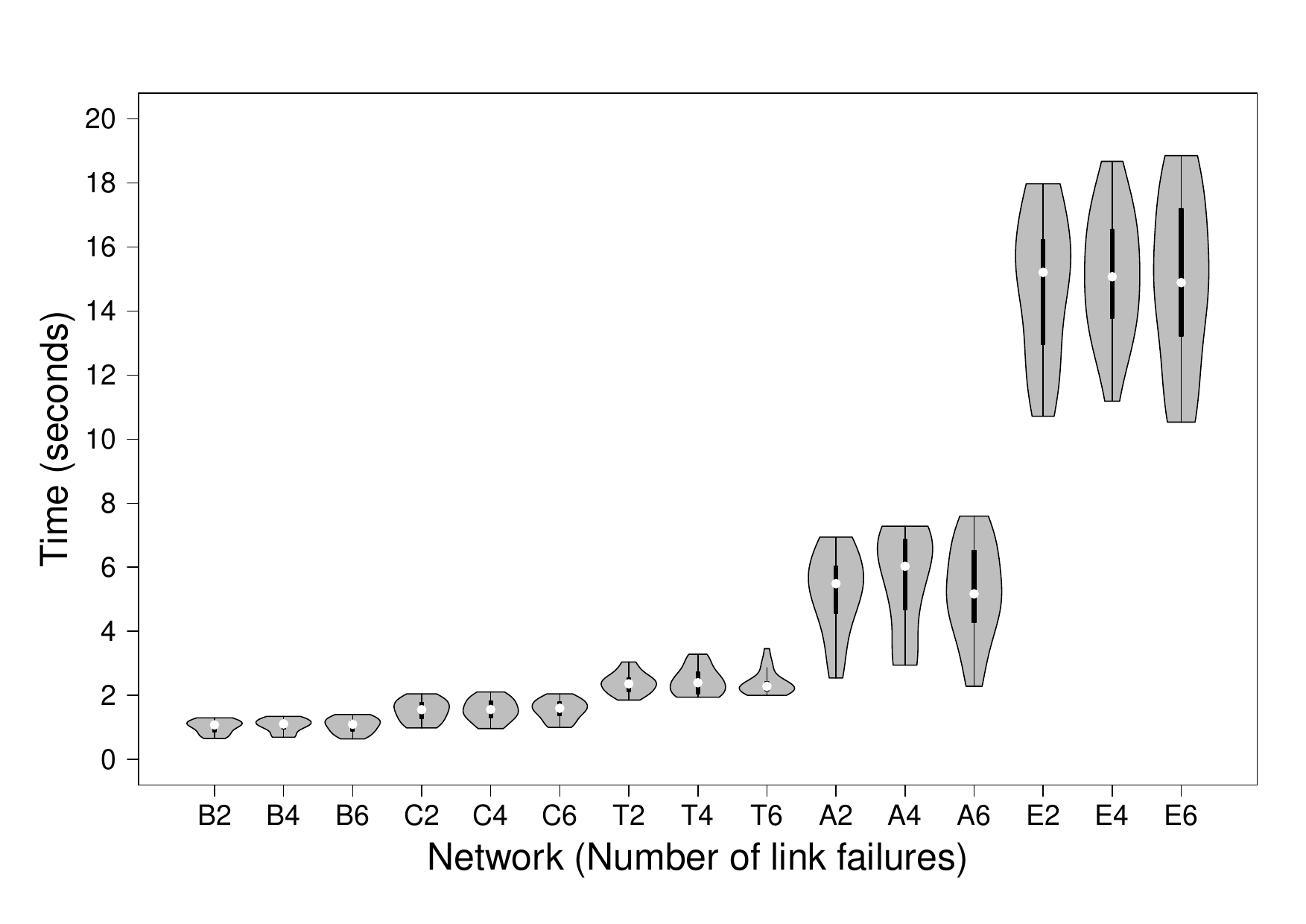}
	\caption{Recovery time after multiple (2,4 or 6) permanent link-failures at random for B4 (B), Clos (C), Telstra (T), AT\&T (A) and EBONE (E).}
	\label{j}
\end{figure*}

\subsubsection{Performance during failure recovery~~}
Besides connectivity, we are also interested in performance metrics such as throughput and message loss \emph{during recovery period that occur after a single link failure}. Recall that we model such failures as transient faults and therefore there is a need to investigate empirically the system's behavior during such recovery periods since the mechanism for fault-resilient flows (Section~\ref{sec:kappaFlowsAboveZeroConstructing}) is always active. Our experiments show that the combination between the proposed algorithm and the mechanism for fault-resilient flows performs rather well. That is, the recovery period from a single permanent link failure is brief and it has a limited impact on the throughput. 

\begin{figure*}[t!]
	\centering
	\includegraphics[width=\figSizeJournal\textwidth]{./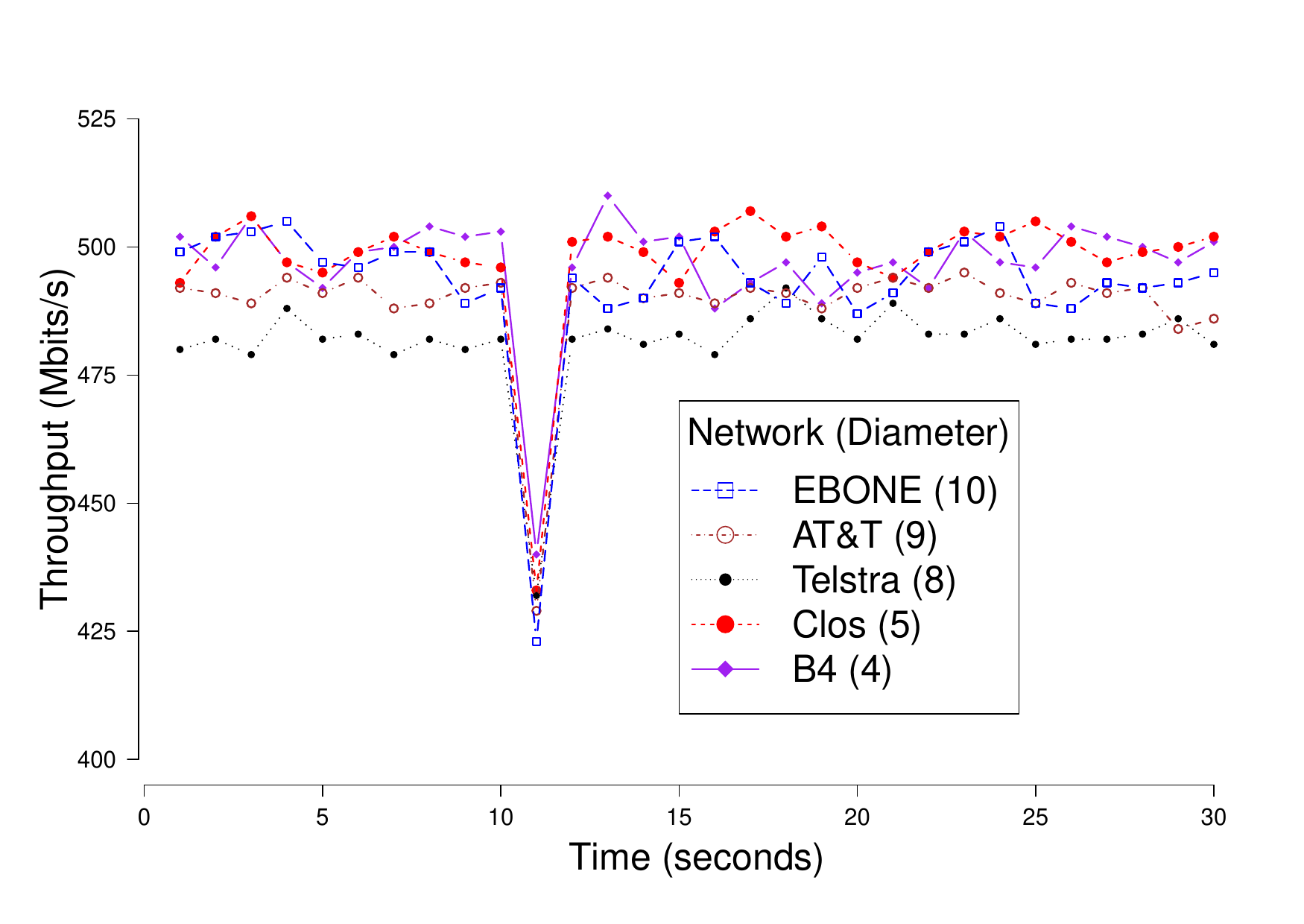}
	\caption{Throughput for the different networks using network updates with tags. Here, a single link failure causes the drop after the $10^{th}$ second.}
	\label{m2}
\end{figure*}

In the following, we measure the TCP throughput between two hosts (placed at a maximal distance from each other), in the presence of a link-failure located as close to the middle of the primary path as possible. To generate traffic, we use Iperf. A specific link to fail is chosen, such that it enables a backup path between the hosts. 

The maximum link bandwidth is set to 1000 Mbits/s. During the experiments, we conduct throughput measurements during a period of 30 seconds. The link-failure occurs after 10 seconds, and we expect a throughput drop due to the traffic being rerouted to a backup path. We note that our prototype utilizes packet tagging for consistent updates~\cite{DBLP:conf/hotnets/ReitblattFRW11} using the variation of Algorithm~\ref{alg:selfStabCode} (presented in Section~\ref{sec:implementation}), which allows the controllers to repair the $\kappa$-fault-resilient flows without the removal of the ones tagged with the previous tag.

\begin{figure*}[t!]
	\centering
	\includegraphics[width=\figSizeJournal\textwidth]{./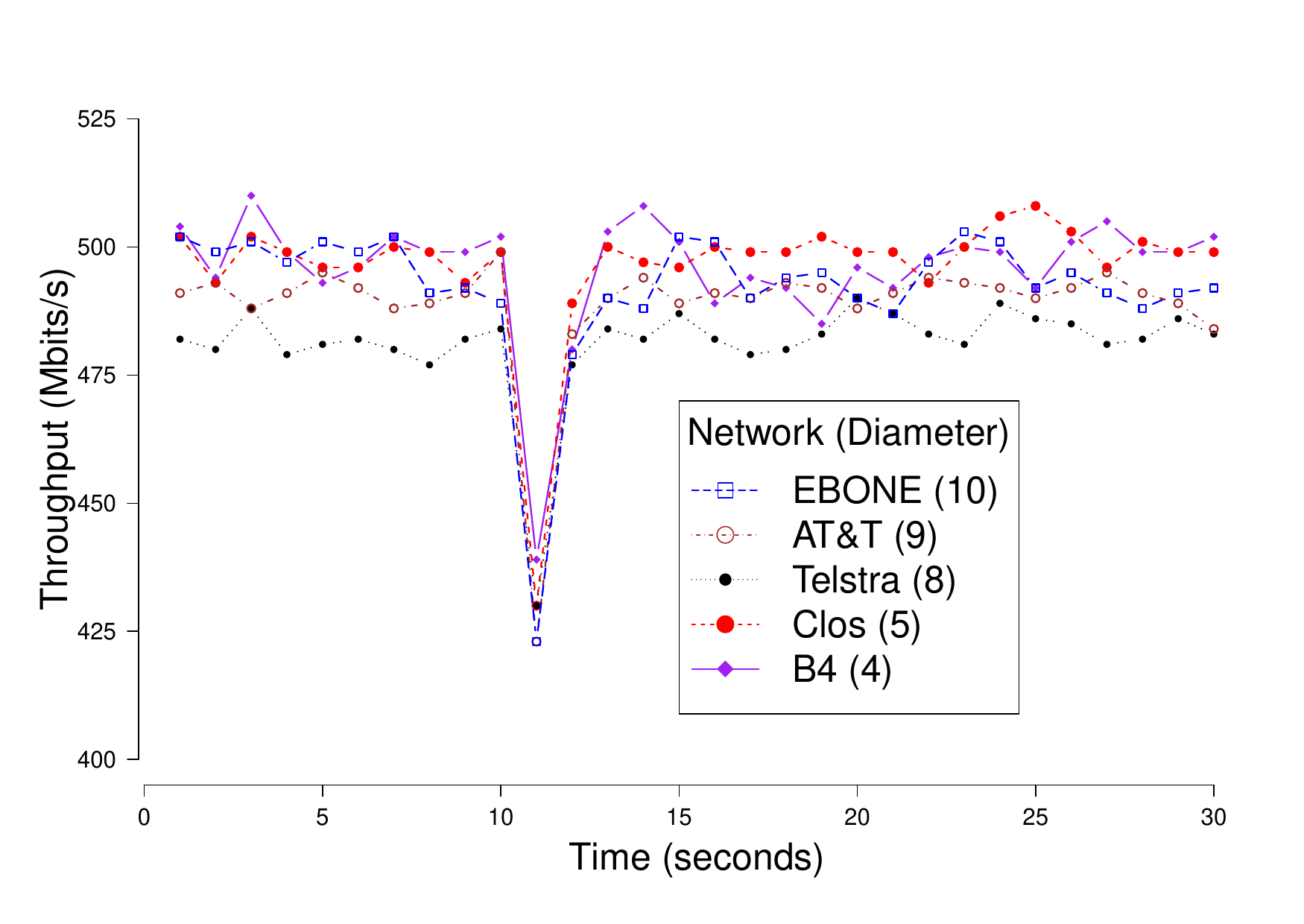}
	\caption{Throughput for the different networks using no recovery after link-failure. Here, a single link failure causes the drop after the $10^{th}$ second.}
	\label{m3}
\end{figure*}

We can see in Figure~\ref{m2} that \emph{one} throughput valley occurs indeed (to around 480 - 510 Mbits/s). For comparison, Figure~\ref{m3} shows the throughput over time without recovery that includes consistent updates~\cite{DBLP:conf/hotnets/ReitblattFRW11}: only the backup paths are used in these experiments, and no new primary paths are calculated or used after the link-failure at the 10th second. The results in figures~\ref{m2} and~\ref{m3} are very similar: there is a strong correlation between these two methods in terms of performance, see Table~\ref{t1}.
\begin{wrapfigure}{r}{0.325\textwidth}
	\centering
	\begin{\VCalgSize}
		\vspace*{-1.2em}
		\begin{tabular}{|c|c|}
			\hline
			Network & Correlation \\
			\hline\hline
			Clos & 0.94 \\
			\hline
			B4 & 0.95 \\
			\hline
			Telstra & 0.92 \\
			\hline
			EBONE & 0.96 \\
			\hline
			Exodus & 0.94 \\
			\hline
		\end{tabular}
		\caption{\label{t1}Correlation coefficient of the average throughput for the experiments in Figure~\ref{m2} and Figure~\ref{m3}.}
		\vspace*{-1.35em}
	\end{\VCalgSize}
\end{wrapfigure}


In order to gain more insights, we used Wireshark~\cite{Orebaugh2007} for investigating the number of re-transmissions (after the link-failure) for Telstra, AT\&T and EBONE network topologies. We observed an increase in the packets sent at the 11th second (after the link-failure) are re-transmissions (Figure~\ref{m5}) and ``BAD TCP'' flags (Figure~\ref{fig:badtcp}). This increase was from levels of below 1\% to levels of between 10\% and 15\% and it quickly deescalated. We have also observed a much smaller presence of out-of-order packets (Figure \ref{fig:order}). We observe that these phenomena (and the slight irregularity in the throughput) are related to TCP congestion control protocol, which is TCP Reno~\cite{DBLP:journals/ton/PadhyeFTK00}. Indeed, whenever congestion is suspected, Reno's fast recovery mechanism divides the current congestion window by half (when skipping the slow start mechanism).

\begin{figure*}[t!]
	\centering
	\includegraphics[width=\figSizeJournal\textwidth]{./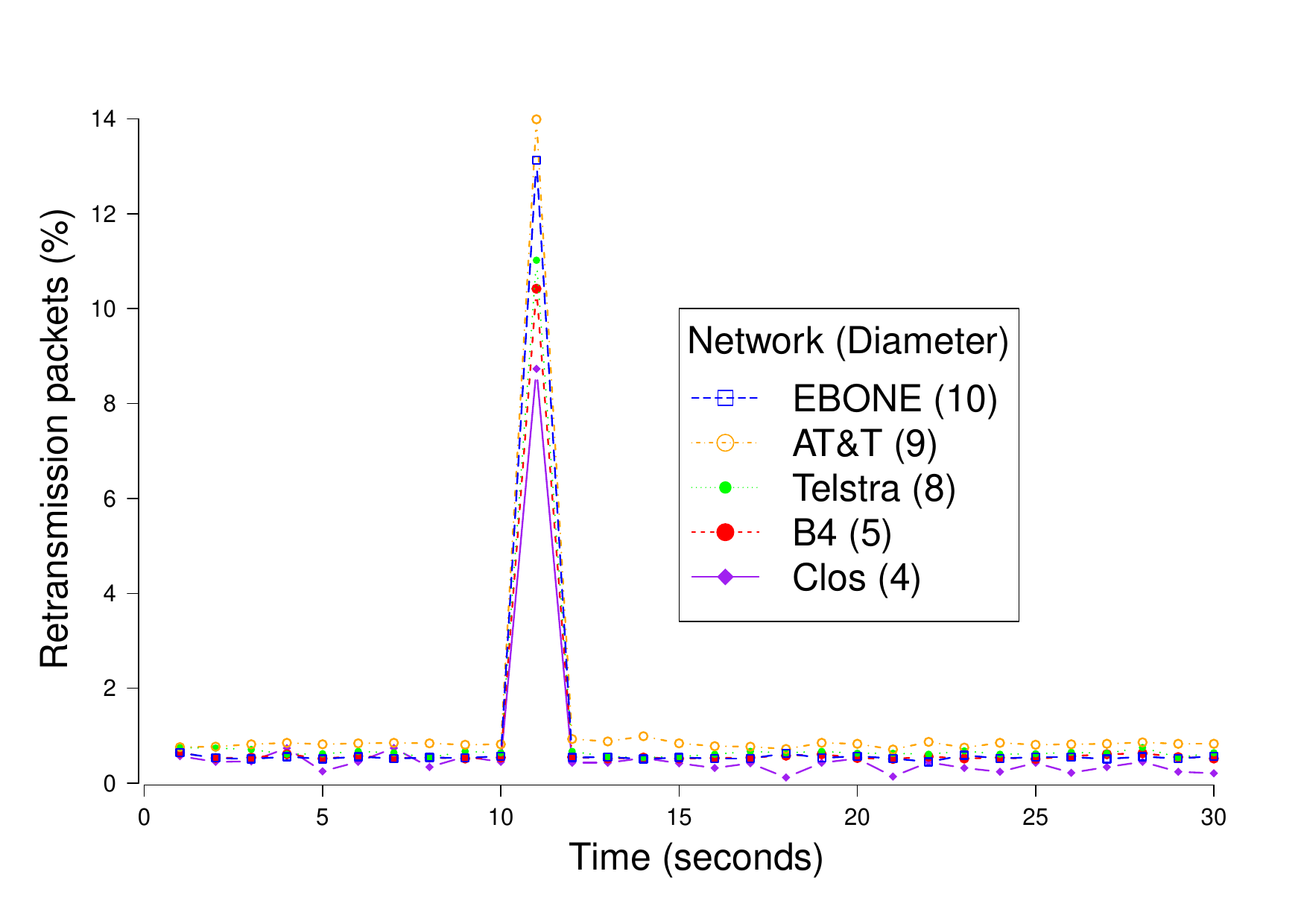}
	\caption{Retransmission percentage rate for packets sent at each second.}
	\label{m5}
\end{figure*}

\begin{figure}[t!]
	\includegraphics[width=\figSizeJournal\textwidth]{./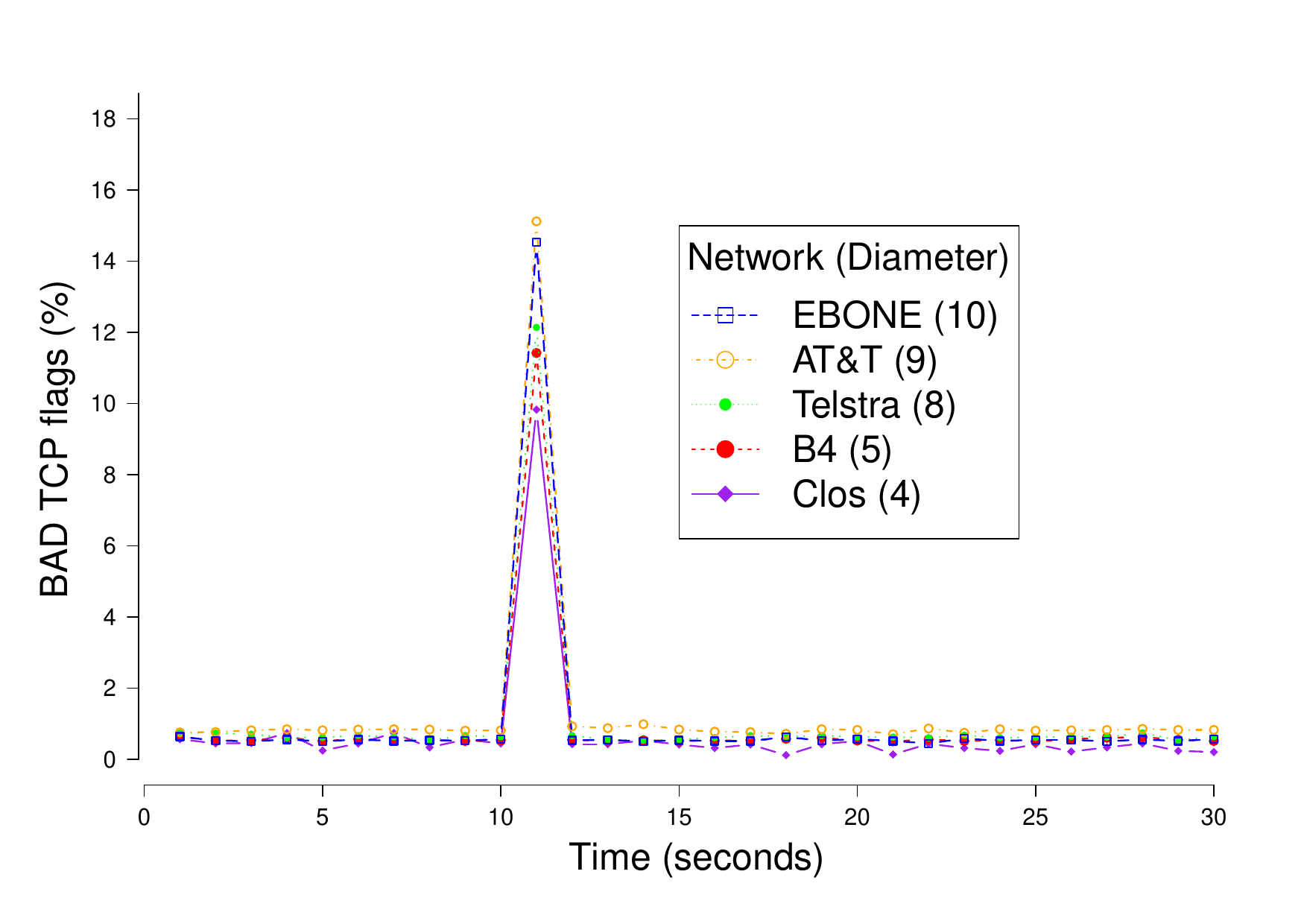}
	\centering
	\captionsetup{justification=centering,margin=2cm}
	\caption{Percentage of ``BAD TCP'' flags during the recovery period that follows a single link failure}
	\label{fig:badtcp}
\end{figure}	

\begin{figure}[t!]
		\includegraphics[width=\figSizeJournal\textwidth]{./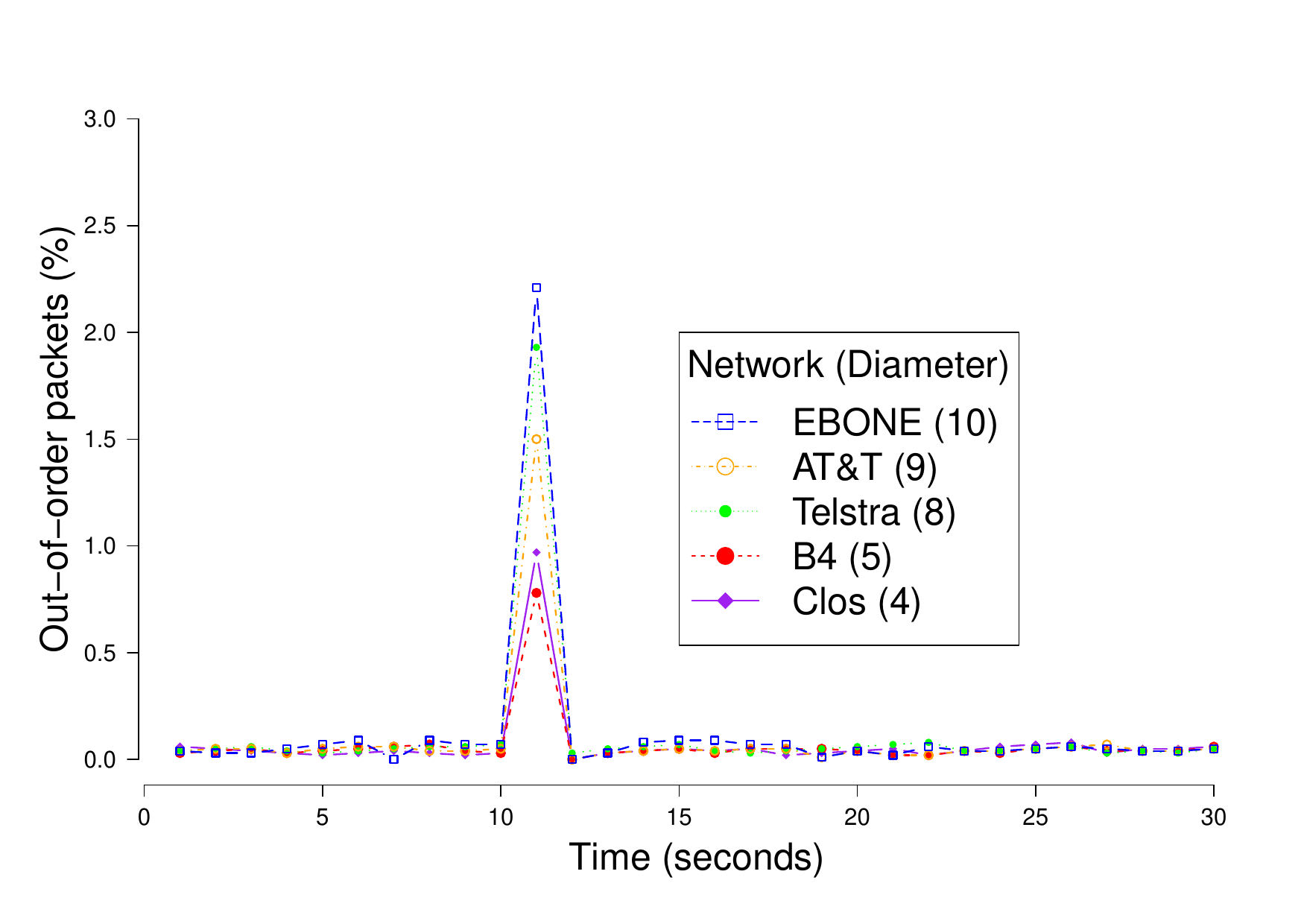}
	\centering
	\caption{Percentage of out-of-order packets during the recovery period that follows a single link failure}
	\label{fig:order}
\end{figure}

\section{Related Work}
\label{sec:relwork}
The design of distributed SDN control planes has been studied intensively in the last few years~\cite{DBLP:conf/sigcomm/BerdeGHHKKLORSP14,onix,elasticon,beehive,kandoo,Canini.INFOCOM15,hotsdn13loc}; both for performance and robustness reasons. While we are not aware of any existing solution for our problem (supporting an in-band and distributed network control), there exists interesting work on bootstrapping connectivity in an OpenFlow network~\cite{Sharma.IEEE16,DBLP:conf/aintec/KatiyarPGK15} that does not consider self-stabilization. In contrast to our paper, Sharma~et al.~\cite{Sharma.IEEE16} do not consider how to support multiple controllers nor how to establish the control network. Moreover, their approach relies on switch support for traditional STP and requires modifying DHCP on the switches. We do consider multiple controllers and establish an in-band control network in a self-stabilizing manner. Katiyar et al.~\cite{DBLP:conf/aintec/KatiyarPGK15} suggest bootstrapping a control plane of SDN networks, supporting multiple controller associations and also non-SDN switches. However, the authors do not consider fault-tolerance. We provide a very strong notion of fault-tolerance, which is self-stabilization.

To the best of our knowledge, our paper is the first to present a comprehensive model and rigorous approach to the design of in-band distributed control planes providing self-stabilizing properties. As such, our approach complements much ongoing, often more applied, related research. In particular, our control plane can be used together with and support distributed systems such as ONOS~\cite{DBLP:conf/sigcomm/BerdeGHHKKLORSP14}, ONIX~\cite{onix}, ElastiCon~\cite{elasticon}, Beehive~\cite{beehive}, Kandoo~\cite{kandoo}, STN~\cite{Canini.INFOCOM15} to name a few. Our paper also provides missing links for the interesting work by Akella and Krishnamurthy~\cite{DBLP:conf/hotnets/AkellaK14}, whose switch-to-controller and controller-to-controller communication mechanisms rely on strong primitives, such as consensus protocols, consistent snapshot and reliable flooding, which are not currently available in OpenFlow switches. \ems{Our proposal does not utilize consensus or flooding, as in~\cite{DBLP:conf/hotnets/AkellaK14}. In other words, the proposed solution requires less than that of~\cite{DBLP:conf/hotnets/AkellaK14} from the underlying system, \eg, we do not assume synchrony. Also, unlike the proposal in~\cite{DBLP:conf/hotnets/AkellaK14}, our proposal is self-stabilizing and includes both algorithmic and empirical analysis.}

We also note that our approach is not limited to a specific technology, but offers flexibilities and can be configured with additional robustness mechanisms, such as warm backups, local fast failover~\cite{fattire}, or alternatives spanning trees~\cite{hotsdn14failover, merav}.  

Our paper also contributes to the active discussion of which functionality can and should be implemented in OpenFlow. DevoFlow~\cite{devoflow} was one of the first works proposing a modification of the OpenFlow model, namely to push responsibility for most flows to switches and adding efficient statistics collection mechanisms. SmartSouth~\cite{hotnets14inband} shows that in recent OpenFlow versions, interesting network functions (such as anycast or network traversals) can readily be implemented in-band. More closely related to our paper,~\cite{ccr16sync} shows that it is possible to implement atomic read-modify-write operations on an OpenFlow switch, which can serve as a powerful synchronization and coordination primitive also for distributed control planes; however, such an atomic operation is not required in our system: a controller can claim a switch with a simple write operation. In this paper, we presented a first discussion of how to implement a strong notion of fault-tolerance, namely a self-stabilizing SDN~\cite{D2K,DBLP:journals/cacm/Dijkstra74}.

We are not the first to consider self-stabilization in the presence of faults that are not just transient faults (see~\cite{D2K}, Chapter 6 and references therein). Thus far, these self-stabilizing algorithms consider networks in which all nodes can compute and communicate. In the context of the studied problem, some nodes, i.e., the switches, can merely forward packets according to rules that are decided by other nodes, i.e., the controllers. To the best of our knowledge, we are the first to demonstrate a rigorous proof for the existence of self-stabilizing algorithms for an SDN control plane. This proof uses a number of techniques, such as the one for assuring a bounded number of resets and illegitimate rule deletions, that were not used in the context of self-stabilizing bootstrapping of communication (to the best of our knowledge).

~\\
\noindent \textbf{Bibliographic note.~~}
We reported on preliminary insights on the design of in-band control planes in two short papers on \emph{Medieval}~\cite{ccr16sync,disn16medieval}. However, Medieval is not self-stabilizing, because its design depends on the presence of non-corrupted configuration data, e.g., related to the controllers' IP addresses, which goes against the idea self-stabilization. A self-organizing version of Medieval appeared in~\cite{DBLP:conf/icdcs/CaniniSSSS17}. $\system$ provides a rigorous algorithm and proof of self-stabilization; it appeared as an extended abstract~\cite{DBLP:conf/icdcs/CaniniSSSS18} and as a technical report~\cite{DBLP:journals/corr/abs-1712-07697}.

\section{Discussion}
\label{sec:conclusion}
While the benefits of the separation between control and data planes have been studied intensively in the SDN literature, the important question of how to connect these planes has received less attention. This paper presented a first model and an algorithm, as well as a detailed analysis and proof-of-concept implementation of a self-stabilizing SDN control plane called  $\system$. 

\subsection{A $\Theta(D)$ stabilization time variation (without memory adaptiveness)}
Before concluding the paper, we would like to point out the existence of a straightforward $\Omega(D)$ lower bound to the studied task to which we match an $\bigO(D)$ upper bound. Indeed, consider the case of a single controller that needs to construct at least one flow to every switch in the network. Starting from a system state in which no switch encodes any rule and the controller is unaware of the network topology, an in-band bootstrapping of this network cannot be achieved within less than $\bigO(D)$ frames, where $D$ is the network diameter (even in the absence of any kind of failure).

We also present a variation of the proposed algorithm that provides no memory adaptiveness. In this variation, no controller ever removes rules installed by another controller (line~\ref{ln:delAllRules}). This variation of the algorithm simply relies on the memory management mechanism of the abstract switches (Section~\ref{sec:memory}) to eventually remove stale rules (that were either installed by failing controllers or appeared in the starting system state). Recall that, since the switches have sufficient memory to store the rules of all controllers in $P_C$, this mechanism never removes any rule of controller $p_i \in P_C$ after the first time that $p_i$ has refreshed its rules on that switch. Similarly, this variation of the algorithm does not remove managers (line~\ref{ln:rm}) nor performs C-resets (line~\ref{ln:resetResponses}). Instead, these sets are implemented as constant size queues and similar memory management mechanisms eventually remove stale set items. We note the existence of bounds for these queues that make sure that they have sufficient memory to store the needed non-failing managers and replies, i.e., $maxManagers$, and respectively, $3 \cdot maxRules$.

Recall the conditions of Lemma~\ref{thm:propagation} that assume no C-resets and illegitimate deletions to occur during the system execution. It implies that the system reaches a legitimate state within  $((\Delta_{comm}+\Delta_{synch})+2)D+1$ frames from the beginning of the system execution. However, the cost of memory use \emph{after stabilization} can be $N_C/n_C$ times higher than the proposed algorithm. We note that Lemma~\ref{thm:propagation}'s bound is asymptotically the same as the recovery time from benign faults (lemmas~\ref{thm:kappaLink} and~\ref{lem:nodeAdditionDeletion}). Theorem~\ref{lem:staleResetIllegalDelete} brings an upper-bound for the proposed algorithm that is  $(((\Delta_{comm}+\Delta_{synch})D+1)\cdot N_S + N_C + 1)$ times larger than the one of the above variance with respect to the period that it takes the system to reach a legitimate state. However, Theorem~\ref{lem:staleResetIllegalDelete} considers arbitrary transient faults, which are rare. Thus, the fact that the recovery time of the proposed memory adaptive solution is longer is relevant only in the presence of these rare faults.

\subsection{Possible extensions}
\label{sec:ext}
We note that the proposed algorithm can serve as the basis for more even advanced solutions. In particular, while we have deliberately focused on the more challenging in-band control scenario only, we anticipate that our approach can also be used in networks which combine both in-band and out-of-band control, e.g., depending on the network sub-regions. Another possible extension can consider the use of a self-stabilizing reconfigurable replicated state machine~\cite{DBLP:journals/jcss/DolevGMS18,DBLP:conf/cscml/DolevGMS18,DBLP:conf/netys/DolevGMS17} for coordinating the actions of the different controllers, similar to ONOS~\cite{DBLP:conf/sigcomm/BerdeGHHKKLORSP14}.

\ems{This work showed how to construct a distributed control plane by connecting every controller to any node in the network. That is, the algorithm defines rules for forwarding control packets between every controller and every node. Note that, once the proposed distributed control plane is up and running, the controllers can collectively define rules for forwarding data packets.} This, for example, can be built using self-stabilizing (Byzantine fault-tolerant) consensus and state-machine replication~\cite{DBLP:conf/netys/DolevLS13,DBLP:conf/edcc/LundstromRS21,DBLP:conf/icdcn/LundstromRS21,DBLP:journals/corr/abs-2201-12880,DBLP:conf/netys/LundstromRS20}.  

~\\
\noindent \textbf{Acknowledgments.~~} Part of this research was supported by 
Vienna Science and Technology Fund (WWTF) project, Fast and Quantitative What-if Analysis for Dependable Communication Networks (WHATIF), ICT19-045, 2020-2024.
We are grateful to Michael Tran, Ivan Tannerud and Anton Lundgren for developing the prototype. 
We are also thankful to Emelie Ekenstedt for helping to improve the presentation. 
Last but not least, we also thank the anonymous reviewers whose comments greatly helped to improve the presentation of the paper.


\bibliographystyle{elsarticle-num}
\bibliography{main}



\end{document}